\documentclass[11pt]{article}
\usepackage{amssymb,amsmath,amsthm,mathtools}
\usepackage{array,geometry}  
\usepackage{setspace}
\usepackage{bm,paralist,dsfont,nicefrac}
\usepackage{etoolbox}
\usepackage[normalem]{ulem}
\usepackage{xfrac}

\usepackage{tikz,float,booktabs,pifont,multirow,subcaption}
\usepackage{makecell}
\usetikzlibrary{fit,calc,math,shapes,shapes.multipart,decorations.text,arrows,decorations.markings,decorations.pathmorphing,shapes.geometric,positioning,decorations.pathreplacing, patterns}

\usepackage[T1]{fontenc}
\usepackage{natbib}

\usepackage[pagebackref,colorlinks]{hyperref}
\hypersetup{linkcolor=[rgb]{.5,0,0}}
\hypersetup{citecolor=[rgb]{0,0,0.5}}
\hypersetup{urlcolor=[rgb]{.7,0,.7}}
\renewcommand*{\backref}[1]{}
\renewcommand*{\backrefalt}[4]{%
    \ifcase #1 (Not cited.)%
    \or        (Cited on page~#2)%
    \else      (Cited on pages~#2)%
    \fi}

\usepackage[capitalize, nameinlink, noabbrev]{cleveref} 

\let\oldcite\cite
\renewcommand{\cite}{\oldcite*}

\AddToHook{cmd/appendix/before}{%
	\crefalias{section}{appendix}%
	\crefalias{subsection}{appendix}
}

\usepackage{xcolor}

\usepackage{thmtools,thm-restate}
\usepackage{enumitem}
\usepackage{nicematrix}

\allowdisplaybreaks

\newcommand{\APXH}{\textrm{\textup{APX-hard}}}

\newcommand{\eps}{\varepsilon}
\newcommand{\EPTAS}{\textrm{\textup{EPTAS}}}

\newcommand{\FPTAS}{\textrm{\textup{FPTAS}}}

\newcommand{\NP}{\textrm{\textup{NP}}}
\newcommand{\NPH}{\textrm{\textup{NP-hard}}}

\newcommand{\NSW}{\textrm{\textup{NSW}}}
\renewcommand{\O}{\mathcal{O}}

\newcommand{\poly}{\textrm{\textup{poly}}}

\newcommand{\PTAS}{\textrm{\textup{PTAS}}}

\definecolor{DarkGreen}{rgb}{0.1,0.5,0.1}

\newcommand{\tail}{\texttt{tail}}

\newcommand{\V}{\mathcal{V}}
\newcommand{\VV}{\mathrm{V}}
\newcommand{\BB}{\mathrm{B}}
\newcommand{\W}{\mathcal{W}}
\newcommand{\WH}{\mathcal{W}^{h}}
\newcommand{\WNash}{\W^\texttt{\textup{Nash}}}

\newcommand{\abs}[1]{\lvert #1 \rvert}

\usepackage[linesnumbered,ruled,vlined]{algorithm2e}

\newtheorem{theorem}{Theorem}
\newtheorem*{theorem*}{Theorem}

\newtheorem{definition}{Definition}
\newtheorem*{definition*}{Definition}

\newtheorem{lemma}{Lemma}
\newtheorem*{lemma*}{Lemma}

\newtheorem{claim}{Claim}
\newtheorem*{claim*}{Claim}

\newtheorem*{fact*}{Fact}

\newtheorem*{observation*}{Observation}

\newtheorem*{conjecture*}{Conjecture}

\newtheorem{corollary}{Corollary}
\newtheorem*{corollary*}{Corollary}

\newtheorem{remark}{Remark}
\newtheorem*{remark*}{Remark}

\newtheorem{proposition}{Proposition}
\newtheorem*{proposition*}{Proposition}

\newtheorem{example}{Example}
\newtheorem*{example*}{Example}

\Crefname{algocf}{Algorithm}{Algorithms}

\usepackage{bbm}

\DeclareMathOperator*{\sign}{sign}

\title{Easier, but Not Easy: \\Nash Welfare under Lexicographic Valuations}

\author{
\begin{tabular}{m{0.12\linewidth}m{0.12\linewidth}m{0.12\linewidth}m{0.12\linewidth}m{0.12\linewidth}m{0.12\linewidth}}
\multicolumn{2}{c}{\textbf{Soumil Aggarwal}} & \multicolumn{2}{c}{\textbf{Rohit Vaish}}  & \multicolumn{2}{c}{\textbf{Jatin Yadav}} \\
         \multicolumn{2}{c}{\small{Quantbox Research}} & \multicolumn{2}{c}{\small{IIT Delhi}} & \multicolumn{2}{c}{\small{IIT Delhi}}\\
         \multicolumn{2}{c}{\href{mailto:soumilaggarwal95@gmail.com}{\small{\texttt{soumilaggarwal95@gmail.com}}}} & \multicolumn{2}{c}{\href{mailto:rvaish@iitd.ac.in}{\small{\texttt{rvaish@iitd.ac.in}}}} & \multicolumn{2}{c}{\href{mailto:jatin.yadav@cse.iitd.ac.in}{\small{\texttt{jatin.yadav@cse.iitd.ac.in}}}}\\
  \end{tabular}
}

\date{}

\usepackage{colortbl}
\colorlet{myred}{red!25}
\colorlet{myblue}{blue!25}
\colorlet{mygreen}{green!25}

\usepackage{amsmath,amssymb,amsthm,mathtools}
\usepackage{enumitem}
\usepackage{xcolor}
\usepackage{tikz}
\usetikzlibrary{arrows.meta,calc,decorations.pathreplacing,positioning}

\newcommand{\val}{\operatorname{val}}

\definecolor{proofblue}{RGB}{54,105,170}
\definecolor{proofgreen}{RGB}{70,135,95}
\definecolor{prooforange}{RGB}{204,126,45}
\definecolor{proofred}{RGB}{180,75,75}

\tikzset{
        proofstage/.style={
                draw=black!55,
                rounded corners=2pt,
                fill=black!4,
                align=center,
                font=\scriptsize,
                inner sep=4pt,
                minimum height=1.05cm
        },
        proofarrow/.style={-{Latex[length=2.2mm]}, thick, draw=black!70},
        proofblock/.style={
                draw=black!55,
                rounded corners=1pt,
                align=center,
                font=\scriptsize,
                inner sep=3pt,
                minimum height=.8cm
        },
        proofcell/.style={
                draw=black!55,
                rounded corners=.5pt,
                align=center,
                font=\tiny,
                inner sep=1pt,
                minimum width=.72cm,
                minimum height=.46cm
        },
        proofcopy/.style={
                draw=prooforange!85!black,
                fill=prooforange!22,
                rounded corners=1pt,
                font=\tiny,
                inner sep=1.2pt
        }
}

\begin{document}

\maketitle
\thispagestyle{empty}
\addtocounter{page}{-1}

\begin{abstract}
Maximizing Nash welfare over indivisible goods is a central problem in resource allocation. For additive valuations, the best-known approximation factor is roughly $e^{-1/e} \approx 0.692$, and the problem is APX-hard. We study Nash welfare maximization under \emph{lexicographic valuations}, equivalently superincreasing additive valuations, where every good is worth more than the total value of all lower-ranked goods. This large-gap structure makes preferences almost ordinal, which might suggest that the problem becomes easy. We show, however, that the picture is more nuanced: although lexicographic valuations enable stronger algorithmic guarantees, they retain significant computational hardness.

Our first main result is a $\left( \sfrac{1}{\sqrt{2}}-\eps \right) \approx (0.707-\eps)$-approximation algorithm for weighted Nash welfare under general lexicographic valuations, improving over the guarantee inherited from additive valuations. The algorithm rounds the configuration LP for Nash welfare, but its analysis relies on structural properties specific to the lexicographic domain, including domination phenomena induced by large value gaps and a careful grouping of agents. We complement this algorithm by showing that the weighted configuration LP has an integrality gap of exactly $\sqrt{2}$ for lexicographic valuations, confirming that the analysis is tight for this relaxation.

Our second main contribution is an exact algorithmic framework for structured lexicographic instances. We introduce a \emph{domination-based branch-and-prune method} that allocates goods in priority order while discarding provably suboptimal branches. Although the naive search tree is exponential, we prove a mutual-exclusion property between sibling subtrees and use a matrix-based leaf-counting argument to bound the pruned recursion tree by a polynomial for every constant number of agents. This yields polynomial-time exact algorithms for ordered lexicographic valuations and for doubling lexicographic valuations. In the ordered case, we also obtain an efficient polynomial-time approximation scheme (EPTAS) for arbitrarily many agents.

Finally, we show that large gaps do not eliminate hardness, as Nash welfare maximization is NP-hard even for ordered lexicographic valuations, and it is NP-hard to obtain a $0.9996$-approximation even for doubling lexicographic valuations. Thus, lexicographic valuations make Nash welfare maximization easier, but not easy: they admit tighter approximation and exact algorithms in important cases, yet still require intricate techniques and preserve some of the hardness of the general additive setting.\end{abstract}

\clearpage

\section{Introduction}
\label{sec:Introduction}

Maximizing Nash welfare is a central algorithmic problem in the allocation of indivisible goods. Given a set of agents and goods, the goal is to allocate the goods so as to maximize the geometric mean of the agents' utilities~\citep*{N50bargaining,KN79nash}. This objective strikes a remarkable balance between efficiency and fairness: it rewards allocations that create high total value, but penalizes imbalance across agents. For divisible goods, a Nash-optimal allocation can be computed via the Eisenberg-Gale convex program and coincides with a competitive equilibrium allocation~\citep*{EG59consensus,V74equity}. For indivisible goods, however, the problem is much more challenging. Even for additive valuations, maximizing Nash welfare is APX-hard~\citep*{Lee17APX}, and the best known polynomial-time approximation factor is nearly $e^{-1/e} \approx 0.692$~\citep*{V26better}.

In this paper, we study Nash welfare maximization under \emph{lexicographic valuations}. These are additive valuations with a strong ``large gap'' structure, where each good is worth more than the total value of all lower-ranked goods. Equivalently, they are superincreasing additive valuations. Thus, for any agent, the comparison between two bundles is determined by the most preferred good on which the bundles differ. In this sense, lexicographic valuations are close to ordinal preferences, while still retaining the cardinal structure needed to define and optimize Nash welfare.

At first glance, this structure might suggest that Nash welfare maximization should become easy. After all, arbitrary cardinal tradeoffs---one of the main sources of difficulty in additive valuations---are largely absent. Lexicographic valuations impose a sharp priority ordering on the goods, as a high-ranked good cannot be compensated for by any collection of lower-ranked goods. This raises a natural question:

\begin{quote}
\emph{Does the large-gap structure of lexicographic valuations make Nash welfare maximization tractable?}
\end{quote}

Our work shows that lexicographic valuations make Nash welfare computation \emph{easier, but not easy}. Specifically, we demonstrate that lexicographic valuations admit stronger algorithms than are currently known for general additive valuations. At the same time, the problem remains computationally hard even under severe restrictions. Moreover, obtaining the improved algorithms requires intricate techniques that exploit the large-gap structure in a nontrivial way: configuration-LP rounding, domination arguments, careful grouping of agents, and a branch-and-prune framework with a polynomially bounded recursion tree.

\subsection{Our results}

We provide three types of results: an improved approximation algorithm for general lexicographic valuations, exact and approximation algorithms for structured subclasses, and hardness results showing that the lexicographic domain retains significant computational complexity. \Cref{tab:Summary}
summarizes our results.

\paragraph{A $\left( \sfrac{1}{\sqrt{2}}-\eps \right)$-approximation for general lexicographic valuations (\Cref{sec:Sq-Root-2-Approx-Algo}).}
Our first main result is a polynomial-time $\left( \sfrac{1}{\sqrt{2}}-\eps \right) \approx (0.707 - \eps)$-approximation algorithm for Nash welfare under general lexicographic valuations. This improves over the nearly $e^{-1/e}$ ($\approx 0.692$)-approximation that follows from the best known algorithms for the broader class of additive valuations.

Our algorithm is based on rounding the \emph{configuration LP} for Nash welfare. Configuration LPs have been a central tool in recent approximation algorithms for weighted Nash welfare and related welfare objectives under additive and more general valuations~\citep*{FL25note,DLR+24constant,FHL+25constant,BFH+26nash,FHL26new,V26better}. In our setting, however, the analysis departs from the additive case by using structural features specific to lexicographic valuations. The large gaps in values induce strong \emph{domination} relations: in many situations, assigning a good to one agent can be certified to be better than assigning it to another, independently of how lower-ranked goods are allocated. Our rounding algorithm combines these domination properties with a careful grouping of agents to control the loss in Nash welfare.

We complement the algorithm with a matching lower bound for the relaxation: the configuration LP for weighted Nash welfare has integrality gap exactly $\sqrt{2}$ for lexicographic valuations. Thus, our $\left( \sfrac{1}{\sqrt{2}}-\eps \right)$ guarantee is tight for this LP. This gives a complete picture of what can be achieved through this relaxation on the lexicographic domain.

\begin{table*}[t]
    \centering
    \small
    \begin{center}
    \begin{tabular}{llll}
    \toprule
    Domain & Constant $n$ & General $n$: approx. & General $n$: hardness \\
    \midrule
    General lexicographic & open & $(\sfrac{1}{\sqrt{2}}-\varepsilon)$ (Th.~\ref{thm:Root-2-Approximation}) & APX-hard via doubling subclass\\
    Ordered lexicographic & Poly-time (Th.~\ref{thm:Ordered_Constant_n_Weighted}) & EPTAS (Th.~\ref{thm:EPTAS}) & NP-hard (Th.~\ref{thm:NP-hard_Ordered_Nash})\\
    Doubling lexicographic & Poly-time (Th.~\ref{thm:Doubling_Constant_n}) & $(\sfrac{1}{\sqrt{2}}-\varepsilon)$ (Th.~\ref{thm:Root-2-Approximation}) & APX-hard within $0.9996$ (Th.~\ref{thm:APX-hard_Nash_General_Lexicographic})\\
    \bottomrule
    \end{tabular}
    \end{center}
    \caption{Summary of our computational results for Nash welfare. Each entry of the table denotes the computational complexity of finding an exact or approximately Nash optimal allocation when the number of agents $n$ is constant or general (specified by the columns) and when the given instance has ordered, doubling, or general lexicographic valuations (specified by the rows). All our algorithmic results, except for the EPTAS, can be generalized to weighted Nash welfare.}
    \label{tab:Summary}
\end{table*}

\paragraph{Exact and approximation algorithms for structured subclasses~(\Cref{sec:Results_Subclasses}).}

Our second main contribution is an exact algorithmic framework for structured lexicographic instances. We introduce a \emph{domination-based branch-and-prune} method that recursively allocates goods in priority order. At each step, the algorithm considers possible recipients for the current good, but prunes any branch in which assigning the good to a particular agent is provably dominated by assigning it to another agent.

Naively, such a recursion has exponential size. The key technical challenge is to prove that the pruning is strong enough to make the search tree polynomially bounded. We do this through a \emph{mutual-exclusion} argument. Informally, if two sibling branches allocate a good to two different agents, then the future branching possibilities in the two subtrees cannot overlap too much: choices available in one subtree are excluded in the other. We encode this phenomenon through a binary matrix whose rows correspond to goods and columns to agents, and use a matrix-based leaf-counting argument to bound the number of leaves in the recursion tree.

This framework yields polynomial-time exact algorithms for every constant number of agents in two important settings. The first is \emph{ordered lexicographic valuations}, where all agents agree on the ranking of goods, although they may assign different numerical values. The second is \emph{doubling valuations}, where agents may have different rankings, but for each agent every good is worth at least twice the next lower-ranked good. In the ordered case, we also obtain an efficient polynomial-time approximation scheme (EPTAS) for an arbitrary number of agents.

Our exact algorithms also extend, in a natural comparison-oracle model, beyond Nash welfare to \emph{separable concave welfare} objectives. In particular, the ordered-valuation algorithm extends to weighted $p$-mean welfare for all $p<1$, while the doubling-valuation algorithm extends to weighted Nash welfare and to weighted $p$-mean welfare for $p\in(0,1)$; see Appendix~\ref{appendix:Extensions_p_Mean} for the formal statement.

\paragraph{Hardness despite large gaps (\Cref{sec:Hardness_Results}).}

Finally, we show that the lexicographic structure does not eliminate hardness. Specifically, we show that Nash welfare maximization is NP-hard even for ordered lexicographic valuations. Thus, even when all agents agree on the priority order of the goods, exact optimization remains computationally intractable for an arbitrary number of agents. This result also shows that our aforementioned branch-and-prune algorithm for a constant number of agents cannot be extended to an arbitrary number of agents unless P = NP.

We further prove that the problem is APX-hard even for doubling lexicographic valuations; specifically, it is NP-hard to obtain a $0.9996$-approximation. We believe this result provides the clearest sense in which lexicographic valuations are ``not easy.'' Even when every agent's values decrease by a factor of at least two from one rank to the next, the problem retains constant-factor hardness.

Together, these results provide a detailed computational landscape for Nash welfare maximization under lexicographic valuations. The domain is structured enough to allow improved approximation and exact algorithms in important cases, but rich enough to preserve nontrivial hardness.

\subsection{Why lexicographic valuations?}

Lexicographic valuations are a natural large-gap regime of additive preferences. They model settings in which goods are evaluated according to priority features that are not meaningfully substitutable. For example, a decision maker may prefer any good satisfying a high-priority criterion to all goods that only satisfy lower-priority criteria. Such preferences arise in priority-based choice, matching markets, machine learning models of decision lists, and behavioral models such as elimination by aspects \citep*{Hausner54,T72,GG96reasoning}. 

From an algorithmic perspective, lexicographic valuations are appealing for a different reason: they form a clear test case for understanding which aspects of additive Nash welfare are responsible for computational difficulty. General additive valuations allow arbitrary tradeoffs among goods, while lexicographic valuations eliminate most of these tradeoffs by imposing a superincreasing structure. If the challenge posed by Nash welfare was solely due to arbitrary cardinal interactions, then this domain should be tractable. Our hardness results, however, indicate that this is not the case.

At the same time, the large-gap structure of lexicographic valuations provides new algorithmic advantages. The domination properties used in both our LP rounding and exact algorithms appear to lack a direct counterpart in the more general setting of additive valuations. Thus, the lexicographic domain presents a more refined picture: its strong priority structure does not trivialize Nash welfare; instead, it introduces enough order to enable sharper algorithms.

The subclasses of \emph{ordered lexicographic} and \emph{doubling} valuations are also practically well-motivated. Ordered preferences capture settings with a public ranking of resources but heterogeneous intensities across agents. Such preferences naturally arise in scenarios such as cloud machines ordered by speed/reliability, ad slots ordered by visibility, interview slots ordered by desirability, and schools/courses/projects with a common prestige or priority ordering. On the other hand, doubling valuations capture agent-specific priority lists with large separations between consecutive ranks.

\subsection{Overview of Our Techniques}
\label{sec:tech_overview}

We now give a high-level overview of the main ideas behind our algorithms and hardness results.

\paragraph{Low-valued goods are important.}

A natural starting point for analyzing Nash welfare with lexicographic valuations is the setting of \emph{identical valuations}, where all agents share the same valuation function. In this case, the large-gap property of lexicographic valuations implies that every Nash optimal allocation must assign the $(n-1)$ most preferred goods to $(n-1)$ distinct agents, while the remaining goods are allocated to the last agent (\Cref{prop:identical-lexicographic}). In particular, this means that the entire ``tail'' of low-valued goods is assigned to a single agent.

Given the ``almost ordinal'' nature of lexicographic valuations, one might expect the same idea to work for the broader class of \emph{ordered valuations}, where agents have the same preference order over goods but assign possibly different values. However, the following example shows that this is not the case.

\begin{table}[H]
    \centering
    \begin{tabular}{c|cccc}
    & $g_1$ & $g_2$ & $g_3$ & $g_4$ \\
    \hline
    $a_1$ & 8 & 4 & 2 & 1 \\
    $a_2$ & 11 & 7 & 2 & 1
    \label{tab:structural_failure}
    \end{tabular}
\end{table}

The Nash optimal allocation in this example assigns $\{g_1,g_4\}$ to agent $a_1$ and $\{g_2,g_3\}$ to agent $a_2$. This example indicates that the numerical aspect of lexicographic valuations should not be ignored when optimizing for Nash welfare.

\paragraph{Bounded lookahead.}

It is known that any Nash optimal allocation is \emph{envy-free up to one good} or EF1~\citep*{CKM+19unreasonable}. This means that for every pair of agents $i$ and $k$, if agent $i$ values agent $k$'s bundle $A_k$ more than its own bundle $A_i$ (in other words, agent $i$ envies agent $k$), then there is some good $g \in A_k$ in the envied bundle whose hypothetical removal restores agent $i$'s preference for its own bundle, i.e., $v_i(A_i) \geq v_i(A_k \setminus \{g\})$.

For lexicographic valuations, the EF1 property is equivalent to requiring that each agent receives at least one good from its top $n$ most preferred goods. Additionally, at least half of each agent's utility is derived from its most preferred good in its bundle. Thus, a natural approach for constructing a Nash optimal allocation is to start with a Nash optimal \emph{matching} of the top $n$ goods to the $n$ agents, and then extend this partial allocation suitably into a complete allocation. This approach readily yields a $\sfrac{1}{2}$-approximation algorithm. 

However, even for ordered valuations, extending a Nash matching into a complete allocation cannot give a better-than-$\sfrac{1}{\sqrt{2}}$ approximation to the optimal Nash welfare. Consider the following example, where $k$ is a large constant and $\varepsilon$ is a small positive constant.

\begin{table}[h]
    \centering
    \begin{tabular}{c|ccc}
    & $g_1$ & $g_2$ & $g_3$ \\
    \hline
    $a_1$ & $k$ & $1+\varepsilon$ & $1$\\
    $a_2$ & $k$ & $1 + 2\varepsilon$ & $\varepsilon$
    \end{tabular}
    \label{tab:matching_fails_root2}
\end{table}

The Nash optimal matching allocates $g_1$ to $a_1$ and $g_2$ to $a_2$. Irrespective of who receives $g_3$, the Nash welfare of the returned allocation is asymptotically $\sqrt{k}$. However, the Nash optimal allocation allocates $g_1$ to $a_2$ and $g_2, g_3$ to $a_1$, and has Nash welfare asymptotically $\sqrt{2k}$.

The example above suggests that finding a Nash optimal allocation for an ordered instance cannot be reduced to just computing a Nash optimal allocation of the top $n$ goods. Instead, a \emph{lookahead} is necessary. For ordered lexicographic valuations, a constant amount of lookahead suffices to obtain a $(1+\varepsilon)$-approximation---specifically, by considering the top $(n+\ell)$ goods, where $\ell = \O(\log \nicefrac{1}{\varepsilon})$, we obtain a polynomial-time approximation scheme (\PTAS{}). A refined analysis of the approximation ratio and runtime allows us to convert this into an efficient polynomial-time approximation scheme~($\EPTAS{}$) (\Cref{thm:EPTAS}). However, any constant-sized lookahead is unlikely to yield an exact algorithm, as maximizing Nash welfare is \NPH{} even for ordered lexicographic valuations~(\Cref{thm:NP-hard_Ordered_Nash}).

\paragraph{Domination-based branching and pruning.} Although lookahead is a useful idea for designing approximation schemes, it still appears limited in leading us toward an exact algorithm. Moreover, the problem of Nash maximization with ordered lexicographic valuations is \NPH{} for an arbitrary number of agents. So, in search of an exact algorithm for ordered instances, it seems reasonable to focus on a constant number of agents. We discussed that the Nash optimal allocation for the overall instance, when restricted to the top $n$ goods, must be a matching, but the restriction may not maximize Nash welfare among all matchings. To bypass this issue, we can iterate over all $n!$~(constant) matchings of the top $n$ goods. 

For the remaining goods, a natural greedy heuristic is to allocate them in decreasing order of preference to the agent with the highest marginal gain. This approach, however, fails to return an optimal allocation, as shown by the following example.

\begin{table}[H]
    \centering
    \begin{tabular}{c|cccc}
    & $g_1$ & $g_2$ & $g_3$ & $g_4$ \\
    \hline
    $a_1$ & 10 & 4 & 2 & $2 -\varepsilon$ \\
    $a_2$ & 15 & 10 & $2 - \varepsilon$ & $\varepsilon$
    \vspace{0.1in}
    \end{tabular}
    \label{tab:greedy_failure}
\end{table}

The optimal allocation here is $A_1 = \{g_1, g_4\}, A_2 = \{g_2, g_3\}$ with Nash product $(12-\varepsilon)^2$. If we fix $g_1 \to a_1$ and $g_2 \to a_2$, the greedy approach allocates both $g_3$ and $g_4$ to $a_1$ based on marginals, resulting in a strictly lower Nash product of $140-10\varepsilon$.

The greedy algorithm offers only \emph{one} branching option per good---specifically, the agent with the best marginal utility---whereas a brute force algorithm explores $n$ options. A significant technical contribution of our work is an algorithmic framework based on a recursive algorithm that strikes a balance between these two extremes. This framework allocates goods in decreasing order of preference, but employs strict pruning conditions to eliminate most of the $n$ branches. The pruning is based on a \emph{domination} condition, which is the key structural insight underlying many of our algorithmic as well as hardness results. 

Suppose a partial allocation has already assigned some high-priority goods, and we are deciding where to allocate the next good. We say that an agent $k$ \emph{dominates} an agent $i$ with respect to good $g$ if, no matter how the remaining lower-ranked goods are allocated, assigning good $g$ to $k$ leads to a better objective value than assigning it to $i$. In that case, no optimal completion assigns the good to $i$, and the corresponding branch can be safely pruned. The resulting algorithm recursively allocates goods in priority order and keeps only undominated branches.

A major challenge is to establish a polynomial bound on the runtime of this recursive algorithm. This is non-trivial because even a branching factor of 2 can result in an exponential number of leaves. Our key structural insight is that the recursion tree satisfies a \emph{mutual-exclusion} property. If two sibling branches make different choices for a good $g$ (i.e., good $g$ is given to agent $i$ in one branch and agent $k$ in the other), then future goods cannot continue branching in the same way in both subtrees. That is, for any subsequent good $g'$, either the subtree rooted at the first sibling prunes the allocation of $g'$ to agent $i$, or the subtree rooted at the second sibling prunes the allocation of $g'$ to agent $k$. We formalize this using a binary matrix that records which agents can receive which goods within a subtree. A leaf-counting lemma, based on pairwise dot products of the matrix columns, then gives a polynomial bound on the number of leaves for a constant number of agents (\Cref{lem:leaf_bound}).

Our algorithmic framework is robust enough to handle ordered lexicographic valuations and doubling valuations. In the ordered case, all agents agree on the priority order, so the recursion follows this common order. In the doubling case, agents may disagree on rankings; nevertheless, the factor-two separation is strong enough to identify goods whose allocation can be branched on while preserving the mutual-exclusion structure (\Cref{thm:Doubling_Constant_n}).

\paragraph{Configuration-LP rounding.}

The starting point for our approximation algorithm is the configuration LP for Nash welfare (\Cref{thm:Root-2-Approximation}). At a high level, the LP assigns fractional bundles to agents, and its value upper-bounds the optimum integral Nash welfare. For additive valuations, configuration-LP rounding has been used in recent work to obtain the best known approximations for Nash welfare and its variants. Our goal is to exploit the stronger structure of lexicographic valuations to obtain a better guarantee.

The configuration-LP rounding proof can be viewed as a comparison between two allocations in a copied auxiliary instance. The fractional LP solution induces a valid allocation $X^V$, while the rounded solution induces a block-balanced allocation $X^B$ in which each agent receives at most one good from each block of its priority order. For general additive valuations, the loss in this comparison is controlled through an EF1 argument~\citep*{FL25note}. For lexicographic valuations, we prove a sharper comparison: after a sequence of transformations that can only increase the ratio $\NSW(X^V)/\NSW(X^B)$, the agents can be partitioned into groups of size at most three, each of which loses a factor at most $\sqrt 2$ in geometric mean.  This is where the large-gap property is crucial: it allows us to certify transfers and groupings by comparing only the decisive, high-ranked goods.

\paragraph{Hardness.}

Our hardness results show that the tractability obtained above is close to the limit of what one can hope for. For ordered lexicographic valuations, we reduce from the Perfect $3$-Dimensional Matching problem and use carefully chosen superincreasing values to force any high-welfare allocation to encode a feasible perfect $3$D matching (\Cref{thm:NP-hard_Ordered_Nash}). For doubling valuations, we construct a gap-preserving reduction from $4$-Dimensional Matching, showing that even very strong separation between consecutive values does not preclude APX-hardness (\Cref{thm:APX-hard_Nash_General_Lexicographic}). These reductions demonstrate that the large-gap domain retains enough combinatorial structure to simulate hard allocation choices.

\subsection{Related Work}

There is a rich literature on the computational aspects of maximizing Nash welfare. 
Specifically, for indivisible goods, maximizing Nash welfare is known to be \APXH{}~\citep*{NNR+14computational,Lee17APX,GHM24satiation,V26improved}. Specifically, it is \NPH{} to maximize Nash welfare to within a factor of $\sqrt{\sfrac{7}{8}} \approx 0.935$ under additive valuations~\citep*{GHM24satiation}. Additionally, assuming the Unique Games Conjecture, it is \NPH{} to maximize Nash welfare to within a factor of $\sqrt[3]{\sfrac{65}{81}} \approx 0.929$~\citep*{V26improved}.

\paragraph{Algorithms for Nash welfare.} A long line of work has explored the development of approximation algorithms for Nash welfare~\citep*{CDG+17convex,CG18approximating,BKV18Finding,GHM+22tractable,GKK23approximating,FL25note,FHL+25constant,ACH+26maximizing,GHL+26approximating,V26better}. The best-known approximations for weighted Nash welfare stand at $e^{-1/e}\approx 0.692$ for additive valuations~\citep*{FL25note} and $(0.281-\varepsilon)$ for submodular valuations~\citep*{BFH+26nash}. In some restricted settings such as matroid-rank valuations~\citep*{VZ25general}, a weighted Nash optimal allocation can be computed in polynomial time. For additive valuations and unweighted Nash welfare, \citet{V26better} recently improved upon the $e^{-1/e}$ factor by a small constant, estimated by the author to be approximately $10^{-80}$.

\paragraph{Lexicographic Valuations.}

As mentioned previously, lexicographic valuations are a subdomain of additive valuations. \citet*{DS15maximizing} showed that maximizing Nash welfare is \NPH{} for lexicographic instances where the valuations of each agent are consecutive powers of $2$. \citet*{BBL+17positional} showed \NPH{}ness of maximizing the egalitarian welfare under lexicographic preferences. Several studies have explored the existence and computation of approximately envy-free and Pareto optimal allocations for lexicographic valuations~\citep*{HSV+21fair,EPS22fairly,HSV+23fairly,HAW23fairly,HAW24almost,ALM+24almost}. 

\paragraph{Configuration LP rounding.} Configuration LPs have proven to be a versatile tool for Nash welfare and related welfare objectives.  The $e^{-1/e}$ approximation for weighted additive valuations of \citet*{FL25note} is obtained by solving a weighted configuration LP approximately and then applying the rounding technique of~\citet*{3113606.3113856}. Subsequent work has used and extended LP-based rounding for Nash welfare and its variants under additive, submodular, and other valuation classes~\citep*{DLR+24constant,FL25note,FHL+25constant,BFH+26nash,V26better}. Our approximation algorithm uses the same rounding template as \citet*{FL25note}, but replaces the additive-domain EF1 loss by a sharper comparison theorem for lexicographic valuations.  

In \Cref{appendix:Additional_Related_Work}, we discuss additional related work on the class of $p$-mean welfare measures.

\subsection{Organization}

\Cref{sec:Preliminaries} introduces the model, lexicographic and doubling valuations, and Nash welfare. \Cref{sec:Sq-Root-2-Approx-Algo} presents the $\left( \sfrac{1}{\sqrt{2}}-\eps \right)$-approximation algorithm and the matching integrality-gap construction. \Cref{sec:Results_Subclasses} develops the domination-based branch-and-prune framework and applies it to ordered and doubling valuations. Additionally, this section discusses the EPTAS for ordered lexicographic valuations. \Cref{sec:Hardness_Results} proves the NP-hardness and APX-hardness results.

\section{Preliminaries}
\label{sec:Preliminaries}

Given any $r \in \mathbb{N}$, let $[r] \coloneqq \{1,2,\dots,r\}$.

\paragraph{Problem instance.} An \emph{instance} of our problem is specified by a tuple $\langle N, M, \V \rangle$, where $N \coloneqq [n]$ is the set of $n$ \emph{agents}, $M \coloneqq \{g_1,g_2,\dots,g_m\}$ is the set of $m$ indivisible \emph{goods}, and $\V \, \coloneqq (v_1, \dots, v_n)$ is a \emph{valuation profile} that specifies the preferences of each agent $i \in N$ in terms of its valuation function $v_i: 2^M \rightarrow \mathbb{Q}_{\ge 0}$. We will assume that each agent values the empty set at~$0$, i.e., for every $i \in N$, $v_i(\emptyset) = 0$. For simplicity, we will write $v_i(g_j)$ in place of $v_i(\{g_j\})$. We say that a valuation function $v_i$ is \emph{additive} if, for every agent $i \in N$ and for every bundle $S \subseteq M$, we have $v_i(S) = \sum_{g \in S} v_i(g)$.

\paragraph{Allocation and utility.} A \emph{bundle} refers to any subset $S \subseteq M$ of the set of goods. A \emph{partial allocation} $A = (A_1, \dots, A_n)$ is an ordered subpartition of the set of goods $M$ into $n$ sets, where $A_i\subseteq M$ is the bundle assigned to agent $i$ and for any $i,k \in N$, we have $A_i \cap A_k = \emptyset$. A partial allocation $A$ is said to be \emph{complete} if $\bigcup_{i \in N} A_i = M$. A complete allocation $B$ is called a ``completion'' of a partial allocation $A$, if $A_i \subseteq B_i$ for all $i \in N$. For ease of exposition, we will use the term `allocation' to refer to a necessarily complete allocation and write `partial allocation' otherwise. Given a partial allocation $A = (A_1, \dots, A_n)$, the \emph{utility} of an agent $i \in N$ is the value it derives under $A$, namely $v_i(A_i)$.  

\paragraph{Lexicographic valuations.}
The valuation function $v$ is said to be \emph{lexicographic} if, 
for all $i \in [m-1],$ we have $v (g_i) > v (g_{i+1}) + v (g_{i+2}) + \ldots + v (g_m)$, where $g_i$ denotes the $i^\text{th}$ most valuable good. 
Under the superincreasing inequalities above, two bundles are compared by the highest-ranked good in their symmetric difference; hence the induced ranking of bundles is strict. 
An instance $\langle N, M, \V \rangle$ is said to be lexicographic if all valuation functions $v_1,v_2,\dots,v_n$ are lexicographic.

\paragraph{Ordered lexicographic valuations.} An instance $\langle N, M, \V \rangle$ is said to be \emph{ordered lexicographic} if it is lexicographic and there exists a permutation of the goods, say $(g_1,g_2,\dots,g_m)$, such that for every agent $i \in N$ and for every index $j \in [m-1]$, we have $v_i(g_{j}) > v_i(g_{j+1})$. In other words, goods can be ordered so that all agents prefer the goods according to that order. Note that in an ordered lexicographic instance, all agents have the same \emph{ranking} over all bundles of goods, though the numerical values assigned by different agents to a bundle may differ. For any $j \in [m-1]$, define $\tail(j) \coloneqq \{g_{j+1},\dots,g_m\}$ to be the set of goods that are less valuable than $g_j$ for all agents.

\paragraph{Doubling valuations.}
An additive valuation function is said to be \emph{doubling} if the goods can be ranked from most preferred to least preferred such that every good in this ranking is worth at least twice as much as the good that appears immediately next to it. An instance $\langle N, M, \V \rangle$ is said to be doubling if all valuation functions $v_1,v_2,\dots,v_n$ are doubling. Note that each doubling valuation is also lexicographic, and hence doubling valuations form a subclass of lexicographic valuations. This is because for any doubling valuation $v$, where goods are preferred in the order $g_1 \succ g_2 \succ \dots \succ g_m$, we have $v(g_j) > \sum_{r=1}^{m-j} \sfrac{v(g_j)}{2^r} \geq v(g_{j+1}) + v(g_{j+2}) + \dots + v(g_m)$ for every $j \in [m-1]$. However, a doubling instance need not be ordered as agents may rank the goods differently.

\paragraph{Nash welfare.}
The Nash welfare of a partial allocation $A$, denoted by $\W^\texttt{\textup{Nash}}(A)$, is the geometric mean of the utilities of the agents under $A$, i.e., $$\W^\texttt{\textup{Nash}}(A) \coloneqq \left( \prod_{i \in N} \, v_i(A_i) \right)^{\nicefrac{1}{n}}.$$ We will use the term \emph{Nash product} to denote the product of agents' utilities. We will assume throughout that $m\ge n$. Indeed, if $m<n$, every allocation has Nash product zero, so the problem becomes degenerate.

For any $\alpha \in [0,1]$, a partial allocation $A$ is said to be \emph{$\alpha$-Nash optimal} if its Nash welfare is at least $\alpha$ times the maximum Nash welfare achieved by any partial allocation for the given instance. A $1$-Nash optimal allocation will simply be called \emph{Nash optimal}.

We will also define a weighted generalization of Nash welfare. Given a set of positive weights $w_1,\dots,w_n$ such that $\sum_{i \in N} w_i = 1$, the weighted Nash welfare of a partial allocation $A$, denoted by $\W^\texttt{\textup{w-Nash}}(A)$, is given by
$$\W^\texttt{\textup{w-Nash}}(A) \coloneqq \prod_{i \in N} \, v_i(A_i)^{w_i}.$$ 
When all weights are equal to $1/n$, we recover the (unweighted) Nash welfare defined earlier.

\paragraph{Generalized $p$-mean welfare.} For any $p \in \mathbb{R}\setminus\{0\}$, the \emph{$p$-mean welfare} of a partial allocation $A$ is the generalized $p$-mean of the utilities of the agents under $A$, i.e., $\mathcal{W}^{p}(A) \coloneqq \left( \frac{1}{n} \sum_{i \in N} v_i^p(A_i) \right)^{\nicefrac{1}{p}}$. Additionally, given positive weights $w_1, \dots, w_n$ such that $\sum_{i \in N} w_i = 1$, the \emph{weighted $p$-mean welfare} is the generalized weighted $p$-mean of the utilities of the agents under $A$, i.e., $\mathcal{W}^{\textup{\texttt{w}-}{p}}(A) \coloneqq \left( \sum_{i \in N} w_iv_i^p(A_i) \right)^{\nicefrac{1}{p}}$. The special cases of $p$-mean welfare for $p=1$, $p \rightarrow -\infty$, and $p \rightarrow 0$ correspond to utilitarian, egalitarian, and Nash welfare, respectively~\citep*{M04fair}. To ensure that $p$-mean welfare is well-defined for $p < 0$, we will only consider partial allocations where all agents derive positive utility when working with welfare measures with $p < 0$.

\paragraph{Separable concave welfare.}
Let $h_1,\dots,h_n:\mathbb{R}_{\ge 0}\rightarrow\mathbb{R} \cup \{-\infty\}$ be increasing concave functions. We define the \emph{separable concave welfare} of a partial allocation $A=(A_1,\dots,A_n)$ as:
\[
\WH(A)\;\coloneqq\;\sum_{i\in N} h_i\!\left(v_i(A_i)\right).
\]

For Nash welfare, taking $h_i(x)=\log x$ gives $\WH(A) = n \cdot \log \W^\texttt{\textup{Nash}}(A)$, so maximizing $W^h$ is equivalent to maximizing Nash welfare. Similarly, $h_i(x) = w_i \cdot \sign(p) \cdot x^p$ corresponds to the weighted $p$-mean welfare for any $p \in (-\infty, 1) \setminus \{0\}$. For some of our results, we will find it convenient to work with a general concave function $h$ instead of $\log(\cdot)$, as that would facilitate extensions to the class of generalized $p$-mean welfare measures (see \Cref{appendix:Extensions_p_Mean}).

\paragraph{Marginal function.} To analyze our algorithms, we will need to track how $\WH$ changes when a single good $g$ is transferred from one agent, say $i$, to another agent, say $k$. Under additivity, the recipient's utility increases by $v_k(g)$ and the donor's utility decreases by $v_i(g)$, so the resulting welfare change depends only on each agent's current utility and this increment/decrement. In particular, for a single transfer, the net change in $\WH$ can be expressed as the recipient's marginal gain minus the donor's marginal loss. We capture this via marginal functions.

For an increasing concave function $f: \mathbb{R}_{\ge 0} \rightarrow \mathbb{R}$, we define its marginal function $\Delta_f: \mathbb{R}_{\ge 0} \times \mathbb{R}_{\ge 0} \rightarrow \mathbb{R}$ as $\Delta_f(x,y) = f(x+y) - f(x)$ for every $x,y \in \mathbb{R}_{\ge 0}$. Note that $\Delta_f(\cdot,y)$ is a non-increasing function for every fixed $y \in \mathbb{R}_{\ge 0}$, and $\Delta_f(x,\cdot)$ is a non-decreasing function for every fixed $x \in \mathbb{R}_{\ge 0}$.

\paragraph{Computational model for separable-concave extensions.}
All item values in the input are rational numbers, and the approximation and hardness results for Nash welfare are stated in the standard Turing model.  For the exact algorithms that are stated for arbitrary separable concave objectives $\WH(A)=\sum_{i\in N} h_i(v_i(A_i))$, we use a \emph{comparison-oracle model} for the functions $h_i$. The oracle is given rational arguments and compares (i) two sums $\sum_i h_i(x_i)$ and (ii) two marginals $\Delta_i(x,y)$ and $\Delta_k(x',y')$.
Here all arguments are rational numbers that arise as utilities of bundles or as sums of item values in the input instance.  The running times of the exact algorithms count both arithmetic operations and calls to this oracle.

For unweighted Nash welfare, this oracle is not needed. Since maximizing Nash welfare is equivalent to maximizing the Nash product, all required objective comparisons can be implemented exactly by comparing products of rational utilities. Thus, our exact Nash-welfare algorithms are ordinary polynomial-time algorithms in the standard Turing model.  For weighted Nash welfare and weighted $p$-mean welfare, the corresponding extensions should be read either in the above comparison-oracle model or under any explicit representation of the functions $h_i$ that supports the stated comparisons in polynomial time.

\section{A \texorpdfstring{$\left( \sfrac{1}{\sqrt{2}}-\eps \right)$}{sqrt2}-Approximation Algorithm}
\label{sec:Sq-Root-2-Approx-Algo}

In this section, we present a $(\nicefrac{1}{\sqrt{2}}-\eps)$-approximation algorithm for weighted Nash welfare under lexicographic valuations~(\Cref{thm:Root-2-Approximation}).

\begin{restatable}[Approximation algorithm]{theorem}{RootTwoApproximation} 
For any fixed $\eps > 0$, there is a polynomial-time algorithm that, given as input any instance with lexicographic valuations, and weights of agents, returns an allocation whose weighted Nash welfare is at least $\left( \sfrac{1}{\sqrt{2}}-\eps \right)$ times the optimal weighted Nash welfare.
    \label{thm:Root-2-Approximation}
\end{restatable}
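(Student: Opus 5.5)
We follow the configuration-LP template of \citet*{FL25note}, replacing its EF1-based loss analysis with a sharper, lexicographic-specific comparison.

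\textbf{Step 1: solve and round the LP.} First, we solve the weighted configuration LP for Nash welfare to within a factor $(1-\eps)$ in polynomial time. This LP maximizes $\sum_i w_i \sum_S x_{i,S}\log v_i(S)$ subject to the usual assignment constraints, and we solve it via the ellipsoid method on the dual. The demand oracle is easy here: under lexicographic valuations, bundles are compared by the top good in their symmetric difference, so the oracle is greedy-like. Alternatively, one can restrict to bundles that are $\eps$-approximately described by their top $\O(\log(1/\eps))$ goods, since lower goods contribute only a $(1+\eps)$ factor.

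We then apply the rounding of \citet*{3113606.3113856}. Each agent $i$ is split into copies, and the fractional solution is viewed as a fractional assignment $X^V$ in the copied instance. Goods are processed in agent $i$'s priority order and cut into blocks of fractional mass $1$. A bipartite-matching decomposition then yields an integral allocation $X^B$ in which each agent receives at most one good per block. By the standard argument, the expected $\log$-utility of the rounded solution is at least that of $X^B$. So the theorem reduces to the following claim.

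\textbf{Step 2: the comparison claim.} We must show $\NSW(X^V)/\NSW(X^B) \le \sqrt2$ in the weighted sense. Here $X^V$ is any allocation in the copied instance and $X^B$ is the block-balanced one.

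The key structural fact is this: for a lexicographic valuation, $v_i(X^B_i)$ is at least the value of the top good of the first block. Meanwhile $v_i(X^V_i) \le 2\,v_i(\text{top good of } X^V_i)$, since at least half of any bundle's value comes from its best good.

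The plan is to apply a sequence of transformations to the pair $(X^V,X^B)$ that only increase the ratio, until the instance has a canonical form. These transformations are:
\begin{itemize}
\item moving tail goods of $X^V$ onto the agent whose top good they follow;
\item making valuations ``extremal'' (each good worth essentially its lexicographic upper bound);
\item using domination to reassign a good so that only the decisive high-ranked goods matter.
\end{itemize}

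In the canonical form, the agents split into groups of size at most three. Each group is essentially a chain in which one agent's top $X^V$-good is another agent's $X^B$-good. For each such group we then check directly that the weighted geometric-mean loss is at most $\sqrt2$. The extremal case is two agents, one gaining a factor $2$ and the other losing nothing; this matches the $k$ versus $2k$ example of \Cref{sec:Introduction}.

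\textbf{Main obstacle.} We expect the hardest step to be the grouping argument with \emph{weights}. A naive per-agent bound only gives a factor $2$, so the factor $\sqrt2$ must come from pairing a loser with a winner. With arbitrary weights, a heavy agent losing factor $2$ cannot simply be averaged against a light one. To handle this, we would first try to show that in the canonical form, any agent whose utility drops by almost $2$ is matched to a distinct agent whose utility in $X^B$ is at least $2\times$ its utility in $X^V$ (it received a strictly higher-ranked good). We would then argue that the copy structure of the LP makes the weights of the two effectively equal, since copies carry weight proportional to fractional mass. Chains of length three arise when such compensation must pass through an intermediate agent; this is where we would do a careful case analysis.

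Finally, we combine the pieces. The LP optimum is at least the optimal weighted Nash welfare, the LP is solved up to $(1-\eps)$, and the comparison loses at most $\sqrt2$. Together these give the stated $(\sfrac{1}{\sqrt{2}}-\eps)$ guarantee.
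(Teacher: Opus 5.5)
\textbf{There is a genuine gap.} Your Step 1 matches the paper: Feng--Li's configuration LP solved to additive error $\ln(1+\eps)$, Shmoys--Tardos grouping, and reduction to a comparison in a copied instance. You can simply invoke Feng--Li's solvability for additive valuations, so the remarks about a greedy demand oracle or truncated bundles are unnecessary. The gap is Step 2, which is the entire technical content of the theorem.

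\textbf{The comparison is set up in the wrong setting.} It must be made separately for each original agent $i$. The auxiliary instance has $\Delta$ identical clones of $i$, all with valuation $v_i$. For each group $G_q$ it has a block containing $f_{j,q}\Delta$ copies of each good $j$. $A^y$ gives $y_{i,S}\Delta$ clones a copy of $S$, and $A^z$ is built matching by matching so that every clone gets at most one copy per block. What you need is the unweighted, identical-valuation inequality $\prod_c v_i(A^y_c)\le 2^{\Delta/2}\prod_c v_i(A^z_c)$, i.e.\ $\sum_S z_{i,S}\ln v_i(S)\ge\sum_S y_{i,S}\ln v_i(S)-\tfrac{\ln 2}{2}$. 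Weights enter only afterwards: you multiply by $w_i$, sum over $i$, apply Jensen, and take the best of the polynomially many matchings in the support. So the obstacle you single out (a heavy loser averaged against a light winner) never arises. Mixing different agents inside one comparison would also destroy the tool the argument relies on. Because the clones are identical, whole bundles can be permuted freely, so one may assume clone $c$ holds the same first-block copy $e_c$ in both allocations. Your canonical form, where one agent's top $X^V$-good is another's $X^B$-good, is therefore not how the groups arise. Every clone has the same top copy in both allocations, and groups are formed according to which lower copies each clone gains or loses.

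\textbf{The transformations and grouping are asserted, not proved.} This holds even in the correct setting, and one transformation fails as stated. Moving values toward their lexicographic bounds is not ratio-monotone. For a clone whose top copy has value $x$, the ratio $(x+\alpha_c)/(x+\beta_c)$ increases in $x$ if and only if $\alpha_c<\beta_c$, and lower types are shared by many clones on both sides. The paper changes only first-block copies held by the same clone in both allocations, in the direction the sign dictates:
\begin{itemize}
\item it raises the top copy to a huge $L$ for clones preferring $A^{\BB}$;
\item it lowers the intermediate first-block types to $V_t$ for the other clones, compressing the first block to three types (mini-blocks $A,B,C$).
\end{itemize}
It then proceeds as follows:
\begin{itemize}
\item it truncates $A^{\BB}$ to its sorted first two blocks;
\item it pushes the leftover $g_3$ copies in $A^{\VV}$ into $B$ before $A$, as prefixes;
\item it forms singletons and anchor/representative open pairs;
\item it processes $g_4,g_5,\dots$ so that denominator recipients form a prefix and numerator recipients a suffix of the open interval, closing agents into singletons, pairs, or triples.
\end{itemize}
The triple case needs the explicit estimate $\frac{v_2+v_3+V_{j+1}}{v_2+v_j}\cdot\frac{v_3+V_j}{v_3}\le\frac{14}{5}<2\sqrt2$. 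Some argument of this kind is what your proposal is missing.

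Your ``key structural fact'' is also misstated. In $X^B$ each clone receives exactly one first-block copy, so its utility is bounded below by the last first-block value, not the top one.
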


Our algorithm is the configuration-LP rounding algorithm of~\citet*{FL25note} described in~\Cref{sec:feng_li_rounding}; our contribution is a sharper analysis of its loss for lexicographic valuations. Let $y_i=(y_{i,S})_S$ denote the bundle distribution assigned to agent $i$ by the configuration LP, and let $z_i=(z_{i,S})_S$ denote the bundle distribution produced for that agent by the rounding. Their analysis shows that if, for every agent $i$,
\[
        \sum_S z_{i,S}\ln v_i(S)
        \ge
        \sum_S y_{i,S}\ln v_i(S)-c,
\]
then the rounding gives an $e^{-c}$-approximation, up to the error in solving the configuration LP. To establish such a bound,~\citet*{FL25note} fix an agent and create a copied auxiliary instance with identical clones of the agent, and multiple copies of each good, in which the LP bundle distribution and the rounded bundle distribution are represented by two allocations. The rounded allocation is EF1. For identical additive valuations, every EF1 allocation is an $e^{-1/e}$-approximation to the optimal Nash welfare~\citep*{BKV18Finding}; this yields $c=1/e$. Since $e^{-1/e}<1/\sqrt{2}$, EF1 alone is not enough to obtain our desired guarantee.

For lexicographic valuations, we exploit stronger structure in the rounded distribution. Making this structure visible requires a more careful construction of both the copied auxiliary instance and the two allocations: we create and assign the copied goods block by block, while preserving the LP and rounded bundle distributions. This ensures that the rounded allocation is \emph{block-balanced}, meaning that the goods in each block are allocated to distinct clones. Our main structural result,~\Cref{thm:ratio_root_2} in~\Cref{sec:aux_ratio_bound}, shows that a block-balanced allocation loses a factor of at most $\sqrt{2}$ relative to the LP allocation. Taking logarithms gives the desired bound $c=(\ln 2)/2$, and hence the approximation factor $1/\sqrt{2}-\eps$ in~\Cref{thm:Root-2-Approximation}. \Cref{sec:feng_li_rounding} gives the refined auxiliary construction; \Cref{sec:aux_ratio_bound} formalizes block-balancedness and the resulting allocation comparison and states the structural theorem, whose full proof appears in~\Cref{sec:root_2_proof}.

We also show a matching lower bound on the integrality gap of the LP relaxation. The lower-bound construction is given in \Cref{appendix:config_lp_gap}.

\begin{restatable}[Configuration-LP integrality gap]{proposition}{ConfigLPGAP}
The weighted configuration LP for Nash welfare has integrality gap exactly $\sqrt{2}$, even when all valuations are lexicographic.
\label{prop:config_lp_gap}
\end{restatable}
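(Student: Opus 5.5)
The proof has two directions, an upper bound of $\sqrt2$ on the gap and a family of instances whose gap tends to $\sqrt2$; I would prove them separately.

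\emph{Upper bound.} Here I would reuse the analysis behind \Cref{thm:Root-2-Approximation} and drop the LP-solving error. Let $y$ be an \emph{optimal} solution of the weighted configuration LP, with objective
\[
\mathrm{LP}=\sum_i w_i\sum_S y_{i,S}\ln v_i(S).
\]
Apply the rounding of \citet*{FL25note} from \Cref{sec:feng_li_rounding} to $y$. For each agent $i$, the block-by-block copied auxiliary instance turns the LP distribution and the rounded distribution into two allocations $X^V$ and $X^B$ with $X^B$ block-balanced. \Cref{thm:ratio_root_2} then gives $\NSW(X^V)/\NSW(X^B)\le\sqrt2$, that is,
\[
\sum_S z_{i,S}\ln v_i(S)\ \ge\ \sum_S y_{i,S}\ln v_i(S)-\tfrac{\ln 2}{2}.
\]
Averaging with the weights $w_i$ shows that the rounded integral allocation has weighted Nash welfare at least $e^{\mathrm{LP}}/\sqrt2$. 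Hence $e^{\mathrm{LP}}\le\sqrt2\cdot\mathrm{OPT}$ on every lexicographic instance. The $\eps$ in \Cref{thm:Root-2-Approximation} comes only from solving the LP approximately, so it does not appear here. One thing to check is that the Feng--Li guarantee is genuinely stated relative to the LP value and not relative to $\mathrm{OPT}$.

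\emph{Lower bound.} I need lexicographic instances whose ratio $e^{\mathrm{LP}}/\mathrm{OPT}$ tends to $\sqrt2$. The examples in the introduction show where a factor of $2$ can come from in this domain. Superincreasing values force a large gap between consecutive goods except at the very bottom, so the only place a ``doubling'' is available is a pair of nearly equal lowest goods $g_{m-1},g_m$ with $v(g_{m-1})\approx v(g_m)$. A bundle containing both is worth about twice a bundle containing only one. The factor $\sqrt2$ should therefore come from heavy agents who, in the LP, receive such a pair with probability $\tfrac12$ and a single good of comparable value with probability $\tfrac12$. Integrally, some structural constraint should prevent any heavy agent from ever completing the pair.

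Weights will be the main tool for enforcing that constraint. I would add auxiliary agents whose weights tend to $0$, so their own utilities barely affect the objective. Each of them still needs a positive-utility bundle, because a zero utility makes the Nash product zero. Their lexicographic values on the contested goods are chosen so that, in every integral allocation, the light agents' demands absorb one good from each would-be pair. In the LP, by contrast, these demands can be spread fractionally so that the heavy agents take a pair half the time. I would make the instance symmetric under a cyclic relabelling of heavy agents and their goods: a uniform fractional solution then has an easy closed form, and the integral optimum can be bounded by counting how many pairs can be completed.

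The step I expect to be the main obstacle is the lower-bound construction. It must satisfy the good-capacity constraints of the LP while putting probability $\tfrac12$ on the pair configurations. It must also give a clean integral upper bound, and it has to respect the strict superincreasing condition. The simplest two- and three-agent attempts violate capacity, so the gadget needs many agents and copies. I would first try a cyclic gadget with $2t$ goods and $t$ heavy agents plus $t$ light agents, compute both quantities, and let $t\to\infty$ and the light weights tend to $0$, aiming for the ratio to tend to $\sqrt2$.
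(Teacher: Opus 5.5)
Your upper bound is fine, and it is how the paper obtains the ``$\le\sqrt2$'' half. The Feng--Li per-agent inequality is relative to the LP solution being rounded. Combining it with \Cref{thm:ratio_root_2} for an optimal (rational, vertex) LP solution and averaging gives an integral allocation with weighted Nash welfare at least $e^{\mathrm{LP}}/\sqrt2$, with no $\eps$.

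The lower bound, however, is not proved: you give a heuristic and a plan but no instance, and the plan as written would not work. Here is why.

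(i) Your concrete gadget, $2t$ goods for $t$ heavy and $t$ light agents, cannot put any pair in the LP support. Every support bundle must have positive value, so it is nonempty, and hence $\sum_{i,j}x_{i,j}\ge n$. With $m=n$, this forces every LP bundle to be a singleton, so you need surplus goods.

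(ii) The key missing idea is what the heavy agent's alternative to its pair is. It must be a single \emph{huge} good $g_1$, of value $L\to\infty$, contested by two heavy agents $a_1,a_2$ of weight $\tfrac12-\delta$.
\begin{itemize}
\item In the paper's instance (four agents, five goods), the LP gives each heavy agent $\{g_1\}$ with probability $\tfrac12$ and its near-equal pair with probability $\tfrac12$. The pairs are $\{g_2,g_3\}$ and $\{g_4,g_5\}$, with values $1+4\gamma$ and $1$.
\item The light agents $a_3,a_4$ take $\{g_2\}$ or $\{g_3\}$, respectively $\{g_4\}$ or $\{g_5\}$, each with probability $\tfrac12$. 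So every capacity is met exactly, and this fractional solution has objective $\tfrac12\ln L+\tfrac12\ln 2+o(1)$.
\item Integrally, only one heavy agent gets $g_1$. The other keeps at most one good of its pair, because the light agent must keep the other (see (iii)). This gives $\tfrac12\ln(L+O(1))+o(1)$.
\end{itemize}
The limit $L\to\infty$ is essential. With a value $V$ ``comparable'' to the pair in place of $L$, the winner also gets one of its pair goods. The same fractional solution then certifies only about $\sqrt{2V/(V+1)}<\sqrt2$.

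(iii) ``Each light agent needs a positive-utility bundle'' is not an effective constraint.
\begin{itemize}
\item A strictly lexicographic valuation has at most one zero value, so the light agent values the other goods positively.
\item An agent of vanishing weight $\delta$ can be left with any fixed small utility $\eta>0$ at cost $\eta^{\delta}\to1$.
\item The paper forces the light agents to absorb one good of each pair by calibrating $\gamma=2^{-L^2}$ against $\delta=1/L$. Starving a light agent to utility $O(\gamma)$ then costs a factor $\gamma^{\delta}=2^{-L}$, which overwhelms the $L^{1/2}$ that any allocation can gain.
\end{itemize}
Your proposal has no such calibration, so nothing actually prevents the integral solution from completing a pair.

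Finally, contrary to your expectation that many agents and copies are needed, four agents suffice.
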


\subsection{The Feng-Li Configuration LP and its Rounding}
\label{sec:feng_li_rounding}
\paragraph{The relaxation.} ~\citet*{FL25note} use the following weighted configuration LP for additive valuations, where $N = \{1, 2, \ldots, n\}$ is the set of agents, $M$ is the set of goods and $v_i$ is the valuation function for agent $i$:\footnote{We assume that there exists an allocation that gives a positive utility to each agent, and thus only consider those $S$ for each $i$ for which $v_i(S) > 0$.}

\[
\begin{array}{lll}
\max & \displaystyle \sum_{i\in N}\sum_{S\subseteq M}
        w_i y_{i,S}\ln v_i(S) \\
\text{s.t.}
    & \displaystyle \sum_{i\in N}\sum_{S\ni j} y_{i,S}\le 1
        & \forall j\in M, \\
    & \displaystyle \sum_{S\subseteq M} y_{i,S}=1
        & \forall i\in N, \\
    & y_{i,S}\ge 0
        & \forall i,S.
\end{array}                                      \tag{Conf-LP}
\]

The LP can be solved in polynomial time to within an additive error of $\ln(1 + \eps)$ via an approximate separation oracle for the dual. The number of non-zero variables in the solution is polynomially bounded. Define $x_{i,j}$ to be the fraction of the good $j$ given to agent $i$, that is $x_{i,j} = \sum_{S \ni j} y_{i,S}$. They note that the inequality $\sum_{i \in N} \sum_{S \ni j} y_{i,S}$ can be assumed as an equality. Thus, $\sum_{i \in N} x_{i,j} = 1$ for all goods $j \in M$.

\paragraph{Shmoys-Tardos grouping.} For each agent $i$, let its preference over the goods be $\sigma_{i,1} \succ \sigma_{i,2} \ldots \succ \sigma_{i,m}$.  Let $p_i = \lceil \sum_j x_{i,j}\rceil$, and let $q_i = i(m+1)$ be an offset we shall use for agent $i$. As in the Shmoys--Tardos grouping step~\citep*{3113606.3113856}, we make $p_i$ groups of goods from left to right, denoted by $G_{q_i},\ldots,G_{q_i+p_i-1}$. To define these groups, imagine that for agent $i$, each good $j = \sigma_{i,\ell}$ is represented by an interval $I_{i, j} = [q_i + \sum_{h < \ell} x_{i, \sigma_{i,h}}, q_i + \sum_{h \leq \ell} x_{i, \sigma_{i,h}}]$ of length $x_{i, j}$. For each integer $q \in [q_i, q_i+p_i-1]$, group $G_q$ is represented by the interval $[q,q+1]$ and consists of the goods whose intervals have positive-length intersection with $[q,q+1]$. For each good $j\in M$, let $f_{j,q} = |I_{i,j} \cap [q,q+1]|$ denote the length of the overlap between good $j$ and group $G_q$. The offset ensures that intervals for different agents do not intersect: since $p_i\le m$, all intervals for agent $i$ lie in $[q_i,q_i+p_i]\subseteq[q_i,q_i+m]$, while the next agent's offset is $q_i+m+1$.

\begin{figure}[t]
\centering
\begin{tikzpicture}[x=1.25cm,y=.85cm]
        \node[font=\scriptsize, anchor=east] at (-.15,.28) {intervals};
        \node[font=\scriptsize, anchor=east] at (-.15,-.74) {groups};
        \draw[-{Latex[length=2mm]}, thick, black!60] (0,.85) -- (3.18,.85)
                node[right, font=\scriptsize] {preference order};

        \foreach \x in {0,1,2,3} {
                \draw[dashed, black!35] (\x,-1.05) -- (\x,.70);
        }

        \path[draw=black!55, fill=proofblue!13, rounded corners=.5pt]
                (0,0) rectangle (.66,.55);
        \node[font=\scriptsize] at (.33,.28) {$\sigma_{i,1}$};
        \path[draw=black!55, fill=proofgreen!14, rounded corners=.5pt]
                (.66,0) rectangle (1.18,.55);
        \node[font=\scriptsize] at (.92,.28) {$\sigma_{i,2}$};
        \path[draw=black!55, fill=prooforange!17, rounded corners=.5pt]
                (1.18,0) rectangle (2,.55);
        \node[font=\scriptsize] at (1.59,.28) {$\sigma_{i,3}$};
        \path[draw=black!55, fill=proofred!14, rounded corners=.5pt]
                (2,0) rectangle (2.43,.55);
        \node[font=\scriptsize] at (2.215,.28) {$\sigma_{i,4}$};

        \path[draw=black!45, fill=black!4, rounded corners=.5pt]
                (0,-.95) rectangle (1,-.52);
        \node[font=\scriptsize] at (.5,-.735) {$G_{q_i}$};
        \path[draw=black!45, fill=black!4, rounded corners=.5pt]
                (1,-.95) rectangle (2,-.52);
        \node[font=\scriptsize] at (1.5,-.735) {$G_{q_i+1}$};
        \path[draw=black!45, fill=black!4, rounded corners=.5pt]
                (2,-.95) rectangle (3,-.52);
        \node[font=\scriptsize] at (2.5,-.735) {$G_{q_i+2}$};
\end{tikzpicture}
\caption{Formation of the groups for agent $i$. Goods are laid out from the offset $q_i$ in the agent's preference order with interval lengths $x_{i,\sigma_{i,\ell}}$; each unit window represents a group $G_q$, and the overlap of good $j$ with group $G_q$ has length $f_{j,q}$. In this example, $G_{q_i}$ contains $\sigma_{i,1},\sigma_{i,2}$; $G_{q_i+1}$ contains $\sigma_{i,2},\sigma_{i,3}$; and $G_{q_i+2}$ contains $\sigma_{i,4}$.}
\label{fig:lp-group-formation}
\end{figure}

Consider the bipartite graph between the goods and the groups, where the edge $(j,G_q)$ has weight $f_{j,q}$. For each good $j \in M$, the sum of the weights on edges from $j$ to the groups created for agent $i$ is $x_{i,j}$; hence, its total incident weight is $\sum_i x_{i,j}=1$. For each group $G_q$, the sum of the incident edge weights is at most $1$, and is equal to $1$ unless $G_q$ is the last group of its agent. Thus, $f$ is a fractional matching that saturates every good and every non-last group. We decompose $f$ as a probability distribution over integral matchings, each of which matches all goods and all non-last groups. An integral matching is sampled from this distribution. Each matching corresponds to an allocation where each good is allocated to the agent whose group it is matched to. We use $z_{i,S}$ to denote the probability that agent $i$ gets the subset $S$. Note that the probability that agent $i$ gets a good $j$, that is $\sum_{S \ni j} z_{i,S}$ is exactly $x_{i,j}$. The crucial observation is that in each allocation in the support, each agent $i$ gets exactly one good from each of its first $p_i - 1$ groups, and possibly one good from its last group. Also, any good that agent $i$ gets from a group is (weakly) more valuable than any good it gets from the next group. This implies that any two subsets $S_1$ and $S_2$ in the support are envy free up to one item (EF1) for agent $i$. It is known from the work of~\citet*{BKV18Finding} that for any instance of \emph{unweighted} Nash welfare under identical additive valuations, any EF1 allocation provides an $e^{\nicefrac{-1}{e}}$ approximation to optimum Nash welfare. Using this result,~\citet*{FL25note} prove that:
\begin{equation}
\label{eq:feng_li_nash_log}    
\sum_{S \subseteq M} z_{i,S} \ln v_i(S) \geq \sum_{S \subseteq M} y_{i,S} \ln v_i(S) - \frac{1}{e}.
\end{equation}

\paragraph{The copied auxiliary instance.} For the purpose of analysis, suppose first that all $y_{i,S}$ and $z_{i,S}$ are rational numbers, and create an auxiliary instance as follows. Let $\Delta$ be a positive integer such that each $y_{i,S}$ and $z_{i,S}$ is an integer multiple of $\frac{1}{\Delta}$. It can be seen that $x_{i,j}$ is also an integer multiple of $\frac{1}{\Delta}$ for all goods $j$. Create $\Delta$ agents, and for each good $j$, create $x_{i,j} \cdot \Delta$ many copies. Consider allocations $A^y, A^z$ where exactly $y_{i,S} \cdot \Delta$ (resp. $z_{i,S} \cdot \Delta$) are allocated one copy each from the set $S$ of goods in $A^y$ (resp. $A^z$). Note that both $A^y, A^z$ are valid allocations in the instance, since $\sum_{S \ni j} y_{i,S} = \sum_{S \ni j} z_{i,S} = x_{i,j}$. Furthermore, $A^z$ is an EF1 allocation. Thus, applying the identical EF1 result of~\cite{BKV18Finding} and taking log, we get~\eqref{eq:feng_li_nash_log}.

\paragraph{The lexicographic improvement.}Our goal is to replace $\nicefrac{1}{e}$ by $\nicefrac{\ln 2}{2}$ in~\eqref{eq:feng_li_nash_log} for the case of lexicographic valuations. For this, fix an original agent $i$, and refine the auxiliary instance for this agent. We create the goods of the auxiliary instance one block at a time, each consisting of (up to) $\Delta$ goods. For each group $G_q$ of agent $i$, recall that $f_{j, q}$ is the probability that the good $j$ is matched to $G_q$ (and thus allocated to $i$). Modify $\Delta$ so that along with all the $y_{i,S}$ values, the probability of each matching in the support is also an integer multiple of $\frac{1}{\Delta}$. Then $f_{j,q}$ is also an integer multiple of $\frac{1}{\Delta}$, since it is the sum of the probabilities of the matchings that match $j$ to $G_q$. For each group $G_q$, we create a block of goods in the auxiliary instance, where we create exactly $f_{j,q}\cdot \Delta$ many copies of good $j$. Note that the total number of copies of good $j$ created is still $x_{i,j} \cdot \Delta$. Each block in the auxiliary instance, except possibly the last one, consists of $\Delta$ goods, and the last block consists of up to $\Delta$ goods. Now we again consider two allocations $A^y, A^z$. $A^y$ is created as before, that is, by giving exactly $y_{i,S} \cdot \Delta$ agents one arbitrary copy each of goods in the bundle $S$. Note that all copies of each good $j$ are allocated to distinct agents. For $A^z$, we refine the definition as follows. Recall that the distribution $z_i$ was obtained by decomposing $f$ (a matching between goods and good-groups) as a probability distribution over matchings, then sampling a matching and then allocating each agent all goods that are matched to one of its groups. For each matching in the support, we do the following. Let $p$ be the probability with which this matching is sampled. We repeat the following $p \cdot \Delta$ times: Pick a new agent. For each good $j$ that was matched to some group $G_q$, allocate one of the $f_{j, q} \cdot \Delta$ copies of $j$ in the block corresponding to $G_q$, to this agent. It can be verified that, for exactly $z_{i,S} \cdot \Delta$ many agents, the set of goods they get is exactly one copy each of the original goods in $S$. Again, we maintain that any two copies of the same good are allocated to distinct agents in $A^z$. Additionally, we have that each agent gets at most one good from each block in $A^z$. Thus, this refined instance has $\Delta$ cloned agents with identical lexicographic valuation $v_i$, at most $\Delta$ copies of each good type, a valid allocation $A^y$, and a block-balanced allocation $A^z$ (see~\Cref{sec:aux_ratio_bound} for definitions of valid and block balanced). It then follows from~\Cref{thm:ratio_root_2} that the ratio of the Nash welfare of $A^y$ to that of $A^z$ is upper bounded by $\sqrt{2}$, that is:
\begin{equation}
    \label{eq:root_2_ratio_lex}
    \frac{\WNash(A^y)}{\WNash(A^z)} = \frac{\left(\prod_{S \subseteq M} v_i(S)^{y_{i,S} \Delta}\right)^{\nicefrac{1}{\Delta}}}{\left(\prod_{S \subseteq M} v_i(S)^{z_{i,S} \Delta}\right)^{\nicefrac{1}{\Delta}}} \leq \sqrt{2}.
\end{equation}

Taking natural logarithm on both sides, we get:
\[\sum_{S \subseteq M} z_{i,S} \ln v_i(S) \geq \sum_{S \subseteq M} y_{i,S} \ln v_i(S) - \frac{\ln 2}{2}.\]

Then, as in~\cite{FL25note}, taking weighted (by $w_i$) sum of this inequality over all agents, and then applying Jensen's inequality shows that the expected Nash welfare of the output allocation is at least $\frac{1}{\sqrt{2}} - \eps$ times the optimal (the $-\eps$ term appears because the LP value for $y$ might be up to $\ln(1 + \eps)$ less than the optimal LP value). Furthermore, since it is possible to decompose the fractional matching $f$ into a polynomial number of integral matchings, it follows that at least one of the allocations corresponding to these matchings in the support must have a Nash welfare at least $\frac{1}{\sqrt{2}} - \eps$ times the optimal. Thus, we get a deterministic polynomial time algorithm with the same guarantee, thus proving~\Cref{thm:Root-2-Approximation}.

\subsection{Nash welfare ratio in the auxiliary instance}
\label{sec:aux_ratio_bound}

The preceding analysis reduces the lexicographic improvement to the auxiliary-instance comparison in~\eqref{eq:root_2_ratio_lex}. We now isolate the structural statement that establishes this comparison. Abstractly, the copied auxiliary instance consists of identical agents and multiple copies of lexicographically ordered good types. The allocation $A^y$ represents the LP benchmark, while the allocation $A^z$ induced by the rounding has the additional property that the goods in every consecutive rank block go to distinct agents. The central result of this subsection,~\Cref{thm:ratio_root_2}, shows that imposing this block structure loses a factor of at most $\sqrt{2}$ in Nash welfare. We next define this comparison setting precisely, state the theorem, and outline its proof.

Let $N=\{1,\ldots,n\}$ be a set of agents and let
$M=\{g_1,\ldots,g_m\}$ be a set of good types. For each type $g_j$, suppose there are $c_j\in\mathbb Z_{\ge 0}$ identical copies, with $c_j\le n$. All agents
have identical additive valuations, and one copy of $g_j$ has value $v_j\geq0$.
The values are indexed in nonincreasing order and satisfy the lexicographic condition $v_j \geq \sum_{k > j} v_k$ for each $j < m$.\footnote{The lexicographic condition defined in~\Cref{sec:Preliminaries} has a strict inequality. However, for our purposes in this section, the weak condition suffices.}

Call a complete allocation of the $m' = \sum c_j$ goods \emph{valid}, for each good, all copies are allocated to distinct agents. For any valid allocation $A$, let $A_i$ denote the good types whose copies (one copy each) are allocated to agent $i$. For a subset $S$, let $v(S):=\sum_{j\in S} v_j$.

List the $m'=\sum_j c_j$ goods in nonincreasing value order as $e_1,e_2,\ldots,e_{m'}$.
Thus all copies of $g_1$ come first, then all copies of $g_2$, and so on. Let us partition the goods into blocks, where the first $n$ goods form the first block, the next $n$ goods form the next block and so on. The final block contains at most $n$ copies. An allocation is called \emph{block-balanced} if it is valid and for every $1 \le q \le \lceil \frac{m'}{n} \rceil$, in the
$q$-th aligned block $\{e_{(q-1)n+1},\ldots,e_{\min(qn,m')}\}$, all goods are allocated to distinct agents.
\begin{restatable}[]{theorem}{StructuralRootTwo} 
\label{thm:ratio_root_2}
For every block-balanced allocation $A^\BB$ and every valid allocation
$A^\VV$,
\[
        \WNash(A^\BB)\ge \frac1{\sqrt2}\WNash(A^\VV).
\]
Equivalently,
\[
        \prod_{i=1}^n v(A^\VV_i)
        \le
        2^{n/2}\prod_{i=1}^n v(A^\BB_i).
\]
\end{restatable}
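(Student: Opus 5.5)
The approach is to compare the two allocations agent by agent through their \emph{top goods}, and then pay for the remaining slack by a charging/grouping argument. The basic fact is that for identical lexicographic values every bundle $S$ with top good $t(S)$ satisfies $v_{t(S)}\le v(S)\le 2v_{t(S)}$, since the top good is worth at least the rest of the bundle.

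\textbf{Step 1 (crude bound).} We may assume $m'\ge n$, since otherwise both sides vanish. In $A^\BB$ the first block $\{e_1,\dots,e_n\}$ goes to $n$ distinct agents, so after relabelling agent $k$ holds $e_k$. For $A^\VV$, sort the top goods of the agents as $t_1\ge\dots\ge t_n$. These are $n$ distinct copies, so $t_k\le v(e_k)$. This already gives
\[
\prod_i v(A^\VV_i)\le 2^n\prod_k v(e_k)\le 2^n\prod_i v(A^\BB_i),
\]
and the rest of the proof has to halve the exponent.

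\textbf{Step 2 (normal form via ratio-increasing transformations).} I would first reduce to a canonical worst case, using moves that never decrease $\prod v(A^\VV_i)/\prod v(A^\BB_i)$.

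\begin{itemize}
\item[(a)] In $A^\VV$, move goods so that the agent with the $k$-th largest top holds exactly $e_k$ as its top. Swapping tops along the sorted order keeps validity when copy counts are respected, and by the lexicographic property it can only raise the top-dominated utilities.
\item[(b)] In $A^\BB$, reassign lower blocks adversarially: the goods of block $q$ are matched to agents in the order that minimizes the product. This is the reverse greedy order, in which the agent with the largest accumulated value gets the largest good of the next block.
\item[(c)] Perturb values toward the extremal lexicographic profile, where each $v_j$ equals $\sum_{k>j}v_k$ up to a limit or the values form a power-of-two chain. The ratio is multilinear/monotone in suitable parameters, so it is maximized at such extreme points.
\end{itemize}

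After these moves, the bundles of both allocations are described by which copies sit below each top, and the constraint $c_j\le n$ (copies of a type go to distinct agents) becomes the key combinatorial restriction.

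\textbf{Step 3 (grouping).} The crux is that an agent $i$ whose $A^\VV$-bundle nearly doubles its top value $v(e_k)$ needs goods of total value about $v(e_k)$ strictly below $e_k$. By validity and the bounded copy counts, these goods "use up" high-ranked copies. Those copies lie in blocks $2,3,\dots$ of $A^\BB$, where block-balancedness forces them onto distinct agents, and so they raise the $A^\BB$-utilities of \emph{other} agents. I would build a charging map that sends each such agent to one or two agents whose $A^\BB$-bundle gained correspondingly. This partitions $N$ into groups of size at most three, and for each group $G$ I would verify
\[
\prod_{i\in G}v(A^\VV_i)\le 2^{|G|/2}\prod_{i\in G}v(A^\BB_i).
\]
Multiplying over the groups then gives the theorem.

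The verification within a group is a finite inequality in a few lexicographic parameters. For a pair it reduces to something like $(2x)(y)\le 2\cdot x\cdot y$-type balancing. For triples, I would check it case by case using $v_j\ge\sum_{k>j}v_k$.

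\textbf{Main obstacle.} I expect the main obstacle to be Step 3: defining the grouping so that every agent with a large $A^\VV/A^\BB$ ratio is matched with agents that compensate, and showing that groups of size three (not larger) always suffice. I would first try a greedy matching along the sorted top order, pairing agent $k$ with the agent receiving the block-2 copy that agent $k$'s $A^\VV$-bundle would have wanted. I would fall back on triples only when two heavy agents compete for the same compensator.
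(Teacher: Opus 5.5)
\paragraph{Gap 1: Step 3 is a plan, not a proof.} Your Steps 1, 2(a) and 2(b) essentially match the paper's opening moves. These are: align so agent $i$ holds $e_i$ in both allocations, then replace the denominator's lower blocks by the sorted pairing, which the paper further truncates to two blocks so that agent $i$ holds exactly $e_i,e_{n+i}$. The whole difficulty sits in Step 3, and the charging you sketch cannot close it as stated. If one compensator of ratio at most $1$ must absorb two heavy agents of ratio up to $2$ each, the naive bound for the triple is $4>2^{3/2}$. So falling back on triples is not something you can check by ``$2x\cdot y$ balancing.''

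The paper makes triples work only through structure your outline never produces. Write $V_j=\sum_{\ell\ge j}v_\ell$. After normalization the first block has three types $A,B,C$. An open pair consists of an $A$-agent that receives a leftover $g_3$ copy only in the denominator (ratio $\le 1$) and a $B$-agent that receives one only in the numerator (ratio $\le 2$). For each lower type $g_j$, the denominator recipients among the still-open agents form a prefix. Safe transfers (\Cref{lem:safe-transfer}) make the numerator recipients a suffix. Hence the open agents stay a contiguous interval, and the numerator-only gainers never outnumber, and lie to the right of, the denominator-only gainers.

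A triple is then an open pair whose $B$-representative \emph{also} receives $g_j$ in the denominator, plus a $C$-agent receiving $g_j$ only in the numerator. The two non-anchor ratios are jointly bounded by $\frac{v_2+v_3+V_{j+1}}{v_2+v_j}\cdot\frac{v_3+V_j}{v_3}\le\frac{14}{5}<2\sqrt2$. This uses $v_3\ge V_j$, $v_2\ge v_3+V_j$ and $V_{j+1}\le V_j/2$. Without this coupling, the prefix/suffix ordering and the interval invariant, nothing shows that groups of size at most three suffice.

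\paragraph{Gap 2: Step 2(c) rests on a false principle.} The Nash ratio is a quotient of two products of $n$ linear forms in the common values. Such a quotient is neither multilinear nor monotone. It can peak strictly inside the lexicographic cone, whose extreme rays are exactly your power-of-two chains. For example, $(2\lambda_1+\lambda_2)(\lambda_1+2\lambda_2)/(\lambda_1+\lambda_2)^2$ equals $2$ on both rays and $9/4$ at $\lambda_1=\lambda_2$.

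What does work is an agent-specific re-typing, which your alignment makes possible. Agent $i$ holds the same copy $e_i$ on both sides, so its ratio $(x+\alpha_i)/(x+\beta_i)$ with $x=\val(e_i)$ is increasing in $x$ if $\alpha_i<\beta_i$ and nonincreasing otherwise. The paper therefore replaces $e_i$ by a new type of huge value $L$ for agents who prefer $A^\BB$. For the remaining agents it lowers the intermediate first-block types to $v_t+\cdots+v_m$, where $g_t$ is the last type meeting the first block. Both changes preserve validity, block-balancedness and lexicographicity. This is what yields the three-class normal form on which the grouping above is built.
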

The details of the proof of~\Cref{thm:ratio_root_2} are deferred to~\Cref{sec:root_2_proof}. We sketch the proof outline below.

\paragraph{Outline of the proof.} We start with an arbitrary valid allocation $A^\VV$ and block-balanced allocation $A^\BB$ and make a sequence of allocation and valuation changes that always weakly increase the ratio $\rho = \WNash(A^\VV)/\WNash(A^\BB)$. We will refer to the allocation $A^\VV$ as the \emph{numerator} allocation and $A^\BB$ as the \emph{denominator} allocation. We first align the first block so that both allocations give agent $i$ the copied good $e_i$. We then show that the first block of goods can be assumed to have at most $3$ distinct types $g_1,g_2,g_3$, corresponding to $3$ mini-blocks $A,B,C$, as follows. First, we consider agents who prefer $A^\BB$ to $A^\VV$. For all such agents, we change the value of their top allocated good to a large value $L$. We call this the $L$-normalization step, and the set of goods with value $L$ forms the mini-block $A$. Then, for all the goods with value strictly less than $L$ and strictly greater than the value of the last good in the first block, we show that we can reduce their values to a common number while maintaining lexicographicity and increasing the ratio $\rho$. This set of goods forms the second mini-block $B$. Finally, all the remaining goods in the first block form the mini-block $C$ (see~\Cref{fig:mini-block-decomposition}). By a slight abuse of notation, we also use $A$, $B$, and $C$ to denote the corresponding sets of agents: agent $i$ belongs to $A$, $B$, or $C$ according to the mini-block containing its top good $e_i$.

\begin{figure}[ht]
\centering
\begin{tikzpicture}[x=1cm,y=1cm]
        \node[proofblock, fill=proofblue!13, minimum width=2.35cm] (A)
                at (1.175,0) {$A$\\ value $v_1=L$};
        \node[proofblock, fill=proofgreen!14, minimum width=2.65cm] (B)
                at (3.675,0) {$B$\\ value $v_2$};
        \node[proofblock, fill=prooforange!17, minimum width=2.1cm] (C)
                at (6.05,0) {$C$\\ value $v_3$};

        \node[proofblock, fill=prooforange!8, minimum width=1.6cm,
                minimum height=.58cm] (Gthree) at (8.05,0)
                {remaining\\ copies of $g_3$};

        \draw[proofarrow] (Gthree.east) -- ++(1.55,0)
                node[midway, above, font=\scriptsize] {other types}
                node[midway, below, font=\scriptsize] {$g_4,g_5,\ldots$};

        \draw[decorate, decoration={brace, mirror, amplitude=4pt}]
                (0,-.62) -- (7.1,-.62)
                node[midway, yshift=-14pt, font=\scriptsize]
                {first block after compression};
\end{tikzpicture}
\caption{The normalized first block: grouping into three mini-blocks $A$, $B$, and
$C$.}
\label{fig:mini-block-decomposition}
\end{figure}

Next, we discard from the denominator allocation all goods beyond the second block and sort the assignment of the second-block goods so that agent $i$ receives $e_{n+i}$. Both operations can only decrease the denominator Nash welfare and hence increase $\rho$. We then normalize the allocation of the remaining (second block) copies of $g_3$ in the numerator: these copies are assigned first to agents in $B$ and then to agents in $A$, with the recipients forming a prefix within each mini-block. To compare the two Nash products, we seek to partition the agents into groups of size at most $3$. We call a group $G$ \emph{safe} if $\prod_{i\in G}v(A^\VV_i)\leq 2^{|G|/2}\prod_{i\in G}v(A^\BB_i)$. An agent who receives a copy of $g_3$ from the second block in both allocations forms a safe singleton group. We then match each agent in $A$ who receives $g_3$ only in the denominator with an agent in $B$ who receives it only in the numerator. Each resulting pair is safe, but we keep it \emph{open} so that a third agent may later be added to it. We call the unassigned agents and the representatives of these open pairs the \emph{open agents}; these agents form a consecutive interval. We process the lower-valued types $g_4,g_5,\ldots$ one at a time. For each type, its recipients among the open agents form a prefix in the denominator allocation, while a sequence of safe transformations allows us to assume that they form a suffix in the numerator allocation. We show that all agents in the union of this prefix and suffix can be placed into safe groups of size at most $3$, while the remaining open agents continue to form a consecutive interval. Repeating this argument eventually eliminates all open agents in $C$, after which all remaining open singletons and pairs can also be safely finalized. We thus obtain a partition of the agents into safe groups. Multiplying the defining inequalities over all these groups and taking $n$-th roots gives $\rho\leq\sqrt{2}$. Since every preceding transformation weakly increased $\rho$, the same bound holds for the original pair of allocations.~\Cref{fig:proof-roadmap} summarizes the proof roadmap.

\begin{figure}[ht]
\centering
\resizebox{0.98\textwidth}{!}{%
\begin{tikzpicture}[
        node distance=.38cm,
        prooflabel/.style={font=\tiny, fill=white, inner sep=1pt}
]
        \node[proofstage, fill=black!4, minimum width=2.35cm] (start)
                {\textbf{1. Start}\\ arbitrary pair\\ ratio $\rho$};
        \node[proofstage, fill=proofblue!8, minimum width=2.4cm,
                right=of start] (align)
                {\textbf{2. Align}\\ agent $i$ gets $e_i$\\ in both allocations};
        \node[proofstage, fill=proofblue!8, minimum width=2.65cm,
                right=of align] (normalize)
                {\textbf{3. Normalize}\\ raise the top goods\\ of $H$ to $L$};
        \node[proofstage, fill=proofgreen!10, minimum width=2.55cm,
                right=of normalize] (compress)
                {\textbf{4. Compress}\\ first block becomes\\ $A,B,C$};
        \node[proofstage, fill=proofgreen!10, minimum width=3.15cm,
                right=of compress] (denominator)
                {\textbf{5. Sort denominator}\\ keep its first two blocks;\\ pair $e_i$ with $e_{n+i}$};

        \draw[proofarrow] (start) -- (align);
        \draw[proofarrow] (align) -- (normalize);
        \draw[proofarrow] (normalize) -- (compress);
        \draw[proofarrow] (compress) -- (denominator);

        \node[proofstage, fill=prooforange!12, minimum width=4.55cm,
                minimum height=1.55cm, below=1.05cm of denominator] (gthree)
                {\textbf{6. Process remaining $g_3$}\\
                make initial groups:\\
                singletons and open pairs};
        \node[proofstage, fill=prooforange!12, minimum width=5.45cm,
                minimum height=1.55cm, left=.55cm of gthree] (lower)
                {\textbf{7. Process $g_j$, $j\ge4$}\\
                Close the union of a prefix\\
                and a suffix of the open agents};
        \node[proofstage, fill=proofred!10, minimum width=3.75cm,
                minimum height=1.55cm, left=.55cm of lower] (finish)
                {\textbf{8. Finish}\\
                no open $C$-agent: close all groups\\
                multiply safe bounds: $\rho\le\sqrt2$};

        \draw[proofarrow] (denominator) --
                node[prooflabel,right=1pt] {$\rho\uparrow$} (gthree);
        \draw[proofarrow] (gthree) -- (lower);
        \draw[proofarrow] (lower) -- (finish);
        \draw[proofarrow] (lower.south east)
                to[out=-55,in=-125,looseness=1.25]
                node[prooflabel,below=1pt]
                {next lower type while an open $C$-agent remains}
                (lower.south west);
\end{tikzpicture}
}%
\caption{Roadmap of the proof of~\Cref{thm:ratio_root_2} The first row transforms the two allocations
into a normal form without decreasing $\rho$. The second row constructs the
safe groups: the remaining $g_3$ copies create the initial singletons and open pairs, and each lower type certifies a denominator prefix against a numerator suffix.}
\label{fig:proof-roadmap}
\end{figure}

\section{Algorithms for Structured Subclasses}
\label{sec:Results_Subclasses}

In this section, we present our exact and special-case approximation algorithms. We will find it convenient to express our results in terms of $n$ increasing concave functions $h_1, \dots, h_n$ and compute an allocation $A$ that maximizes $\WH(A) \coloneqq \sum_{i \in N} h_i(v_i(A_i))$. As discussed in \Cref{sec:Preliminaries}, for an appropriate choice of $h_i$'s, we recover the weighted Nash and $p$-mean welfare measures. 

For simplicity, we assume that every agent assigns a strictly positive value to every good. Our algorithms extend directly to instances in which some agent--good pairs have value $0$ since an optimal allocation never allocates a good to an agent who values it at $0$ (unless the good is valued at $0$ by all the agents). Hence, during the execution of our algorithm, we can simply avoid allocating a good to an agent who values it at $0$. Any good valued at $0$ by all agents can be assigned to an arbitrary agent at the end of the algorithm without affecting the Nash welfare of the resulting allocation. Similarly, even though our algorithms (with tie breaking rules) work for increasing concave functions, we shall assume that the functions $h_i$ are strictly increasing and strictly concave for the sake of exposition. Note that the functions discussed in \Cref{sec:Preliminaries} are indeed strictly increasing and strictly concave. 

Recall that, for a function $f$, its marginal function is defined as $\Delta_f(x, y) = f(x + y) - f(x)$. For brevity, we will write $\Delta_{i}(x,y)$ instead of $\Delta_{h_i}(x,y)$ for the marginal function of $h_i$. As discussed in the case of Nash welfare, when $h_i(0)$ is not defined, it will be treated as $-\infty$. Since the functions $h_i$ are strictly increasing and strictly concave, we have:
$$\Delta_i(x, y) > \Delta_i(x', y) \qquad \forall 0 \leq x < x', y > 0 \qquad \text{and}$$
$$\Delta_i(x, y) < \Delta_i(x, y') \qquad \forall y < y', x \ge 0$$

In order to prune the search space of our algorithms, we will use the following domination condition, that determines if it is sub-optimal to allocate a good to a particular agent given the current partial allocation of goods.

\paragraph{Domination with respect to good $g$.} Consider a partial allocation $A = (A_1, A_2, \ldots A_n)$, where a good $g$ is unallocated. Let $Y = M \setminus (A_1 \cup \ldots \cup A_n \cup \{g\})$ be the set of unallocated goods other than $g$. We say that \emph{agent $k$ dominates agent $i$ with respect to good $g$} under the partial allocation $A$, if allocating good $g$ to agent $i$ and all remaining goods to agent $k$ results in a lower objective value than allocating all the remaining goods to agent $k$. Formally,
\begin{equation*}
    h_i(v_i(A_i \cup \{g\})) + h_k(v_k(A_k \cup Y)) < h_i(v_i(A_i)) + h_k(v_k(A_k \cup Y \cup \{g\}))
\end{equation*}

Since the functions $h_i$ are concave, the affinity for good $g$ is higher for an agent when they have a lower current utility. Informally, the domination condition states that, even in the worst case, when agent $i$ has as low a utility as possible (does not get anything from $Y$) and agent $k$ has as high a utility as possible (gets everything from $Y$), it is still better to allocate good $g$ to agent $k$ than to agent $i$. If $h_i(0)$ is undefined we treat it as $-\infty$ and any such agent with $v_i(A_i) = 0$ (or equivalently $A_i = \emptyset$, as all goods are positively valued by all agents) is not dominated by any other agent. Similarly, if $Y = \emptyset$, any agent $k$ with $h_k(0)$ undefined and $A_k = \emptyset$ dominates every agent with a non-empty bundle. In terms of the corresponding marginal functions, the domination condition is equivalent to:
\begin{equation}
    \Delta_i(v_i(A_i), v_i(g)) < \Delta_k(v_k(A_k) + v_k(Y), v_k(g)).
    \label{eqn:domination_condition}
\end{equation}

The following lemma, which we prove in~\Cref{sec:algo_appendix},
will allow us to prune certain branches of the recursion tree of our algorithm based on the domination condition, without losing optimality.

\begin{restatable}[\textbf{Domination lemma}]{lemma}{DominationLemmaMain}
    Consider a partial allocation $A = (A_1, A_2, \ldots A_n)$, where a good $g$ is unallocated. Let $Y = M \setminus (A_1 \cup \ldots \cup A_n \cup \{g\})$ be the set of unallocated goods other than $g$. If agent $k$ dominates agent $i$ with respect to good $g$ under the partial allocation $A$, then any completion of $A$ that allocates good $g$ to agent $i$ cannot be optimal.
    \label{lem:domination}
\end{restatable}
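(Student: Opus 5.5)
Take any completion $B$ of $A$ with $g \in B_i$; we exhibit a completion $B'$ with $\WH(B') > \WH(B)$. The natural candidate is to move $g$ from agent $i$ to agent $k$: set $B'_i = B_i \setminus \{g\}$, $B'_k = B_k \cup \{g\}$, and $B'_\ell = B_\ell$ for all other $\ell$. Then $B'$ is again a completion of $A$, and only the terms of agents $i$ and $k$ change.

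Write $x_i = v_i(B_i \setminus\{g\})$ and $x_k = v_k(B_k)$. Since $B$ completes $A$ and $g \in B_i$, we have $A_i \subseteq B_i\setminus\{g\}$ and $A_k \subseteq B_k \subseteq A_k \cup Y$. By additivity and nonnegativity of values, this gives $x_i \ge v_i(A_i)$ and $x_k \le v_k(A_k)+v_k(Y)$. The welfare change is
\[
\WH(B') - \WH(B) = \Delta_k(x_k, v_k(g)) - \Delta_i(x_i, v_i(g)).
\]
Since $\Delta_f(\cdot,y)$ is non-increasing for concave $f$, we get
\[
\Delta_k(x_k, v_k(g)) \ge \Delta_k(v_k(A_k)+v_k(Y), v_k(g)) > \Delta_i(v_i(A_i), v_i(g)) \ge \Delta_i(x_i, v_i(g)),
\]
where the strict middle inequality is exactly the domination condition \eqref{eqn:domination_condition}. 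Hence $\WH(B') > \WH(B)$, so $B$ is not optimal.

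The only real obstacle is the bookkeeping for infinite values when $h_\ell(0) = -\infty$, since the marginal form \eqref{eqn:domination_condition} may then involve $\infty - \infty$. I would therefore work with the original sum form of the domination inequality and split into cases.

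\begin{itemize}
\item If $A_i = \emptyset$, the paper's convention says $i$ is never dominated, so there is nothing to prove.
\item If $\WH(B) = -\infty$, then either $B$ is trivially suboptimal because some allocation has finite welfare (assumed, since $m \ge n$ and all values are positive), or we only need $B'$ to be finite.
\item The remaining delicate case is $x_k = 0$, which forces $A_k = \emptyset$ and $B_k = \emptyset$. Then $\WH(B) = -\infty$ while $B'$ gives $k$ positive utility; provided $B_i \setminus\{g\}$ is nonempty (true since $A_i \neq \emptyset$), $B'$ is finite or at least strictly better.
\end{itemize}

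With these conventions the inequality chain above remains valid, and the lemma follows.
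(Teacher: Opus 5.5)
Your proof is correct and is the paper's argument: move $g$ from $i$ to $k$, write the welfare change as $\Delta_k(v_k(B_k),v_k(g))-\Delta_i(v_i(B_i\setminus\{g\}),v_i(g))$, then combine $A_i\subseteq B_i\setminus\{g\}$, $B_k\subseteq A_k\cup Y$, the fact that marginals are non-increasing in their first argument, and the domination inequality. Your case analysis for $h(0)=-\infty$ is extra (the paper ignores it); reading ``optimal'' globally, your second case already covers the third, since any $B$ with $\WH(B)=-\infty$ loses to an allocation that gives every agent a good ($m\ge n$), whereas the third case's claim that $B'$ is strictly better can fail when another agent's bundle is also empty.
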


\subsection{Warm-Up: Identical Valuations and Uniform \texorpdfstring{$h_i$}{hi} Functions}
\label{sec:identical}
In this section, we present a simple polynomial-time algorithm for maximizing $\WH$ when all agents share an identical valuation function $v$ and the functions $h_i$ are identical across agents, i.e., $h_i = f$ for all $i \in N$, where $f$ is a strictly increasing and strictly concave function. We make the following observation about the optimal solution, which we prove in~\Cref{sec:appendix_identical}.

\begin{restatable}{lemma}{IdenticalCardinality}
    In any optimal allocation, any agent who has two or more goods must have a lower utility than all other agents.
    \label{lem:identical}
\end{restatable}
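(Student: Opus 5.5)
We argue by contradiction using a single-good exchange, which only needs additivity, the lexicographic structure of the common valuation $v$, and strict concavity of the common function $f$. Fix an optimal allocation $A$ and an agent $i$ with $|A_i|\ge 2$. Let $g$ be the least valuable good in $A_i$ under the common ranking $g_1\succ g_2\succ\cdots\succ g_m$. Suppose some other agent $k\neq i$ has $v(A_k)\le v(A_i)$; we will show that moving a suitable good improves $\WH$.

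\textbf{Key step 1: comparing $v(A_k)$ with $v(A_i\setminus\{g\})$.} Since $|A_i|\ge 2$, the bundle $A_i\setminus\{g\}$ is nonempty and contains a good $g'$ ranked above $g$. We claim $v(A_k) < v(A_i\setminus\{g\})$ or a direct contradiction arises.

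\begin{itemize}[leftmargin=*]
\item Bundles are compared by the highest-ranked good in their symmetric difference, and the ranking is strict. So $v(A_k)\le v(A_i)$ means the top good of $A_i\,\triangle\,A_k$ lies in $A_i$. As the bundles are disjoint, this is simply the top good of $A_i\cup A_k$, which must lie in $A_i$.
\item Hence the best good of $A_i$, call it $g^*$, beats every good of $A_k$. Since $|A_i|\ge 2$ we have $g^*\neq g$.
\item By the lexicographic property, $v(A_k)\le v(\tail(g^*))$-type sums are strictly below $v(g^*)\le v(A_i\setminus\{g\})$.
\end{itemize}

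So in fact $v(A_k)<v(A_i\setminus\{g\})$. (The case $A_k=\emptyset$ is fine, and if $f(0)=-\infty$ it is even more immediate.)

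\textbf{Key step 2: the exchange.} Move $g$ from $i$ to $k$, and let $x=v(A_i\setminus\{g\})$, $y=v(A_k)<x$, $\delta=v(g)>0$. The change in $\WH$ is
\[
\Delta_f(y,\delta)-\Delta_f(x,\delta),
\]
which is strictly positive because $\Delta_f(\cdot,\delta)$ is strictly decreasing under strict concavity (as noted in the preliminaries of this section). This contradicts optimality.

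Therefore every other agent $k$ has $v(A_k)>v(A_i)$, which is exactly the claim.

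The main obstacle is the careful use of the lexicographic property in Key step 1. Working with a weak inequality $v(A_k)\le v(A_i)$ requires applying the strict superincreasing condition to get the strict inequality $v(A_k)<v(A_i\setminus\{g\})$. One must also handle edge cases such as $A_k=\emptyset$ and an undefined $f(0)$. If needed, I would instead argue directly: every good of $A_k$ is ranked below $g^*$, so $v(A_k)<v(g^*)\le x$ by the lexicographic inequality applied at $g^*$.
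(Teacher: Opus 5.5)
Your proof is correct and is essentially the paper's argument. The paper also removes a non-top good from $A_i$, uses the lexicographic property to get $v(A_i\setminus\{g'\})\ge v(g_i)>v(A_k)$, and concludes via strict concavity that moving that good to $k$ strictly increases $\WH$; the only difference is that the paper phrases this single exchange as an application of the domination lemma with $Y=\emptyset$, whereas you compute $\Delta_f(y,\delta)-\Delta_f(x,\delta)>0$ directly (and your choice of the least valuable good is inessential, since any good other than $g^*$ works).
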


The above lemma completely characterizes the optimal solution:
\begin{restatable}{corollary}{IdenticalStructure}
    In any optimal allocation, there exists an agent who receives exactly the worst $m - n + 1$ goods, and the best $n - 1$ goods are allocated one each to the remaining $n - 1$ agents.
    \label{cor:id_st}
\end{restatable}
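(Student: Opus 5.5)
The plan is to derive \Cref{cor:id_st} from \Cref{lem:identical} together with two simple structural facts: every agent receives at least one good, and optimality forces an exchange property on the singleton bundles. We work with the identical lexicographic valuation $v$ and goods indexed $g_1 \succ \dots \succ g_m$.

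\textbf{Step 1: every agent is nonempty, and exactly one agent has more than one good.} Since $m \ge n$, an allocation leaving some agent empty can be improved. Take a good from an agent holding at least two goods, which must exist by pigeonhole, and give it to the empty agent. Because $f$ is strictly increasing, this strictly increases $\WH$; when $f(0)=-\infty$ the improvement is immediate. So in an optimal allocation all bundles are nonempty. Now suppose two distinct agents $i$ and $k$ both hold two or more goods. \Cref{lem:identical} would give $v(A_i) < v(A_k)$ and $v(A_k) < v(A_i)$, a contradiction. Hence at most one agent, call it $a$, holds several goods. If $m=n$, every agent holds exactly one good and the statement is trivial. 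Otherwise $a$ exists, holds $m-n+1$ goods, and each of the other $n-1$ agents holds exactly one good.

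\textbf{Step 2: the singletons are the top $n-1$ goods.} By \Cref{lem:identical}, $v(A_a) < v(\{g\})$ for every singleton good $g$ held by another agent. Suppose some good $g_j \in A_a$ ranks above some singleton good $g_\ell$, that is, $j<\ell$. Then $v(A_a) \ge v(g_j) > v(g_\ell)$, which contradicts the previous inequality. Therefore every good in $A_a$ ranks below every singleton good. So the singletons are exactly $g_1,\dots,g_{n-1}$, one per agent, and $a$ receives the worst $m-n+1$ goods.

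\textbf{Where the work lies.} The main point needing care is Step 1's use of \Cref{lem:identical} in the case of two multi-good agents. The lemma's strict inequality against \emph{all} other agents rules this out directly. Apart from that, the argument only needs to handle the degenerate $f(0)=-\infty$ convention when showing that no bundle is empty.
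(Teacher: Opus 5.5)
Your proof is correct and is essentially the paper's: \Cref{lem:identical} rules out two multi-good agents, and the single multi-good agent cannot hold any of the top $n-1$ goods. You compare its top good directly against the singleton goods, while the paper argues via an agent who misses the top $n-1$ goods; this is the same idea. One small repair: in your nonemptiness step, strict monotonicity of $f$ alone does not make the transfer to the empty agent beneficial, since for convex $f$ it can decrease welfare and what you actually need is strict concavity. You can also drop that step entirely, because an empty agent has utility $0$, which is below that of any multi-good agent, so \Cref{lem:identical} already excludes it.
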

We include the proof of~\Cref{cor:id_st} in~\Cref{sec:appendix_identical}. Thus, we get the following result:
\begin{proposition}
    When all agents have identical lexicographic valuation functions and the concave functions $h_i$ are identical across agents and computable in polynomial time, there is a polynomial-time algorithm to compute an optimal allocation that maximizes $\WH = \sum_{i \in N} h_i(v_i(A_i))$.
    \label{prop:identical-lexicographic}
\end{proposition}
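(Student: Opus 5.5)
The plan is to use \Cref{cor:id_st} directly. The corollary pins down the structure of every optimal allocation: a single agent receives exactly the worst $m-n+1$ goods $g_n,\dots,g_m$, and the best $n-1$ goods $g_1,\dots,g_{n-1}$ go one each to the other $n-1$ agents. Because all agents share the same valuation $v$, the goods can be sorted once by $v$ in $O(m\log m)$ time. This is well defined because the lexicographic condition makes all values distinct.

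Since the $h_i$ are all equal to the same $f$, the objective $\WH(A)=\sum_i f(v(A_i))$ is symmetric under permuting the agents. Hence every allocation with the structure of \Cref{cor:id_st} has the same objective value, namely
\[
\sum_{j=1}^{n-1} f(v(g_j)) + f\Bigl(\sum_{j=n}^{m} v(g_j)\Bigr).
\]
The algorithm therefore outputs the following allocation: give $g_j$ to agent $j$ for $j\in[n-1]$, and give $\{g_n,\dots,g_m\}$ to agent $n$.

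Correctness goes as follows. An optimal allocation exists because there are finitely many allocations. By \Cref{cor:id_st}, it has the above form up to relabeling the agents, and by symmetry its value equals that of our output, so our output is optimal. The running time is a sort plus $n$ evaluations of $f$ on sums of input values, which is polynomial since $f$ is assumed to be polynomial-time computable. We assume $m\ge n$ as in \Cref{sec:Preliminaries}. Actually, the algorithm does not even need to evaluate $f$, since the structure alone determines the output.

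The only step requiring any care is the symmetry argument. It needs identical $h_i$, and that is exactly the hypothesis of the statement. All the substantive work sits in \Cref{lem:identical} and \Cref{cor:id_st}, which we take as given, so there is no real obstacle here.
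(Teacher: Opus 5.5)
Your proof is correct and follows the paper's route: the proposition is read off directly from \Cref{cor:id_st} by outputting an allocation of that form. Your symmetry remark, that all such allocations have equal value because the $h_i$ coincide, just makes explicit a step the paper leaves implicit.
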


\subsection{Ordered Lexicographic Valuations}\label{subsec:Results_Ordered_Nash}

In this subsection, we prove that for ordered lexicographic valuations, there exists a polynomial-time algorithm to compute a weighted Nash optimal allocation when the number of agents is constant. In fact, our algorithm works for any increasing concave functions of the utilities of the agents, and in particular applies to weighted $p$-mean welfare for any $p < 1$. Note that we allow these increasing concave functions to depend on the specific instance being solved as long as they are computable in polynomial time. Formally, our algorithm makes $\O((m + 1)^{\binom{n}{2}} \cdot \text{poly}(n, m))$ calls to the concave functions.

\begin{restatable}
{theorem}{OrderedConstantnWeighted}
For any constant positive integer $n$, there exists a polynomial-time algorithm that, given an instance with $n$ agents having ordered lexicographic valuations, along with increasing concave functions $h_1, h_2, \ldots, h_n$ of the utilities of the agents which can be computed in polynomial time, returns an allocation $A$ that maximizes $\WH(A) = \sum_{i \in N} h_i(v_i(A_i))$.
\label{thm:Ordered_Constant_n_Weighted}
\end{restatable}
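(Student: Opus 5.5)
We propose a recursive branch-and-prune algorithm that allocates the goods $g_1,g_2,\dots,g_m$ in the common priority order. A node of the recursion tree is a partial allocation $A$ of $g_1,\dots,g_{j-1}$. At this node, the algorithm considers giving $g_j$ to each agent $i$ and keeps only the branches for which no agent $k$ dominates $i$ with respect to $g_j$ under $A$, in the sense of \eqref{eqn:domination_condition} with $Y=\tail(j)$. At the leaves, all goods are allocated, and the algorithm returns the leaf with the largest $\WH$.

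Correctness follows from \Cref{lem:domination} by induction along the path of an optimal allocation $A^*$. At each node that agrees with $A^*$ on $g_1,\dots,g_{j-1}$, the branch assigning $g_j$ to its owner in $A^*$ is not dominated; otherwise $A^*$, a completion of that branch, could not be optimal. Hence $A^*$ appears as a leaf. Ties can be handled by the strict-inequality form of domination, so at least one branch always survives. Each node costs $\poly(n,m)$ oracle calls, so it suffices to bound the number of leaves by $\O((m+1)^{\binom n2})$.

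For the leaf bound, the key step is a \emph{mutual-exclusion} lemma. Take two siblings at depth $j$ with $g_j\to i$ in one and $g_j\to k$ in the other, and a later good $g_{j'}$, $j'>j$. The claim is that either no node in the first subtree branches $g_{j'}$ to $i$ among several options, or no node in the second subtree branches $g_{j'}$ to $k$. The argument uses the lexicographic structure. In the first subtree, agent $i$'s utility is at least $v_i(A_i)+v_i(g_j)>v_i(A_i)+v_i(\tail(j))$, while $k$'s utility is at most $v_k(A_k)+v_k(\tail(j))<v_k(A_k)+v_k(g_j)$; the symmetric statements hold in the second subtree. Combine this with the monotonicity of $\Delta$ in both arguments, and with the fact that the tail value $v(\tail(j'))<v(g_{j'})$. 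Then the non-domination inequality between $i$ and $k$ needed to keep $g_{j'}\to i$ alive in the first subtree, and the one needed to keep $g_{j'}\to k$ alive in the second, combine into a contradiction with the pair of inequalities that made both $g_j$ branches undominated.

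We then record, for each subtree, a binary matrix whose rows are goods and whose columns are agents, with a $1$ in entry $(g,i)$ when $g$ may still be branched to $i$. Mutual exclusion says that for every pair of columns $i,k$, the branching structure lets a "conflict" between $i$ and $k$ occur only in one of the siblings. Following the leaf-counting lemma (\Cref{lem:leaf_bound}), we bound the number of leaves by a product over the $\binom n2$ pairs of agents of the number of positions, at most $m+1$, at which that pair's first split can occur. This is proved by induction on the depth, charging each branching between $i$ and $k$ to the pair $\{i,k\}$ and showing that each pair is charged along at most one chain per leaf.

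The main obstacle is the mutual-exclusion lemma. It must hold for arbitrary later nodes, after both subtrees have evolved differently. The plan is to first prove it for the immediate pair $(i,k)$ using only monotone bounds on the utilities, and then to verify that those bounds are preserved under any further allocation. This works because goods after $g_j$ contribute less than $g_j$ for every agent, by the ordered lexicographic condition.
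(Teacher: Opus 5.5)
Your algorithm, its correctness argument, and the overall strategy (mutual exclusion between sibling subtrees feeding a matrix-based leaf count) match the paper. However, the running-time bound, which is the real content of the theorem, is not established. You gesture at \Cref{lem:leaf_bound}, but the bound you actually state is different: ``a product over pairs of the number of positions at which that pair's first split can occur,'' proved by a charging scheme in which ``each pair is charged along at most one chain per leaf.'' This is not an argument, and taken literally it fails: for $n=2$ every leaf lies below the same first branching node, so first-split positions cannot tell leaves apart.

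The paper's induction uses the potential $\prod_{r<s}(1+M(A)^r\cdot M(A)^s)$, where $M(A)^r\cdot M(A)^s$ counts the goods given to $r$ somewhere and to $s$ somewhere below $A$. It needs two steps you do not supply. Let $S$ be the surviving agents at $A$, $C_i$ the child giving the current good to $i$, and $M'(A)$ the matrix without that good's row.
\begin{itemize}
\item First, mutual exclusion must be made quantitative. Row by row it gives $M(C_i)^i\cdot M'(A)^k+M(C_k)^k\cdot M'(A)^i\le M'(A)^i\cdot M'(A)^k$ for $i,k\in S$. Since the dropped row contributes exactly $1$ to $M(A)^i\cdot M(A)^k$, this reads $\eta_{i,k}+\eta_{k,i}\le 1$ for $\eta_{i,k}:=(1+M(C_i)^i\cdot M'(A)^k)/(1+M(A)^i\cdot M(A)^k)$ (the paper's $h_{i,k}$). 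In words, the two siblings split the pair's budget.
\item Second, for $|S|\ge 3$ these pairwise splits must be aggregated into $\sum_{i\in S}\prod_{k\in S\setminus\{i\}}\eta_{i,k}\le 1$, which does not follow termwise. The paper orients each edge of the complete graph on $S$ independently ($i\to k$ with probability $\eta_{i,k}$) and notes that at most one vertex can be a source. Equivalently, the children's bounds fit as pairwise disjoint sub-boxes of the grid $\prod_{r<s}\{0,\dots,M(A)^r\cdot M(A)^s\}$, with the boxes of $C_i$ and $C_k$ separated in the $\{i,k\}$ coordinate.
\end{itemize}

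Your mutual-exclusion sketch has the right ingredients but misplaces the contradiction. Neither the non-domination inequalities that kept both $g_j$-branches alive nor $v(\tail(j'))<v(g_{j'})$ plays any role. Let $P$ and $P'$ be nodes in the $i$- and $k$-subtrees where $e=g_{j'}$ is decided, and let $T,T'$ be the other goods still unallocated there. Your utility bounds give $v_i(P'_i)+v_i(T')<v_i(P_i)$ and $v_k(P_k)+v_k(T)<v_k(P'_k)$. The two later non-domination inequalities then close a cycle by themselves:
\[
\begin{aligned}
\Delta_i\bigl(v_i(P'_i)+v_i(T'),v_i(e)\bigr)
&>\Delta_i\bigl(v_i(P_i),v_i(e)\bigr)
\ge\Delta_k\bigl(v_k(P_k)+v_k(T),v_k(e)\bigr)\\
&>\Delta_k\bigl(v_k(P'_k),v_k(e)\bigr)
\ge\Delta_i\bigl(v_i(P'_i)+v_i(T'),v_i(e)\bigr),
\end{aligned}
\]
which is a contradiction. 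This needs each $\Delta_r(\cdot,y)$ to be \emph{strictly} decreasing, i.e.\ strict concavity, as the paper assumes. With merely concave $h_r$ (say linear $h_r$ and identical valuations), nothing is ever pruned and the tree has $n^m$ leaves. So your remark that the strict form of domination keeps some branch alive does not settle ties for the running time.
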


Recall that, under ordered lexicographic valuations, there exists a permutation of goods\\$(g_1,g_2,\dots,g_m)$ such that for every $j \in [m-1]$ and every agent $i \in N$, we have $v_i(g_{j}) > v_i(g_{j+1})+v_i(g_{j+2})+\ldots+v_i(g_m)$. Our algorithm assigns goods one by one in the order $g_1, g_2, \ldots, g_m$. For any good $g_j$ there are $n$ options for the agent to whom it can be allocated, leading to a branching factor of $n$ in the recursion tree. However,~\Cref{lem:domination} will help us prune many of these branches: we never assign $g_j$ an agent that is dominated by some other agent with respect to $g_j$ under the current partial allocation. For each agent that is not dominated by any other agent, we allocate $g_j$ to that agent and then make a recursive call for the allocation of $g_{j+1},\ldots g_m$. The pseudo-code of our recursive algorithm(\textsc{OptimalCompletion}) that takes as input a partial allocation $A$ of the goods $g_1, \ldots g_{j-1}$ is presented in \Cref{alg:OptimalCompletion} and returns an optimal completion of $A$. The overall optimal allocation will thus be computed by calling $\textsc{OptimalCompletion}\big((\emptyset,\dots,\emptyset), 1\big)$.

\begin{algorithm}[t]
    \caption{\textsc{OptimalCompletion}}
    \label{alg:OptimalCompletion}
    \DontPrintSemicolon
    \textbf{Global:} Instance $\langle N, M, \V \rangle$ with ordered lexicographic valuations and concave functions $h_1, h_2, \ldots, h_n$.\;
    \KwIn{Partial allocation $A = (A_1,\dots,A_n)$ of goods $g_1,\dots,g_{j-1}$.}
    \KwOut{A completion $A^\star$ of $A$ that maximizes $\WH(C)$ among all completions $C$ of $A$.}
    \If{All the goods are allocated in $A$}{
        \Return $A$ \tcc*{$A$ is already a complete allocation.}
    }
    $A^\star \gets $ an arbitrary completion of $A$. \tcc*{best completion found so far}

    \For{$i \in N$}{
        \If{agent $i$ is not dominated by any agent w.r.t.\ good $g_j$ under $A$}{
            $B \gets A$\\
            $B_i \gets B_i \cup \{g_j\}$ \tcc*{Allocate good $g_j$ to agent $i$}
            $C \gets \textsc{OptimalCompletion}(B, j+1)$\;
            \If{$\WH(C) > \WH(A^\star)$}{
                $A^\star \gets C$\;
            }
        }
    }

    \Return $A^\star$
\end{algorithm}

The correctness of our algorithm follows from \Cref{lem:domination}, which says that we only prune those branches that lead to (strictly) sub-optimal allocations. We will now analyze its running time. To do so, we will consider the recursion tree of \textsc{OptimalCompletion}. Each node of the recursion tree corresponds to a call to \textsc{OptimalCompletion} with some partial allocation $A$. We will label each node with the corresponding partial allocation. The children of a node correspond to the recursive calls made in the for-loop of \Cref{alg:OptimalCompletion}. The depth of the recursion tree is $m$, where the root is at depth $0$ and leaves are at depth $m$. The branching factor at any node is at most $n$, but can be much smaller due to pruning via domination. Consider the recursion tree (Figure~\ref{fig:recursion_tree_example}) for the following instance in the case of Nash welfare ($h_i(x) = \log(x)$ for all agents $i$).

\begin{example}
Consider an instance with $m=4$ goods and $n=2$ agents with the following lexicographic valuations: 
\begin{table}[H]
    \centering
    \begin{tabular}{c|cccc}
    & $g_1$ & $g_2$ & $g_3$ & $g_4$ \\
    \hline
    $a_1$ & 8 & 4 & 2 & 1 \\
    $a_2$ & 19 & 13 & 3 & 2
    \end{tabular}
    \label{tab:example_valuations}
\end{table}

\label{exmpl:example_valuations}
\end{example}

\begin{figure}[t]
\centering
\scalebox{0.79}{%
    \usetikzlibrary{arrows.meta, positioning, decorations.markings, calc}
\begin{tikzpicture}[
    my node/.style={circle, draw=black, inner sep=2pt, font=\small, minimum size=1.3cm},
    leaf node/.style={my node, fill=green!30},
    blue edge/.style={->, >=latex, draw=blue, thick},
    orange edge/.style={->, >=latex, draw=orange, thick},
    cross/.style={decoration={markings, mark=at position 0.6 with {\node[red!80!black, font=\bfseries, scale=1.8] {X};}}, postaction={decorate}},
    level line/.style={dotted, thick, gray},
    dominates label/.style={font=\footnotesize, align=center}
]

\node[my node] (root) at (0, 0) {$\frac{\emptyset}{\emptyset}$};

\node[my node] (l1_left) at (-5, -3) {$\frac{g_1}{\emptyset}$};
\node[my node] (l1_right) at (5, -3) {$\frac{\emptyset}{g_1}$};

\node[my node] (l2_center) at (-3, -6) {$\frac{g_1}{g_2}$};
\node[my node] (l2_right) at (3.5, -6) {$\frac{g_2}{g_1}$};

\node[my node] (l3_center_left) at (-5.5, -9) {$\frac{g_1, g_3}{g_2}$};
\node[my node] (l3_center_right) at (-1.5, -9) {$\frac{g_1}{g_2, g_3}$};
\node[my node] (l3_right_left) at (2.5, -9) {$\frac{g_2, g_3}{g_1}$};

\node[leaf node] (l4_leaf) at (-4, -12) {$\frac{g_1, g_3}{g_2, g_4}$};
\node[my node] (l4_center_left) at (-2.5, -12) {$\frac{g_1, g_4}{g_2, g_3}$};
\node[my node] (l4_center_right) at (-0.5, -12) {$\frac{g_1}{g_2, g_3, g_4}$};
\node[my node] (l4_right_left) at (1.5, -12) {$\frac{g_2, g_3, g_4}{g_1}$};

\draw[blue edge] (root) -- (l1_left);
\draw[orange edge] (root) -- (l1_right);

\coordinate (l2_left_cross_end) at (-7, -6);
\draw[blue edge, cross] (l1_left) -- (l2_left_cross_end) node[midway, left=0.3cm, dominates label] {$a_2$ dominates $a_1$};
\draw[orange edge] (l1_left) -- (l2_center);

\draw[blue edge] (l1_right) -- (l2_right);
\coordinate (l2_right_cross_end) at (7, -6);
\draw[orange edge, cross] (l1_right) -- (l2_right_cross_end) node[midway, right=0.3cm, dominates label] {$a_1$ dominates $a_2$};

\draw[blue edge] (l2_center) -- (l3_center_left);
\draw[orange edge] (l2_center) -- (l3_center_right);

\draw[blue edge] (l2_right) -- (l3_right_left);
\coordinate (l3_right_cross_end) at (5.5, -9);
\draw[orange edge, cross] (l2_right) -- (l3_right_cross_end) node[midway, right=0.3cm, dominates label] {$a_1$ dominates $a_2$};

\coordinate (l4_left_cross_end) at (-7, -12);
\draw[blue edge, cross] (l3_center_left) -- (l4_left_cross_end) node[midway, left=0.3cm, dominates label] {$a_2$ dominates $a_1$};
\draw[orange edge] (l3_center_left) -- (l4_leaf);

\draw[blue edge] (l3_center_right) -- (l4_center_left);
\draw[orange edge] (l3_center_right) -- (l4_center_right);

\draw[blue edge] (l3_right_left) -- (l4_right_left);
\coordinate (l4_right_cross_end) at (4.5, -12);
\draw[orange edge, cross] (l3_right_left) -- (l4_right_cross_end) node[midway, right=0.3cm, dominates label] {$a_1$ dominates $a_2$};

\foreach \y/\subscript in {-1.5/1, -4.5/2, -7.5/3, -10.5/4} {
    \draw[level line] (-8, \y) -- (8, \y);
    \node[fill=white, inner sep=3pt, font=\large] at (0, \y) {$g_{\subscript}$};
}

\end{tikzpicture}
}
    
    \caption{Recursion tree for \textsc{OptimalCompletion} on the instance in \Cref{exmpl:example_valuations}, for the case of unweighted Nash welfare. Each node displays the partial allocation $\frac{A_1}{A_2}$, where $A_1$ and $A_2$ are the bundles assigned to agents $a_1$ and $a_2$, respectively. Blue edges represent allocating the current good to agent $a_1$, and orange edges represent allocating it to agent $a_2$. The horizontal dotted lines separate levels of the tree, with each level corresponding to the allocation of a particular good (labeled $g_1, g_2, g_3, g_4$ from top to bottom). Edges marked with an ``X'' indicate pruned branches where the domination condition (\Cref{lem:domination}) applies, along with the dominating agent. The green leaf node represents the Nash optimal allocation $A_1 = \{g_1, g_3\}$ and $A_2 = \{g_2, g_4\}$.}
    \label{fig:recursion_tree_example}
\end{figure}

Note that, for this example, only $4$ out of the $n^m = 16$ possible complete allocations are eventually explored by the algorithm, thanks to pruning via domination. We will generalize this observation to arbitrary instances and bound the number of leaves in the recursion tree. Note that, since the total number of nodes in the recursion tree is at most $m$ times the number of leaves, a polynomial bound (for constant $n$) on the number of leaves will imply a polynomial running time for the algorithm.

\begin{restatable}[Running time of \Cref{alg:OptimalCompletion}]{lemma}{OrderedLexicographicRunningTime}
    The number of leaves in the recursion tree of \Cref{alg:OptimalCompletion} is at most $\O((m+1)^{\binom{n}{2}})$.
    \label{lem:Ordered_Lexicographic_Running_Time}
\end{restatable}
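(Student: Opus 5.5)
The proof combines two ingredients: a structural \emph{mutual-exclusion} property of the recursion tree of \Cref{alg:OptimalCompletion}, and a purely combinatorial leaf-counting bound for trees with that property.

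\textbf{Step 1: mutual exclusion.} Fix a node with partial allocation $A$ of $g_1,\dots,g_{j-1}$. Suppose $g_j$ is branched to two distinct agents $i$ and $k$, producing children $B$ (with $g_j\to i$) and $B'$ (with $g_j\to k$). I will show that for every later good $g_\ell$, $\ell>j$, the following two events cannot both occur:
\begin{itemize}
\item some node in the subtree of $B$ gives $g_\ell$ to $i$;
\item some node in the subtree of $B'$ gives $g_\ell$ to $k$.
\end{itemize}
Let $u_i,u_k$ be the utilities at the first node and $Y$ its remaining goods; let $u'_i,u'_k,Y'$ be the same quantities at the second node. By the ordered lexicographic property:
\begin{itemize}
\item $u_i\ge v_i(A_i)+v_i(g_j)>v_i(A_i)+v_i(\tail(j))\ge u'_i+v_i(Y')$;
\item symmetrically, $u'_k> u_k+v_k(Y)$.
\end{itemize}
Both branchings are non-pruned, so neither agent is dominated by the other. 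Non-domination of $i$ by $k$ at the first node gives $\Delta_i(u_i,v_i(g_\ell))\ge\Delta_k(u_k+v_k(Y),v_k(g_\ell))$. Non-domination of $k$ by $i$ at the second node gives $\Delta_k(u'_k,v_k(g_\ell))\ge\Delta_i(u'_i+v_i(Y'),v_i(g_\ell))$. Strict concavity makes each $\Delta$ strictly decreasing in its first argument, so combining everything yields the cycle
\[
\Delta_k(u'_k,\cdot)<\Delta_k(u_k+v_k(Y),\cdot)\le\Delta_i(u_i,\cdot)<\Delta_i(u'_i+v_i(Y'),\cdot)\le\Delta_k(u'_k,\cdot),
\]
a contradiction. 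The cases where some $h(0)=-\infty$ or $Y=\emptyset$ need a separate check. In those cases the conventions stated after \eqref{eqn:domination_condition} determine domination directly.

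\textbf{Step 2: encode subtrees by binary matrices.} For a subtree $T$ rooted at depth $j-1$, let $M_T$ be the $m\times n$ binary matrix with $M_T[\ell][i]=1$ iff some node of $T$ allocates $g_\ell$ to $i$. For columns $i\neq k$, let $c_{ik}(T)=\langle M_T[\cdot][i],M_T[\cdot][k]\rangle$, which is at most $m$. The target inequality is
\[
\mathrm{leaves}(T)\le\prod_{\{i,k\}}\bigl(1+c_{ik}(T)\bigr),
\]
which immediately gives $(m+1)^{\binom n2}$. I will prove it by induction on depth.

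\textbf{Step 3: the inductive step.} Consider a node branching $g_j$ to a set $S$ of agents, with child subtrees $T_s$ for $s\in S$. Row $j$ of $M_T$ has ones exactly in $S$, and the matrices of the children are entrywise dominated by $M_T$. For each pair $s\ne t$ in $S$, Step 1 says that in every later row, column $s$ of $M_{T_s}$ and column $t$ of $M_{T_t}$ are never both $1$.

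The hard part is showing that $\sum_{s\in S}\prod_{\{i,k\}}(1+c_{ik}(T_s))$ is at most $\prod_{\{i,k\}}(1+c_{ik}(T))$. The row-$j$ contributions raise $c_{st}(T)$ by $1$ for each pair in $S$. The exclusions force, for each pair $\{s,t\}$, the rows where $s$ is active in $T_s$ and the rows where $t$ is active in $T_t$ to be disjoint subsets of the rows common to columns $s$ and $t$. I will try to prove this via a stronger potential that charges each child $T_s$ to the factors of pairs $\{s,t\}$ with $t\in S$, using $1+a+b\ge(1+a)+\dots$-type splitting. Concretely:
\begin{itemize}
\item first handle $|S|=2$, where the claim reduces to $(1+a)P+(1+b)Q\le(2+a+b)R$ with $P,Q\le R$ and disjoint supports $a,b$;
\item then extend to general $S$ by ordering the children and peeling off one agent at a time.
\end{itemize}
If this potential is too weak, my fallback is to strengthen the inductive invariant to count dot products only over rows $\ge j$. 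Since this step is exactly the matrix leaf-counting lemma (\Cref{lem:leaf_bound}) announced in the overview, I expect it to be the main technical obstacle. Step 1 is where the lexicographic structure enters and should be routine.
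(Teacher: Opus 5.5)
Your Step 1 is correct and is exactly the paper's mutual-exclusion argument (\Cref{lem:domination_based_pruning_lemma} specialized to the ordered case, giving \Cref{cor:mutual_exclusion_siblings}). Your Step 2 target inequality is precisely \Cref{lem:leaf_bound}.

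\textbf{The gap is Step 3.} That step is the whole content of the lemma, and you settle only $|S|=2$. The general case is left as a plan: ``a stronger potential'', ``peeling off one agent'', a fallback invariant. It is not a routine extension of the two-child case. If you charge each child's excess to a single pair factor, you would need $\sum_{s\in S}\min_t h_{s,t}\le 1$ for the normalized ratios defined below. This already fails for three children with all ratios equal to $1/2$. The argument must keep \emph{all} factors $\{s,t\}$, $t\in S$, at once.

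\textbf{The missing normalization.} For a child $T_s$, bound by the parent's factor every pair factor not involving $s$, and every factor $\{s,k\}$ with $k\notin S$. For $t\in S\setminus\{s\}$, use $c_{st}(T_s)\le a_{st}$. Here $a_{st}$ is the dot product, over rows after $j$, of column $s$ of $M_{T_s}$ with column $t$ of $M_T$; define $a_{ts}$ symmetrically. Put $h_{s,t}:=(1+a_{st})/(1+c_{st}(T))\le 1$.

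Row $j$ has ones in both columns, so $1+c_{st}(T)=2+c'_{st}$, where $c'_{st}$ is the dot product over rows after $j$. Your mutual exclusion gives $a_{st}+a_{ts}\le c'_{st}$: in each such row, at most one of column $s$ of $M_{T_s}$ and column $t$ of $M_{T_t}$ is $1$, and both are dominated by $M_T$. Hence $h_{s,t}+h_{t,s}\le 1$, and
\[
\mathrm{leaves}(T)\le\prod_{\{i,k\}}\bigl(1+c_{ik}(T)\bigr)\cdot\sum_{s\in S}\prod_{t\in S\setminus\{s\}}h_{s,t}.
\]
It remains to show the last sum is at most $1$.

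\textbf{Two ways to finish.} The paper orients each edge $\{s,t\}$ independently: $s\to t$ with probability $h_{s,t}$, $t\to s$ with probability $h_{t,s}$, deleted otherwise. The events ``all edges at $s$ point away from $s$'' are pairwise disjoint and have probabilities $\prod_t h_{s,t}$, so their probabilities sum to at most $1$.

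Your peeling idea also works once this normalization is in place. Add a vertex $p$ to $S'$ and let $\mu=\min_{t\in S'}h_{p,t}$. Then $\prod_{t\in S'}h_{p,t}\le\mu$ and $h_{s,p}\le 1-h_{p,s}\le 1-\mu$. So the sum for $S'\cup\{p\}$ is at most $\mu+(1-\mu)\sum_{s\in S'}\prod_{t\in S'\setminus\{s\}}h_{s,t}\le 1$, by induction on $|S'|$.

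As written, your proposal contains neither the normalization nor this final inequality, so the $(m+1)^{\binom n2}$ bound is not yet established.
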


The following key lemma which we prove in~\Cref{sec:ordered_appendix} will be used in the proof of \Cref{lem:Ordered_Lexicographic_Running_Time}. Importantly, note that this lemma holds for general lexicographic valuations, not just ordered ones.

\begin{restatable}
{lemma}{DominationLemma}
    Consider partial allocations $B, A$ and $A'$, such that $B_r \subseteq A_r$ and $B_r \subseteq A'_r$ for all agents $r \in N$. Suppose there exists a good $g$ that is the most valuable good for both the agents $i$ and $k$ among the unallocated goods in $B$, such that $g \in A_i$ and $g \in A'_k$. Then, for any good $e$ that is unallocated in both $A$ and $A'$, either agent $k$ dominates agent $i$ with respect to good $e$ under the partial allocation $A$, or agent $i$ dominates agent $k$ with respect to good $e$ under the partial allocation $A'$.
\label{lem:domination_based_pruning_lemma}
\end{restatable}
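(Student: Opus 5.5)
The plan is to argue by contradiction. The whole proof rests on comparing, for each of the two agents, its guaranteed utility in one branch with its maximum possible utility in the other branch. Let $Y$ (resp.\ $Y'$) be the set of goods other than $e$ that are unallocated in $A$ (resp.\ $A'$). Suppose neither domination holds. By the marginal form \eqref{eqn:domination_condition} of the domination condition, this gives
\[
\Delta_i\bigl(v_i(A_i), v_i(e)\bigr) \ge \Delta_k\bigl(v_k(A_k)+v_k(Y), v_k(e)\bigr)
\quad\text{and}\quad
\Delta_k\bigl(v_k(A'_k), v_k(e)\bigr) \ge \Delta_i\bigl(v_i(A'_i)+v_i(Y'), v_i(e)\bigr).
\]

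The key step is a pair of strict utility comparisons that use only the lexicographic property of each agent's own valuation; no common order across agents is needed. I would first check the claim
\[
v_i(A'_i \cup Y') < v_i(A_i).
\]
\begin{itemize}
\item Since $B_r \subseteq A'_r$ for all $r$, every good of $A'_i \cup Y'$ lies either in $B_i$ or among the goods unallocated in $B$.
\item The good $g$ is not in $A'_i \cup Y'$, because $g \in A'_k$.
\item The good $g$ is agent $i$'s top unallocated good in $B$, so every other good unallocated in $B$ is ranked below $g$ by $i$.
\item By the lexicographic inequality, the total value to $i$ of all goods below $g$ is strictly less than $v_i(g)$. Hence $v_i(A'_i\cup Y') < v_i(B_i) + v_i(g) \le v_i(A_i)$, where the last step uses $B_i \cup \{g\} \subseteq A_i$.
\end{itemize}
Swapping the roles of $(i,A)$ and $(k,A')$ gives the symmetric claim $v_k(A_k \cup Y) < v_k(A'_k)$.

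Next, I chain the inequalities using that $\Delta_r(\cdot, y)$ is strictly decreasing for $y>0$; here $v_i(e), v_k(e) > 0$ by the standing positivity assumption. This gives
\[
\Delta_i(v_i(A_i),v_i(e)) < \Delta_i(v_i(A'_i)+v_i(Y'),v_i(e)) \le \Delta_k(v_k(A'_k),v_k(e)) < \Delta_k(v_k(A_k)+v_k(Y),v_k(e)) \le \Delta_i(v_i(A_i),v_i(e)).
\]
The two strict steps are the utility comparisons above, and the two weak steps are the assumed failures of domination. The chain ends where it started with a strict inequality in between, which is a contradiction.

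The main obstacle is the bookkeeping when some $h_r(0)$ is undefined, i.e.\ treated as $-\infty$. The arguments $v_i(A_i)$ and $v_k(A'_k)$ are positive, since $g$ lies in these bundles, so those marginals are finite. If $A'_i \cup Y' = \emptyset$, then $\Delta_i(0,\cdot) = +\infty$. In that case the second failure-of-domination inequality cannot hold with a finite left side. Equivalently, in the paper's convention, the agent $i$ with an empty bundle and $Y'=\emptyset$ is dominated by nobody, yet dominates $k$. The symmetric case $A_k\cup Y=\emptyset$ is handled the same way. So these cases are consistent with the statement and can be dispatched separately before running the chain.
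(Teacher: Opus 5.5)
Your proposal is correct and follows essentially the same argument as the paper. You assume both dominations fail, use the lexicographic property with $g$ as the common top unallocated good to get $v_i(A'_i\cup Y')<v_i(A_i)$ and $v_k(A_k\cup Y)<v_k(A'_k)$, and chain the strict monotonicity of the marginals into a cyclic contradiction. Your explicit handling of the $h_r(0)=-\infty$ edge cases is a small addition that the paper leaves implicit.
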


For our algorithm for ordered valuations, \Cref{lem:domination_based_pruning_lemma} implies that once two sibling nodes disagree on which of two agents receives a good (which, by design, is currently the most valuable remaining good for both of them), then for every later good at least one of the two symmetric possibilities (giving that later good to the first agent versus to the second) will be ruled out by pruning, \emph{at the level where that later good is actually allocated}. Concretely, consider a node (partial allocation) $B$ at depth $j-1$ in the recursion tree of \textsc{OptimalCompletion}, and fix two distinct agents $i,k\in N$. Let $C_i$ and $C_k$ be two children of $B$ in which the current good $g_j$ is allocated to $i$ and to $k$, respectively.

Now fix any later good $g_r$ with $r>j$. Consider any two descendants $\widehat{C}_i$ of $C_i$ and $\widehat{C}_k$ of $C_k$ that lie at depth $r-1$ and correspond to some partial allocations obtained after allocating exactly the goods $g_1,\dots,g_{r-1}$. (Thus, $\widehat{C}_i$ and $\widehat{C}_k$ any are two nodes in the subtrees of $C_i$ and $C_k$ respectively, at which the algorithm is about to decide who receives $g_r$.)

Because the instance is ordered lexicographic and goods are processed in the common decreasing order, $g_j$ is the most valuable good that is unallocated in $B$ for both agents $i$ and $k$. Hence we may apply \Cref{lem:domination_based_pruning_lemma} with
\[
A \coloneqq \widehat{C}_i,\qquad
A' \coloneqq \widehat{C}_k,\qquad
g \coloneqq g_j,\qquad
e \coloneqq g_r.
\]
The lemma guarantees that either agent $k$ dominates agent $i$ with respect to good $g_r$ under the partial allocation $\widehat{C}_i$, or agent $i$ dominates agent $k$ with respect to good $g_r$ under the partial allocation $\widehat{C}_k$. In the first case, by \Cref{lem:domination}, when expanding $\widehat{C}_i$ the algorithm will prune the branch that allocates $g_r$ to agent $i$; in the second case, when expanding $\widehat{C}_k$ it will prune the branch that allocates $g_r$ to agent $k$. Therefore, at the point where $g_r$ is allocated in these two sibling subtrees, at least one of the two symmetric choices (giving $g_r$ to $i$ versus giving $g_r$ to $k$) cannot survive. Equivalently, either $g_r$ is never allocated to $i$ in the subtree of $C_i$, or it is never allocated to $k$ in the subtree of $C_k$.

\begin{corollary}[Mutual exclusion across sibling branches]
\label{cor:mutual_exclusion_siblings}
Let $B$ be a node at depth $j-1$ in the recursion tree of \textsc{OptimalCompletion}, and let $i\neq k$ be two agents such that both children $C_i$ and $C_k$ of $B$ (allocating $g_j$ to $i$ and to $k$, respectively) are explored (i.e., neither is pruned at depth $j$). Then for every $r>j$, at least one of the following holds:
\begin{enumerate}
    \item no descendant of $C_i$ ever allocates $g_r$ to agent $i$; or
    \item no descendant of $C_k$ ever allocates $g_r$ to agent $k$.
\end{enumerate}
\end{corollary}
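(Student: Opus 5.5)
The argument is a contrapositive obtained by directly combining \Cref{lem:domination_based_pruning_lemma} with the pruning rule of \Cref{alg:OptimalCompletion}; the paragraph preceding the statement already contains its core. Fix $B$, $i\neq k$, the explored children $C_i,C_k$, and a later index $r>j$. Suppose, for contradiction, that both alternatives fail. Then there is a descendant of $C_i$ that allocates $g_r$ to $i$, and a descendant of $C_k$ that allocates $g_r$ to $k$. The algorithm allocates goods in the fixed order $g_1,\dots,g_m$, one per level. Each such descendant therefore arises as a child of a node at depth $r-1$, namely the node at which the decision about $g_r$ is made. Call these nodes $\widehat C_i$ (in the subtree of $C_i$) and $\widehat C_k$ (in the subtree of $C_k$). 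They satisfy $B_s\subseteq (\widehat C_i)_s$ and $B_s\subseteq(\widehat C_k)_s$ for all agents $s$, and $g_j\in(\widehat C_i)_i$, $g_j\in(\widehat C_k)_k$.

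The next step verifies the hypotheses of \Cref{lem:domination_based_pruning_lemma} with $A=\widehat C_i$, $A'=\widehat C_k$, $g=g_j$, and $e=g_r$. Since $B$ has allocated exactly $g_1,\dots,g_{j-1}$, the unallocated goods in $B$ are $g_j,\dots,g_m$. By orderedness, $g_j$ is the most valuable of these for every agent, in particular for $i$ and $k$. The good $g_r$ is unallocated at both $\widehat C_i$ and $\widehat C_k$, because both are at depth $r-1$ and have allocated exactly $g_1,\dots,g_{r-1}$.

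The lemma then yields one of two conclusions. Either $k$ dominates $i$ with respect to $g_r$ under $\widehat C_i$, in which case the for-loop at $\widehat C_i$ skips agent $i$ and no child of $\widehat C_i$ gives $g_r$ to $i$. Or $i$ dominates $k$ with respect to $g_r$ under $\widehat C_k$, in which case symmetrically no child of $\widehat C_k$ gives $g_r$ to $k$. Either outcome contradicts the choice of the two descendants.

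The step needing the most care is the quantifier structure. The lemma is applied to an \emph{arbitrary} pair of depth-$(r-1)$ nodes, one from each subtree. The argument gives, for every such pair, that at least one of the two branches is pruned, and that is exactly what the contradiction requires. It does not directly say which alternative holds globally, but the contrapositive formulation sidesteps this: if both alternatives failed, some single pair of nodes would witness both, contradicting the lemma for that pair. I will also note explicitly that a child being explored means its agent was undominated, so "allocates $g_r$ to $i$" happens exactly when the loop at the depth-$(r-1)$ parent does not prune $i$. This makes the link between domination and pruning exact.
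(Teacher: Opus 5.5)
Your proposal is correct and follows the paper's argument: the paper also applies \Cref{lem:domination_based_pruning_lemma} with $A=\widehat{C}_i$, $A'=\widehat{C}_k$, $g=g_j$, $e=g_r$, uses orderedness to certify that $g_j$ is the top unallocated good at $B$ for both agents, and concludes via the pruning rule of \textsc{OptimalCompletion}. Your contrapositive phrasing spells out the quantifier step (passing from each pair of depth-$(r-1)$ nodes to the global statement about the two subtrees), which the paper compresses into its ``Equivalently'' sentence.
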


    Now, let $A$ be any node in the recursion tree at depth $j$. Consider a binary matrix $M(A)$, with $m - j$ rows corresponding to the unallocated goods $g_{j+1}, \ldots, g_m$ and $n$ columns corresponding to the agents in $N$. The entry for the row corresponding to good $g_r$ and the column corresponding to agent $i$ is $1$ if there exists a descendant of $A$ in which good $g_r$ is allocated to agent $i$, and $0$ otherwise. Let $M(A)^i$ denote the column vector in $M(A)$ corresponding to agent $i$. Let $\text{leaves}(A)$ denote the number of leaves in the subtree of node $A$. The following lemma, which we prove using induction in~\Cref{sec:ordered_appendix} gives an upper bound on $\text{leaves}(A)$ in terms of the allocation matrix $M(A)$.

\begin{restatable}{lemma}{LeafBound}
    The number of leaves in the subtree of $A$ is at most:
    $$\text{leaves}(A) \leq \prod_{1 \leq i < k \leq n} \big(1 + M(A)^i \cdot M(A)^k\big), $$
where ``$\cdot$'' represents the dot product between two vectors.
    \label{lem:leaf_bound}
\end{restatable}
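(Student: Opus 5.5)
We will prove the bound by induction on the height of the subtree rooted at $A$, i.e., on the number $m-j$ of unallocated goods.

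\emph{Base cases.} If $A$ is a leaf ($j=m$), then $M(A)$ has no rows. Every dot product is $0$, so the right-hand side equals $1=\text{leaves}(A)$.

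\emph{Single child.} If $A$ has exactly one child $C$, obtained by giving $g_{j+1}$ to some agent, then $\text{leaves}(A)=\text{leaves}(C)$. Moreover, $M(C)$ is obtained from $M(A)$ by deleting the row of $g_{j+1}$, and every descendant of $C$ is a descendant of $A$. Hence $M(C)\le M(A)$ entrywise, so each dot product can only decrease. The bound for $A$ then follows from the bound for $C$.

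\emph{Several children.} Suppose the children of $A$ are $C_{i_1},\dots,C_{i_t}$ with $t\ge 2$, where $C_{i_s}$ gives $g_{j+1}$ to agent $i_s$. Let $M_s:=M(C_{i_s})$, viewed as a matrix on the rows $g_{j+2},\dots,g_m$. Two facts drive the argument.
\begin{itemize}
\item[(a)] Entrywise, $\max_s M_s \le M(A)$ restricted to these rows.
\item[(b)] The row of $g_{j+1}$ in $M(A)$ has a $1$ exactly in the columns $i_1,\dots,i_t$.
\end{itemize}
By (b), for every pair $\{i_s,i_{s'}\}$ of branching agents,
\[
M(A)^{i_s}\cdot M(A)^{i_{s'}} \ge 1+\sum_{r>j+1} M(A)_{r,i_s}M(A)_{r,i_{s'}}.
\]
By \Cref{cor:mutual_exclusion_siblings}, for each later good $g_r$, either $(M_s)_{r,i_s}=0$ or $(M_{s'})_{r,i_{s'}}=0$.

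The induction hypothesis gives
\[
\text{leaves}(A)=\sum_s \text{leaves}(C_{i_s}) \le \sum_s \prod_{a<b}\bigl(1+M_s^a\cdot M_s^b\bigr).
\]
Write $P$ for the target product $\prod_{a<b}\bigl(1+M(A)^a\cdot M(A)^b\bigr)$. I would bound each summand separately by a factor of $P$ and then show these pieces sum to at most $P$. The factors indexed by pairs not involving two branching agents are dominated by the corresponding factors of $P$, by (a). The real content lies in the pairs $\{i_s,i_{s'}\}$.

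\emph{Two children (the model case).} Take $t=2$ with agents $i,k$, and set $D:=\sum_{r>j+1}M(A)_{r,i}M(A)_{r,k}$. Define $x:=M_1^i\cdot M_1^k$ and $y:=M_2^i\cdot M_2^k$. Each row $r$ contributing to $x$ has $(M_1)_{r,i}=1$, and each row contributing to $y$ has $(M_2)_{r,k}=1$. By mutual exclusion these row sets are disjoint, and both lie inside the support of $D$. Hence $x+y\le D$. Bounding the other factors by those of $P$, we get
\[
\text{leaves}(A)\le \bigl((1+x)+(1+y)\bigr)\cdot \frac{P}{1+M(A)^i\cdot M(A)^k}\le \frac{2+D}{2+D}\,P = P,
\]
using $M(A)^i\cdot M(A)^k\ge 1+D$.

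\emph{General $t$ (the main obstacle).} Here there are $\binom{t}{2}$ branching pairs, and each child $s$ has to lose something in every pair containing $i_s$, simultaneously. The plan is to exploit the inequality $\sum_s \prod_{p\ni s} a_{s,p} \le \prod_p \bigl(\sum_{s\in p} a_{s,p}\bigr)$, which holds termwise for nonnegative $a$ when the right side is expanded. For child $s$ and pair $p=\{i_s,i_{s'}\}$, let $a_{s,p}$ be the share of the factor $1+M_s^{i_s}\cdot M_s^{i_{s'}}$ attributed to child $s$. Mutual exclusion shows that the two shares of each pair sum to at most $2+D_p\le 1+M(A)^{i_s}\cdot M(A)^{i_{s'}}$, exactly as in the two-child case.

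The delicate bookkeeping is the split of each factor $1+x_p$ of child $s$ into per-pair terms. The cleanest route I would try is the following. Assign, for each pair $p$, the full factor $(1+x_{s,p})$ to child $s$; note $x_{s,p}+x_{s',p}\le D_p$. Then verify
\[
\sum_s \prod_{p\ni s}(1+x_{s,p}) \le \prod_p (2+D_p)
\]
by expanding the right side, which contains the left side's monomials. If that expansion fails to dominate, I would instead prove the bound by iterating the two-child argument: treat the $t$ children as produced by successive binary splits, and charge each split to a distinct pair.
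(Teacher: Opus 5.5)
Your base case, single-child case, and two-child case are correct. The two-child computation ($x+y\le D$ by mutual exclusion, hence $(1+x)+(1+y)\le 1+M(A)^i\cdot M(A)^k$) is exactly the pairwise inequality the paper proves.

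\textbf{The gap: $t\ge 3$.} The case $t\ge 3$ is the real content of the lemma, and you leave it unresolved.
\begin{enumerate}
\item Your auxiliary inequality $\sum_s\prod_{p\ni s}a_{s,p}\le\prod_p\bigl(\sum_{s\in p}a_{s,p}\bigr)$ is false for nonnegative $a$ once $t\ge3$. Each left-hand term has $t-1$ factors, while each right-hand monomial has $\binom{t}{2}$. With $t=3$ and all $a_{s,p}=1/4$, the left side is $3/16$ and the right side is $1/8$.
\item More seriously, the inequality you propose to verify, $\sum_s\prod_{p\ni s}(1+x_{s,p})\le\prod_p(2+D_p)$, omits factors from child $s$'s induction bound. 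These are the factors $1+M_s^{i_{s'}}\cdot M_s^{i_{s''}}$ for branching pairs $\{i_{s'},i_{s''}\}$ not containing $i_s$. Your remark about fact (a) covers only pairs that are not both branching. These omitted factors are at least $1$ and can be as large as $1+D_p$. So even a proof of your displayed inequality would not give $\text{leaves}(A)\le P$.
\item The fallback of successive binary splits does not work as stated. Once two children are grouped, the group's factor on the pair charged to the next split is no longer a single $1+x_{s,p}$ that mutual exclusion controls.
\end{enumerate}

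\textbf{The missing step.} Bound child $s$'s factor on every branching pair not containing $i_s$ by the full factor $2+D_p$ of $P$, using (a). This gives $\text{leaves}(C_{i_s})\le P\cdot\prod_{s'\neq s}h_{s,s'}$, where $h_{s,s'}:=(1+x_{s,\{s,s'\}})/(1+M(A)^{i_s}\cdot M(A)^{i_{s'}})$. Your two-child argument shows $h_{s,s'}+h_{s',s}\le 1$. It then remains to prove $\sum_s\prod_{s'\neq s}h_{s,s'}\le 1$.

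The paper does this with a random tournament. Independently for each edge $\{s,s'\}$, orient it toward $s'$ with probability $h_{s,s'}$, toward $s$ with probability $h_{s',s}$, and delete it otherwise. The event ``every edge at $s$ points away from $s$'' has probability $\prod_{s'\neq s}h_{s,s'}$. These events are pairwise disjoint, so their probabilities sum to at most $1$.

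Your expansion idea is this same argument in unnormalized form, once you set it up correctly:
\begin{itemize}
\item Write $2+D_p=(1+x_{s,p})+(1+x_{s',p})+(D_p-x_{s,p}-x_{s',p})$ for $p=\{s,s'\}$.
\item Assign to child $s$ every monomial that picks $s$'s term on each pair containing $s$ and any term on the other pairs.
\item These monomial families are disjoint across $s$.
\item Summing over the free pairs gives child $s$ exactly the full factors $2+D_p$, which dominate the factors you omitted.
\end{itemize}
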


Note that this implies \Cref{lem:Ordered_Lexicographic_Running_Time} since $M(A)^i \cdot M(A)^k \leq m$ for all $i,k \in N$. Hence, substituting $A$ to be the root node (empty allocation), the total number of leaves in the recursion tree is at most $(1 + m)^{\binom{n}{2}}$.

\subsubsection{An EPTAS for General \texorpdfstring{$n$}{n}} 
When the number of agents $n$ is not necessarily a constant, we show the existence of an EPTAS for (unweighted) Nash welfare under ordered lexicographic valuations.
In this section, we present an EPTAS for maximizing (unweighted) Nash welfare.

\begin{restatable}{theorem}{EPTASNashOrdered}
    For any $\varepsilon > 0$, there exists an algorithm that, given any instance with ordered lexicographic valuations, returns an allocation whose Nash welfare is at least $(1-\varepsilon)$ times the optimal Nash welfare, and runs in time $(\nicefrac{1}{\varepsilon})^{\O(\log \nicefrac{1}{\varepsilon})} \cdot (n^3 + nm)$.
    \label{thm:EPTAS}
\end{restatable}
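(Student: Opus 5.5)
The plan follows the lookahead idea from the technical overview. Fix $\ell = \O(\log \nicefrac{1}{\varepsilon})$ and look only at the top $n+\ell$ goods $g_1,\dots,g_{n+\ell}$ in the common order.

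\textbf{Structure of an optimal allocation.} Let $A^\star$ be a Nash optimal allocation. It is EF1, and under lexicographic valuations EF1 means every agent gets a good among $g_1,\dots,g_n$. With $m \ge n$, the top $n$ goods must therefore be matched one-to-one to the agents. Each agent $i$ with top good $g_{\pi(i)}$ has utility in $[v_i(g_{\pi(i)}),\, 2v_i(g_{\pi(i)}))$. Among the goods $g_{n+1},\dots,g_{n+\ell}$ at most $\ell$ agents receive anything. All agents' remaining goods lie in $\tail(n+\ell)$.

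\textbf{Why the tail barely matters.} Take an agent $i$ that gets nothing among $g_{n+1},\dots,g_{n+\ell}$. By superincreasingness, its tail contribution is less than $v_i(g_{n+\ell})$, which is at most $2^{-\ell}$ times the value of any good ranked at or above $g_n$. I then want to compare $A^\star$ with the allocation that keeps $A^\star$ on the top $n+\ell$ goods but reallocates $\tail(n+\ell)$ canonically. The canonical choice is to give the whole tail to one agent, for example the one with the smallest utility-to-top-good ratio. The key estimate is that every agent's utility changes by a multiplicative factor of at most $(1+2^{-\Omega(\ell)})$.

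The main obstacle is the agents who get goods inside the window. An agent may have the window good $g_{n+\ell}$ and also tail goods, and then the tail is a non-negligible fraction of that agent's bonus. I would handle this by charging: an agent's loss from the tail is at most $v_i(g_{n+\ell})$ relative to utility at least $v_i(g_{\pi(i)}) \ge 2^{\ell} v_i(g_{n+\ell})$. So removing all tail goods from everyone costs a factor of at least $(1+2^{-\ell})^{-1}$ per agent, hence at least $(1-2^{-\ell})$ overall in geometric mean. Choosing $2^{-\ell} \le \varepsilon$ gives $\ell = \log_2 \nicefrac{1}{\varepsilon}$. This step therefore looks fine: dropping the tail (giving it to anyone only helps) loses at most $(1-\varepsilon)$. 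The real difficulty is computational, namely optimizing the allocation of the top $n+\ell$ goods efficiently rather than in $n^{n+\ell}$ time.

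\textbf{Computation.} The allocation of the top $n+\ell$ goods is a matching of $g_1,\dots,g_n$ plus an assignment of the $\ell$ window goods. I would enumerate the structure of the window assignment, i.e., how the $\ell$ window goods are partitioned into groups that go to common agents. There are at most $\ell^{\O(\ell)} = (\nicefrac{1}{\varepsilon})^{\O(\log \nicefrac{1}{\varepsilon})}$ such partitions. For a fixed partition, assigning top goods and window groups to agents so as to maximize $\sum_i \log v_i(\cdot)$ is a bipartite assignment problem. Each agent takes one top good and optionally one window group, with value $\log(v_i(g_a)+v_i(S))$, where $v_i(S)$ is the agent's value for the window group $S$. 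This does not factor into separate matchings. I would reduce it using the ordered structure: the optimal matching of the top block can be assumed to interact with the window only through at most $\ell$ agents. Such an agent's gain from a window group is at most a factor of $2$, so the assignment is an $n\times n$ weighted bipartite matching. The rows are agents and the columns are top goods, with at most $\ell$ extra column-copies augmented by groups. The Hungarian algorithm runs in $\O(n^3)$, and computing the weights takes $\O(nm)$. Finally, the tail is attached to the best agent and the best allocation found is output. Together this gives the claimed running time $(\nicefrac{1}{\varepsilon})^{\O(\log \nicefrac{1}{\varepsilon})}\cdot(n^3+nm)$. The step I expect to need the most care is making the per-partition subproblem a single assignment instance; the fallback is to guess, per group, only a constant-size signature of the receiving agent's top-good rank.
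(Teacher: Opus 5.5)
Your approximation step is fine: keeping $A^\star$ on $g_1,\dots,g_{n+\ell}$ and discarding $\tail(n+\ell)$ costs each agent at most a factor $1+2^{-\Omega(\ell)}$. The gap is in the computational step, which you correctly identify as the crux. For a fixed partition of the window into groups $S_1,\dots,S_p$, you must choose two things at once: a perfect matching $\pi$ of agents to $g_1,\dots,g_n$, and an injective assignment $\sigma$ of groups to agents. The objective $\sum_i \log\bigl(v_i(g_{\pi(i)})+v_i(S_{\sigma(i)})\bigr)$ couples three indices, so this is not an $n\times n$ bipartite matching.

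The remark that a group changes an agent's utility by at most a factor $2$ does not help. A factor $2$ is far from $1+\varepsilon$, so the coupling cannot be ignored. If you add ``augmented'' columns $\{g_a\}\cup S_b$ to a matching instance, nothing stops the matching from using $g_a$ twice or using a group twice. To make each column a genuine bundle, you must know which top good each group accompanies. Guessing that naively costs $n^{p}=n^{\Theta(\log 1/\varepsilon)}$ choices, which gives a PTAS, not an EPTAS.

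The idea you are missing is that the tail loss depends on the rank of the agent's top good. By the superincreasing property, the agent whose top good is $g_r$ has $v_i(\tail(n+\ell)) < 2^{r-n-\ell}\, v_i(g_r)$. Since $r$ runs over a permutation of $[n]$, the total log-loss is $\sum_{r=1}^{n}\log(1+2^{r-n-\ell}) < 2^{1-\ell}$, so the geometric-mean loss is only $\exp(2^{1-\ell}/n)$. The paper uses this to split into two regimes:
\begin{itemize}
\item If $n\ge 2/\varepsilon$, take $\ell=0$. A Nash-optimal matching of $g_1,\dots,g_n$, with the tail given to anyone, already suffices.
\item Otherwise $n<2/\varepsilon$. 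Brute-forcing the owners of $g_{n+1},\dots,g_{n+\ell}$ then takes $n^{\ell}=(1/\varepsilon)^{\O(\log 1/\varepsilon)}$ guesses. Each guess is followed by one Hungarian run with weights $\log\bigl(v_i(g_j)+v_i(B_i)\bigr)$.
\end{itemize}

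The same rank-dependence would also rescue your fallback. Window goods given to an agent whose top good has rank at most $n-\ell'$ change its utility by a factor below $1+2^{-\ell'}$. So for each window good you need only guess which of the last $\ell'=\O(\log 1/\varepsilon)$ top goods its owner holds, or ``none''. You then attach it to that top good to form bundles, and match agents to bundles with the Hungarian algorithm. As written, however, your proposal neither states nor proves this negligibility. Without it, the ``constant-size signature'' is unjustified and the per-partition subproblem remains unsolved.
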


We prove this theorem in~\Cref{sec:EPTASthmappendix}.

\subsection{Doubling Valuations}
\label{subsec:Doubling_Valuations}

In this section, we design a polynomial-time algorithm for finding a Nash optimal allocation when the instance has doubling valuations and a constant number of agents. We will in fact prove a more general result for maximizing any separable concave welfare function $\WH(A) = \sum_{i \in N} h_i(v_i(A_i))$ under doubling valuations, subject to a technical condition on the functions $h_i$'s. Recall that, for an agent $i$, $\Delta_i$ denotes the marginal function of $h_i$, that is, $\Delta_i(x, y) = h_i(x+y) - h_i(x)$.

\begin{restatable}[Algorithm for doubling valuations for constant $n$]{theorem}{GeneralConstantn} For any constant positive integer $n$, there exists a polynomial-time algorithm that takes as input an instance with $n$ agents with doubling valuations, along with strictly increasing, strictly concave, and polynomial-time computable functions $h_1, h_2, \ldots, h_n$ that satisfy
    \[
        \Delta_i(2x,2y)\ge \Delta_i(x,y)\qquad \forall x>0,\ y\ge 0,
    \]
and returns an allocation $A$ that maximizes $\WH(A)=\sum_{i\in N} h_i\!\left(v_i(A_i)\right).$
    \label{thm:Doubling_Constant_n}
\end{restatable}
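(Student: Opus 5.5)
The plan is to reuse the branch-and-prune template of \Cref{alg:OptimalCompletion}, together with the leaf-counting bound of \Cref{lem:leaf_bound}. Since agents no longer share a common ranking, the order in which goods are processed has to be chosen dynamically. At a partial allocation $A$ with unallocated set $U$, let $t_r(A)$ denote agent $r$'s most valuable good in $U$. I will branch on a good $g$ that is the top remaining good for at least one agent, chosen so that the pair-wise structure needed by \Cref{lem:domination_based_pruning_lemma} is available. The key point is that two sibling branches giving $g$ to $i$ and to $k$ fit the hypothesis of that lemma only when $g$ is the top unallocated good for both $i$ and $k$. So the first step is to show that, under doubling valuations and the condition $\Delta_i(2x,2y)\ge\Delta_i(x,y)$, any agent $r$ for which $g\neq t_r(A)$ is dominated with respect to $g$ by some agent whose top remaining good is $g$, or else can be handled with no branching at all.

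For that step I will argue as follows. Let $r$ be an agent whose top remaining good $t_r$ differs from $g$, so that $v_r(g)\le v_r(t_r)/2$. Then I will compare assigning $g$ to $r$ against assigning it to an agent $s$ with $t_s=g$. Doubling gives $v_s(Y)\le v_s(g)$ for the remaining set $Y$. Hence, in \eqref{eqn:domination_condition}, agent $s$'s marginal is at least $\Delta_s(v_s(A_s)+v_s(g),v_s(g))$. The scaling condition then lets me rescale and compare this quantity with $\Delta_r(v_r(A_r),v_r(g))$. I expect this to require that the current utilities already be comparable, which suggests the following processing rule. Always take the good $g$ that is top for the agent (or agents) with the currently largest marginal-value ratio. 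Alternatively, process goods in rounds in which every agent's top good is considered, and branch only among agents sharing the same top good. If this domination argument fails in general, the fallback is weaker: show that any branch giving $g$ to a non-top agent $r$ reduces to a bounded number of patterns, which would only multiply the leaf count by $n^{O(1)}$ per level in a way that is absorbed by the matrix bound.

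Once that is in place, every surviving sibling pair at node $B$ allocates a good $g$ that is top for both agents in $B$. \Cref{lem:domination_based_pruning_lemma}, which holds for general lexicographic valuations, then gives the analogue of \Cref{cor:mutual_exclusion_siblings} for every good $e$ that is unallocated in both descendants. The subtlety here is that the processing order differs across branches, so row $e$ of $M(A)$ must be indexed by goods rather than by depth. I will re-prove \Cref{lem:leaf_bound} by the same induction, using only the facts that each leaf is determined by the recipients of each good and that mutual exclusion holds per good for every pair of branching agents. This yields at most $(m+1)^{\binom n2}$ leaves, and correctness follows from \Cref{lem:domination}.

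The main obstacle is the first step: proving that no branch ever needs to give a good to an agent for whom it is not the top remaining good. This is exactly where the factor-two gap and the condition $\Delta_i(2x,2y)\ge\Delta_i(x,y)$ must be used. For that step I would try to set up an exchange inequality that transfers $g$, together with the whole tail $Y$, between the two agents.
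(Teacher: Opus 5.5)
Your architecture matches the paper's: choose the good dynamically, branch only among agents for whom it is the top available good, certify this with a domination argument using doubling and $\Delta_i(2x,2y)\ge\Delta_i(x,y)$, and redo the leaf count with rows indexed by goods. The gap is that the step you flag as the main obstacle is false in the unrestricted search you set up, and the proposal lacks the idea that repairs it.

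An agent $r$ with $A_r=\emptyset$ can never be dominated for Nash welfare, since $h_r(0)=-\infty$. Moreover, an optimal allocation can genuinely give such an agent a good that is not its top remaining good. Take doubling valuations with three agents:
\begin{itemize}
\item agent $1$ ranks $a\succ b\succ c$ with values $5,2.5,1$;
\item agent $2$ ranks $a\succ c\succ b$ with values $100,2,1$;
\item agent $3$ ranks $b\succ c\succ a$ with values $4,2,1$.
\end{itemize}
The unique Nash optimum is $a\to2$, $b\to1$, $c\to3$, with product $500$. At the root, $b$ is agent $3$'s top good, yet it goes to agent $1$, whose top good $a$ is still unallocated.

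Your marginal-based selection rule is degenerate at the root: the relevant quantities are all infinite, or all equal to $\log 2$. So it gives no reason to process $a$ before $b$, and for $g=b$ your claim fails. If you instead let empty-bundle agents receive non-top goods, you destroy the hypothesis of \Cref{lem:domination_based_pruning_lemma}. Your fallback of an $n^{O(1)}$ factor per level gives $n^{O(m)}$ overall, which is exponential.

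Even for nonempty bundles, there is a second problem. Turning your bound $\Delta_s(v_s(A_s)+v_s(g),v_s(g))$ into something the scaling condition can use requires that everything $s$ can still obtain is worth less than $v_s(A_s)$. Nothing in your search guarantees this.

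The paper's fix is to guess each agent's most valuable good in the optimal allocation. This costs $O(m^n)$ choices, after fixing which agents receive anything. It then searches only over \emph{restricted completions}, where agent $i$ may receive only goods in $F_i$, i.e., goods worth less to $i$ than its guessed top good. Now every $B_s$ is nonempty and $v_s(F_s\setminus\{f_s\})<v_s(B_s)$. Use weak domination, replacing $A_k\cup Y$ by $B_k\cup F_k\setminus\{g\}$. Let $s$ maximize $\Delta_s(2v_s(B_s),v_s(f_s))$, and let $t$ be any agent with $f_s\in F_t$ and $f_t\neq f_s$. Then
\[
\Delta_s\bigl(v_s(B_s\cup F_s\setminus\{f_s\}),v_s(f_s)\bigr)>\Delta_s\bigl(2v_s(B_s),v_s(f_s)\bigr)\ge\Delta_t\bigl(2v_t(B_t),v_t(f_t)\bigr)\ge\Delta_t\bigl(2v_t(B_t),2v_t(f_s)\bigr)\ge\Delta_t\bigl(v_t(B_t),v_t(f_s)\bigr).
\]
This is \Cref{clm:good_identification}: the good $f_s$ must go to an agent of $Z_{f_s}$. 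The mutual-exclusion lemma must then be re-proved for weak domination, which is \Cref{lem:weak_pruning_lemma}. After that, your leaf-counting plan goes through essentially unchanged.
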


Note that this requirement is satisfied for the case of weighted Nash welfare as well as weighted $p$-mean welfare for $p \in (0, 1)$:

\begin{itemize}
    \item For weighted Nash welfare, we have $h_i(x) = w_i \log x$, and hence $\Delta_i(x, y) = w_i \log(1 + \frac{y}{x})$. Thus, $\Delta_i(2x, 2y) = \Delta_i(x, y)$.
    \item For weighted $p$-mean welfare with $p \in (0, 1)$, we have $h_i(x) = w_i x^p$. Hence, $\Delta_i(2x, 2y) = w_i ((2x + 2y)^p - (2x)^p) = 2^p\,w_i((x+y)^p-x^p)> w_i((x+y)^p-x^p) = \Delta_i(x, y)$.
\end{itemize}
We iterate over all $2^n-1$ choices for the subsets of agents who receive at least one good. Thus, in the rest of the section, we assume that each agent receives at least one good. Our algorithm (\Cref{alg:Doubling_Constant_n}) begins with choosing (by enumerating over all $\O(m^n)$ choices) the favourite good assigned to each agent in the optimal allocation. For each such choice, it runs~\Cref{alg:OptimalRestrictedCompletion}, that builds on~\Cref{alg:OptimalCompletion} to return an optimal allocation subject to the choice of the maximum valued goods. We refer to such an allocation as a \emph{restricted completion} of the initial partial allocation, defined formally below.
        
\begin{definition}[Restricted completion]
    Given a partial allocation $B = (B_1, B_2, \ldots B_n)$, where $B_i \neq \phi $ for all $i \in N$, we say that an allocation $A = (A_1, A_2, \ldots A_n)$ is a restricted completion of $B$ if for all $i \in N$, $B_i \subseteq A_i$ and $\max_{g \in A_i} v_i(g) = \max_{g \in B_i} v_i(g)$.
\end{definition}

 We will design an algorithm \textsc{OptimalRestrictedCompletion}, which takes as input a partial allocation $B = (B_1, B_2, \ldots B_n)$, and outputs a restricted completion of $B$ that maximizes the required objective. Since a restricted completion only allocates goods that are less valuable than the highest valued good to each agent, the most valuable good in each agent's bundle always remains the same (as the choice made in the beginning). Given the current partial allocation $B$, let $F$ be the set of unallocated goods in $B$, that is $F = M \setminus \bigcup_{i \in N} B_i$. For each agent $i \in N$, let $F_i$ denote the set of unallocated goods that can be assigned to it in a restricted completion, i.e., $F_i = \{g \in F: v_i(g) < \max_{g' \in B_i} v_i(g')\}$. We will call a good \emph{allocable} to agent $i$ if $g \in F_i$. For each $i$ such that $F_i$ is non-empty, let $f_i$ denote the favorite good of agent $i$ in $F_i$. We will assume that each good is allocable to some agent, that is, for each good $g \in F$, there exists some agent $i$ such that $g \in F_i$. If this condition is not satisfied, then we shall not consider this choice of maximum valued goods further (see lines~\ref{l:check_feasible_begin} to~\ref{l:check_feasible_end}) of~\Cref{alg:Doubling_Constant_n}) since there is no feasible restricted completion. 
 Finally, for each good $g \in F$, we define $Z_g$ to be the set of agents $i$ for whom $f_i$ equals $g$.

 Since we are now working in the space of restricted completions, we would only need a weaker definition of domination, which we shall refer to as \emph{weak domination}.

\begin{definition}[Weak Domination]
    Given a partial allocation $B$, consider two agents $i, k \in N$ and a good $g \in F_i \cap F_k$. We say that agent $k$ weakly dominates agent $i$ with respect to good $g$ under the partial allocation $B$ if the following condition holds:
    \begin{equation*} \label{eqn:weak_domination}
    h_i(v_i(B_i \cup \{g\})) + h_k(v_k(B_k \cup F_k \setminus \{g\})) < h_i(v_i(B_i)) + h_k(v_k(B_k \cup F_k))
    \end{equation*}
\end{definition}

The condition says that even if agent $k$ gets all the goods that can be allocated to it except $g$, and agent $i$ gets only $g$ in addition to its current bundle, it is still better to allocate $g$ to agent $k$ instead of agent $i$. Upon re-arrangement, it can be seen that the above inequality is equivalent to:
\begin{equation} \label{eqn:weak_domination_2}
    \Delta_k(v_k(B_k \cup F_k \setminus \{g\}), v_k(g)) > \Delta_i(v_i(B_i), v_i(g))
\end{equation}

Analogous to~\Cref{lem:domination}, we can prove that it is never optimal to allocate a good to an agent that is weakly dominated by some other agent with respect to that good.

\begin{lemma}
    Let $k$ be an agent that weakly dominates an agent $i$ with respect to a good $g \in F_i \cap F_k$ for a partial allocation $B$. Then, in every optimal restricted completion of $B$, good $g$ is not allocated to agent $i$.
    \label{lem:weak_domination}
\end{lemma}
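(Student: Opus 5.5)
We argue by an exchange argument, exactly parallel to the proof of \Cref{lem:domination}. Suppose, for contradiction, that $A$ is an optimal restricted completion of $B$ with $g\in A_i$. We construct a second allocation $A'$ from $A$ by moving $g$ from agent $i$ to agent $k$; every other bundle is unchanged. We then need two facts: (a) $A'$ is still a restricted completion of $B$, and (b) $\WH(A')>\WH(A)$. Together these contradict the optimality of $A$.

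For (a), note that $g\in F_k$ means $v_k(g)<\max_{g'\in B_k}v_k(g')$. So adding $g$ to $A_k$ does not change agent $k$'s most valuable good. Removing $g$ from $A_i$ does not change agent $i$'s maximum either, since $g\in F_i$ is strictly less valuable to $i$ than its top good, which lies in $B_i\subseteq A_i\setminus\{g\}$. Containment $B_r\subseteq A'_r$ holds for every $r$ because $g\notin B_i$ (it is unallocated in $B$). Hence $A'$ is a restricted completion of $B$.

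For (b), the change in objective is
\[
\WH(A')-\WH(A)=\Delta_k\bigl(v_k(A_k),v_k(g)\bigr)-\Delta_i\bigl(v_i(A_i\setminus\{g\}),v_i(g)\bigr).
\]
We compare each term with the corresponding side of \eqref{eqn:weak_domination_2}. First, every good in $A_k\setminus B_k$ is allocable to $k$ in a restricted completion, so $A_k\setminus B_k\subseteq F_k$. Since $g\notin A_k$, this gives $A_k\subseteq B_k\cup F_k\setminus\{g\}$, and therefore $v_k(A_k)\le v_k(B_k\cup F_k\setminus\{g\})$. Because $\Delta_k(\cdot,y)$ is non-increasing in its first argument (concavity),
\[
\Delta_k\bigl(v_k(A_k),v_k(g)\bigr)\ge \Delta_k\bigl(v_k(B_k\cup F_k\setminus\{g\}),v_k(g)\bigr).
\]
Second, $B_i\subseteq A_i\setminus\{g\}$ gives $v_i(A_i\setminus\{g\})\ge v_i(B_i)$, and by the same monotonicity
\[
\Delta_i\bigl(v_i(A_i\setminus\{g\}),v_i(g)\bigr)\le\Delta_i\bigl(v_i(B_i),v_i(g)\bigr).
\]
Chaining these two bounds with the strict inequality \eqref{eqn:weak_domination_2} yields $\WH(A')>\WH(A)$, as required.

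The only delicate point is the degenerate case in which some $h_r(0)$ is undefined, i.e., treated as $-\infty$. This case does not arise here, because $B_r\neq\emptyset$ for every $r$ and all goods are positively valued, so every utility that appears is strictly positive and all marginals are finite. With that observed, the argument is complete; none of the steps is a real obstacle, and the key structural fact is simply $A_k\subseteq B_k\cup F_k$ for any restricted completion.
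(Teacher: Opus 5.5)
Your exchange argument is correct and is exactly the analogue of the proof of \Cref{lem:domination} that the paper invokes without writing it out: the containment $A_k\subseteq B_k\cup F_k\setminus\{g\}$ plays the role of $B_k\subseteq A_k\cup Y$, and your check that $A'$ is still a restricted completion is a detail the paper leaves implicit. The only step you gloss is that $A_k\setminus B_k\subseteq F_k$ needs the strict inequality $v_k(h)<\max_{g'\in B_k}v_k(g')$; this holds because an agent's positive lexicographic values are pairwise distinct, so a good outside $B_k$ cannot tie with $k$'s top good.
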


Now, note that the main observation (\Cref{lem:domination_based_pruning_lemma}) that helped us bound the number of leaves for the ordered lexicographic case does not assume that the valuations are ordered. The only requirement is that, both the agents involved share the same favorite good. Hence, our plan is to identify a good $g$ that \emph{must} be allocated to some agent in $Z_g$ in \emph{each} optimal restricted completion of $A$. Then, we try all possible assignments of $g$ to agents in $Z_g$ (subject to pruning based on weak domination), and recursively solve the problem for the remaining goods. Since agents in $Z_g$ have the same favorite allocable good $g$, we can use a refinement of~\Cref{lem:domination_based_pruning_lemma} (stated as~\Cref{lem:weak_pruning_lemma} later) to prune the search space. The following claim, that we prove in~\Cref{sec:missing_proofs_doubling} shows that such a good $g$ can be identified in polynomial time.

\begin{restatable}{claim}{GoodIdentification}
    Let $s$ be an agent such that $F_s \neq \emptyset$ that maximizes $ \Delta_s(2v_s(B_s), v_s(f_s))$. Then, in every optimal restricted completion of $B$, the good $f_s$ is allocated to some agent in $Z_{f_s}$.
    \label{clm:good_identification}
\end{restatable}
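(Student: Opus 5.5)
The plan is a single exchange argument. Write $g \coloneqq f_s$, and note that $s \in Z_g$ by definition. Suppose, for contradiction, that $A$ is an optimal restricted completion of $B$ in which $g$ is allocated to some agent $i \notin Z_g$. Then $i \neq s$. Since $g \in F_i$, the set $F_i$ is nonempty and $f_i$ is defined. Since $i \notin Z_g$, we have $f_i \neq g$, so $i$ strictly prefers $f_i$ to $g$. By the doubling property, this gives $v_i(f_i) \ge 2 v_i(g)$.

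Let $A'$ be obtained from $A$ by moving $g$ from $i$ to $s$. Because $g \in F_s$, the allocation $A'$ is still a restricted completion of $B$: the top good of $s$ is unchanged, and $i$ only loses a good. I will show $\WH(A') > \WH(A)$, contradicting optimality.

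\textbf{Gain for $s$.} The gain is $\Delta_s(v_s(A_s), v_s(g))$, and I will bound $v_s(A_s)$ from above. In a restricted completion we have $A_s \subseteq B_s \cup (F_s \setminus \{g\})$. Every good in $F_s \setminus \{g\}$ is strictly worse than $g = f_s$ for $s$. By the doubling property,
\[
v_s(F_s \setminus \{g\}) \le v_s(g)\,(1/2 + 1/4 + \cdots) < v_s(g).
\]
Since $g \in F_s$, we also have $v_s(g) < \max_{g' \in B_s} v_s(g') \le v_s(B_s)$. Hence $v_s(A_s) < 2 v_s(B_s)$. Strict concavity then gives
\[
\Delta_s(v_s(A_s), v_s(g)) > \Delta_s(2 v_s(B_s), v_s(f_s)).
\]

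\textbf{Loss for $i$.} The loss is $\Delta_i(v_i(A_i \setminus \{g\}), v_i(g))$. Since $B_i \subseteq A_i \setminus \{g\}$ and $\Delta_i$ is non-increasing in its first argument, the loss is at most $\Delta_i(v_i(B_i), v_i(g))$. I then chain three inequalities:
\begin{itemize}
\item $\Delta_i(v_i(B_i), v_i(g)) \le \Delta_i(v_i(B_i), v_i(f_i)/2)$, by monotonicity in the second argument and $v_i(g) \le v_i(f_i)/2$;
\item $\Delta_i(v_i(B_i), v_i(f_i)/2) \le \Delta_i(2 v_i(B_i), v_i(f_i))$, by the hypothesis $\Delta_i(x,y) \le \Delta_i(2x,2y)$, which applies since $v_i(B_i) > 0$;
\item $\Delta_i(2 v_i(B_i), v_i(f_i)) \le \Delta_s(2 v_s(B_s), v_s(f_s))$, by the choice of $s$ as a maximizer.
\end{itemize}

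Combining the two bounds, the loss is strictly smaller than the gain, so $\WH(A') > \WH(A)$, which is the desired contradiction.

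The main obstacle, and really the point of the claim, is the step that turns the loss bound for $i$, expressed through $g$ and $B_i$, into a quantity comparable with $s$'s maximizing marginal. This is exactly where two ingredients enter: the factor-two gap $v_i(f_i) \ge 2 v_i(g)$, which holds precisely because $i \notin Z_g$, and the scaling hypothesis $\Delta_i(2x,2y) \ge \Delta_i(x,y)$. On the $s$ side, the strict bound $v_s(A_s) < 2 v_s(B_s)$ supplies the strictness needed for a contradiction. I should also check the degenerate cases. If $h_i(0)$ is undefined this causes no difficulty, since $B_i \neq \emptyset$. The maximization defining $s$ only ranges over agents with $F_s \neq \emptyset$, and $i$ qualifies because $g \in F_i$.
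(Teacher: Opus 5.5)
Your proof is correct and follows essentially the same route as the paper. The paper shows that $s$ weakly dominates every agent $t\notin Z_{f_s}$ with $f_s\in F_t$, using the same chain: $v_s(F_s\setminus\{f_s\})<v_s(B_s)$, maximality of $s$, the gap $v_t(f_t)\ge 2v_t(f_s)$, and the scaling hypothesis. It then invokes \Cref{lem:weak_domination}, whereas you inline that exchange step directly and apply the scaling hypothesis at $(v_i(B_i),v_i(f_i)/2)$ rather than at $(v_t(B_t),v_t(f_s))$, which are cosmetic differences only.
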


    \begin{algorithm}[t]
    \caption{\textsc{OptimalRestrictedCompletion}}
    \label{alg:OptimalRestrictedCompletion}
    \DontPrintSemicolon
    \textbf{Global:} Instance $\langle N, M, \V \rangle$ with doubling valuations and concave functions $h_1, h_2, \ldots, h_n$.\;
    \KwIn{Partial allocation $B = (B_1,\dots,B_n)$ with $B_i \neq \phi$ for all $i \in N$.}
    \KwOut{An optimal restricted completion $B^\star$ of $B$ that maximizes $\WH(C)$ among all restricted completions $C$ of $B$.}

    $F \gets M \setminus \bigcup_{r \in N} B_r$ \tcc*{unallocated goods under $B$}
    \If{$F = \emptyset$}{
        \Return $B$ \tcc*{All goods allocated}
    }
    \For{$r \in N$}{
        $F_r \gets \{g \in F : v_r(g) < \max_{g' \in B_r} v_r(g')\}$\;
        \If{$F_r$ is non-empty}{
            $f_r \gets \arg\max_{g \in F_r} v_r(g)$ \tcc*{Favorite feasible good (break ties arbitrarily)}
        }
    }
    \For{$g \in F$}{
        $Z_g \gets \{r \in N : F_r \neq \emptyset \text{ and } f_r = g\}$\;
    }
    
    $s \gets \underset{\substack{r \in N: F_r \neq \emptyset}}{\arg\max}\ \Delta_r(2v_r(B_r), v_r(f_r))$ \tcc*{Identify agent as in~\Cref{clm:good_identification}}
    $g \gets f_s$ \tcc*{Good to be allocated}
    $B^\star \gets $ an arbitrary restricted completion of $B$ \tcc*{Best completion found so far}

    \For{$k \in Z_g$}{
        \If{agent $k$ is not weakly dominated by any agent w.r.t.\ good $g$ under $B$}{
            $B' \gets B$\\
            $B'_k \gets B'_k \cup \{g\}$ \tcc*{Allocate good $g$ to agent $k$}
            $C \gets \textsc{OptimalRestrictedCompletion}(B')$\;
            \If{$\WH(C) > \WH(B^\star)$}{
                $B^\star \gets C$\;
            }
        }
    }

    \Return $B^\star$
\end{algorithm}

We present our algorithm \textsc{OptimalRestrictedCompletion} in \Cref{alg:OptimalRestrictedCompletion}. Let us first discuss its correctness. Our algorithm identifies a good $g = f_s$ that must be allocated to some agent in $Z_g$ in every optimal restricted completion of $B$. We then try all possible allocations of $g$ to agents in $Z_g$, pruning only those agents that are weakly dominated by some other agent (and hence could not have received the good $g$ in any optimal completion, because of~\Cref{lem:weak_domination}). Hence, we never prune any optimal restricted completion. For bounding the runtime, we will prove the exact same bound on the number of leaves in the recursion tree of~\Cref{alg:OptimalRestrictedCompletion} as we did for~\Cref{alg:OptimalCompletion}.

\begin{lemma}
    The number of leaves in the recursion tree of \textsc{OptimalRestrictedCompletion} is at most $(m + 1)^{\binom{n}{2}}$.
    \label{lem:leaf_bound_doubling}
\end{lemma}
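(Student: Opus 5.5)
We follow the same template as the proof of \Cref{lem:Ordered_Lexicographic_Running_Time}: define an allocation matrix for each node and prove the leaf bound of \Cref{lem:leaf_bound} for the recursion tree of \textsc{OptimalRestrictedCompletion}. For a node $A$ of this tree, let $M(A)$ be the binary matrix whose rows are indexed by the goods unallocated in $A$ and whose columns are indexed by agents. Its $(g,i)$ entry is $1$ iff some descendant of $A$ allocates $g$ to $i$. Every recursive call allocates exactly one good, so the depth is at most $m$ and every column dot product is at most $m$. It therefore suffices to show
\[
\text{leaves}(A)\le \prod_{1\le i<k\le n}\bigl(1+M(A)^i\cdot M(A)^k\bigr).
\]
Evaluating this at the root gives $(m+1)^{\binom n2}$.

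\textbf{Mutual exclusion.} The key ingredient is the analogue of \Cref{cor:mutual_exclusion_siblings}. Let $B$ be a node whose chosen good is $g=f_s$, and let $C_i$, $C_k$ be two explored children with $i,k\in Z_g$ receiving $g$. Membership in $Z_g$ means $g$ is the favourite allocable good for both $i$ and $k$ under $B$. For any later good $e$ and any descendants $\widehat C_i$ of $C_i$ and $\widehat C_k$ of $C_k$ at which $e$ is being decided, we want the following dichotomy: either $k$ weakly dominates $i$ w.r.t.\ $e$ under $\widehat C_i$, or $i$ weakly dominates $k$ w.r.t.\ $e$ under $\widehat C_k$. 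This is the refinement \Cref{lem:weak_pruning_lemma} promised in the text.

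I would prove it by mimicking the proof of \Cref{lem:domination_based_pruning_lemma}, with $F_k$ in place of the full remainder $Y$. Suppose neither domination holds. Combining the two failed inequalities (\ref{eqn:weak_domination_2}) gives
\[
\Delta_i(v_i(\widehat C_{i,i}),v_i(e))\ge \Delta_k(v_k(\widehat C_{i,k}\cup F_k\setminus\{e\}),v_k(e))
\]
and the symmetric inequality with $i$ and $k$ swapped. The key observation is the following. In $\widehat C_i$, agent $i$ holds $g$ while $k$'s restricted bundle plus allocable set is at most $v_k(B_k)+v_k(\text{goods below } g)<v_k(B_k)+v_k(g)$. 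Symmetrically in $\widehat C_k$. Concavity and monotonicity of $\Delta$ in each argument then produce a cyclic chain of strict inequalities, which is a contradiction. Here the lexicographic property is used only through ``$g$ outweighs everything allocable below it'' for both agents, and that holds since $g$ is each agent's top allocable good.

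\textbf{Induction for the leaf bound.} Given mutual exclusion, the induction is identical to the one for \Cref{lem:leaf_bound}. If $B$ has explored children $C_{k_1},\dots,C_{k_t}$ (all in $Z_g$), each child matrix is a submatrix of $M(B)$ with row $g$ removed. Row $g$ of $M(B)$ has ones exactly in columns $k_1,\dots,k_t$, so each pair among them contributes $1$ to the dot product from row $g$ alone. Mutual exclusion lets us distribute the remaining overlap between siblings, so that the sum over children of the inductive products is at most the product for $B$. I would reuse the combinatorial inequality from the ordered proof verbatim.

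\textbf{Main obstacle.} The delicate point is that the goods are not processed in a common order. The later good $e$ may be allocable to $i$ in one subtree and not in the other, and the allocable sets $F_i$ shrink differently in the two subtrees. The dichotomy must therefore be checked with respect to the current $F_k$ at $\widehat C_i$ and $\widehat C_k$. I expect the inequality chain to go through because the restriction (each agent's maximum good is fixed) only makes $F_k$ smaller, which weakens the dominating side's utility bound in the favourable direction. If some $e$ is not allocable to $i$ (resp.\ $k$) in a subtree, the corresponding matrix entry is $0$ and the exclusion holds trivially.
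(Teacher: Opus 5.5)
Your proposal is correct and follows the paper's proof essentially step for step: the same matrix $M(B)$ with row $g$ removed in the inductive step, mutual exclusion via the weak-domination analogue of \Cref{lem:domination_based_pruning_lemma} (the paper replaces $T$ by $T\cap F_k\setminus\{g\}$, which is exactly your ``$g$ outweighs everything allocable below it'' chain), and the same $h_{i,k}+h_{k,i}\le 1$ plus random-orientation argument from \Cref{lem:leaf_bound}. The concern in your last paragraph does not actually arise: whether an unallocated good $e$ is allocable to an agent depends only on that agent's fixed top good, so it cannot differ between the two sibling subtrees.
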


The proof of~\Cref{lem:leaf_bound_doubling} is completely analogous to that of~\Cref{lem:Ordered_Lexicographic_Running_Time}. First, we need the following refinement of~\Cref{lem:domination_based_pruning_lemma}:

\begin{restatable}{lemma}{WeakDominationLemma}
    Consider partial allocations $B, A$ and $A'$, where $B_r \subseteq A_r$ and $B_r \subseteq A'_r$ for all agents $r \in N$, such that $f_i = f_k$ (defined with respect to the partial allocation $B$) for some agents $i, k \in N$. Suppose that $f_i \in A_i$ and $f_i \in A'_k$. Then, for any good $e \in F_i \cap F_k$ that is unallocated in both $A$ and $A'$, either agent $k$ weakly dominates agent $i$ with respect to $e$ under the partial allocation $A$ or agent $i$ weakly dominates agent $k$ with respect to $e$ under the partial allocation $A'$.
    \label{lem:weak_pruning_lemma}
\end{restatable}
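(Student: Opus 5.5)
The plan is to argue by contradiction via a cyclic chain of marginal inequalities, the same pattern one expects behind \Cref{lem:domination_based_pruning_lemma}. Let $g \coloneqq f_i = f_k$, and write $F_r^{B}$, $F_r^{A}$, $F_r^{A'}$ for the allocable sets computed with respect to $B$, $A$ and $A'$.

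Throughout, I will use that $A$ and $A'$ extend $B$ only by allocable goods, as happens along the recursion of \Cref{alg:OptimalRestrictedCompletion}. Concretely, $A_r \setminus B_r \subseteq F_r^{B}$ and $A'_r \setminus B_r \subseteq F_r^{B}$ for every agent $r$. Consequently each agent's top good, and hence its threshold, is unchanged, and $F_r^{A}, F_r^{A'} \subseteq F_r^{B}$ because the unallocated set only shrinks.

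Suppose neither weak domination holds. By \eqref{eqn:weak_domination_2} this means
\[
\Delta_i\bigl(v_i(A_i), v_i(e)\bigr) \ge \Delta_k\bigl(v_k(A_k \cup F_k^{A} \setminus \{e\}), v_k(e)\bigr)
\quad\text{and}\quad
\Delta_k\bigl(v_k(A'_k), v_k(e)\bigr) \ge \Delta_i\bigl(v_i(A'_i \cup F_i^{A'} \setminus \{e\}), v_i(e)\bigr).
\]

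The key step is two utility comparisons coming from the lexicographic gap at $g$. For agent $i$, note that $g \in A_i$ gives $v_i(A_i) \ge v_i(B_i) + v_i(g)$. On the other side, $g \in A'_k$, so $g \notin A'_i \cup F_i^{A'}$. Hence $(A'_i \cup F_i^{A'}) \setminus B_i \subseteq F_i^{B} \setminus \{g\}$.

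Since $g$ is $i$'s favorite good in $F_i^{B}$, lexicographicity (implied by doubling) gives $v_i(g) > v_i(F_i^{B} \setminus \{g\})$. Therefore
\[
v_i(A_i) > v_i\bigl(A'_i \cup F_i^{A'} \setminus \{e\}\bigr).
\]
Symmetrically, using $g \in A'_k$ and $g \notin A_k \cup F_k^{A}$, we get
\[
v_k(A'_k) > v_k\bigl(A_k \cup F_k^{A} \setminus \{e\}\bigr).
\]
Also $e \neq g$, since $g$ is allocated in both $A$ and $A'$, so $v_i(e), v_k(e) > 0$.

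Now chain the inequalities. By strict concavity, $\Delta_r(\cdot, y)$ is strictly decreasing for $y > 0$. This gives
\[
\Delta_i(v_i(A_i), v_i(e)) < \Delta_i(v_i(A'_i \cup F_i^{A'} \setminus\{e\}), v_i(e)) \le \Delta_k(v_k(A'_k), v_k(e)) < \Delta_k(v_k(A_k \cup F_k^{A} \setminus\{e\}), v_k(e)) \le \Delta_i(v_i(A_i), v_i(e)),
\]
which is a contradiction.

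The main obstacle I anticipate is the bookkeeping that justifies $(A'_i \cup F_i^{A'}) \setminus B_i \subseteq F_i^{B} \setminus \{g\}$. This requires that the allocable sets are monotone under the extensions considered, i.e.\ that thresholds do not increase. I would state this explicitly as the standing property of partial allocations in the recursion tree, whose nodes are all restricted extensions of $B$. The case where some bundle is empty, so that $\Delta$ is $+\infty$, cannot arise here because every $B_r \neq \emptyset$.
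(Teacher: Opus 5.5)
Your proof is correct and follows essentially the same route as the paper. Like the paper, you negate both weak-domination conditions, use the lexicographic gap at $g=f_i=f_k$ to get $v_i(A_i) > v_i(A'_i \cup F_i^{A'}\setminus\{e\})$ and $v_k(A'_k) > v_k(A_k \cup F_k^{A}\setminus\{e\})$, and close the cyclic chain of strictly decreasing marginals. Your sets $F_k^{A}\setminus\{e\}$ and $F_i^{A'}\setminus\{e\}$ coincide with the paper's $T\cap F_k\setminus\{g\}$ and $T'\cap F_i\setminus\{g\}$, and your explicit assumption that $A$ and $A'$ extend $B$ only by allocable goods is exactly what the paper uses implicitly when it appeals to the recursion-tree path from $B$.
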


The proof of~\Cref{lem:weak_pruning_lemma} is similar to that of~\Cref{lem:domination_based_pruning_lemma}. The only difference is that, $T$ and $T'$ are replaced by $T \cap F_k\setminus \{g\}$ and $T' \cap F_i\setminus \{g\}$, respectively, in the proof. We include the proof in~\Cref{sec:missing_proofs_doubling} for completeness.

Now, let us fix a choice of maximum valued goods for each agent in an optimal allocation. This gives a partial allocation $X$ where each agent $i$ has a single good. We now argue that the recursion tree of \textsc{OptimalRestrictedCompletion}$(X)$ has the same type of ``mutual exclusion'' structure as in the ordered lexicographic case. Fix a node $B$, and let $g$ be the good selected to be allocated at $B$ with candidate set $Z_g$. Consider two distinct children of $B$ obtained by allocating $g$ to agents $i,k \in Z_g$, yielding partial allocations $C_i$ and $C_k$, respectively. Let $e$ be any good in $(F_i \cap F_k) \setminus \{g\}$ that is unallocated in $B$. Then, we claim that, either $e$ is never allocated to $i$ in the subtree of $C_i$, or $e$ is never allocated to $k$ in the subtree of $C_k$. To see this, suppose not. Then, there exists a partial allocation $A$ in the subtree of $C_i$, and a partial allocation $A'$ in the subtree of $C_k$, such that $e$ is unallocated in both $A$ and $A'$, and both $A$ and $A'$ have a child corresponding to the allocation of $e$ to $i$ and $k$, respectively. But then, by~\Cref{lem:weak_pruning_lemma}, either $k$ weakly dominates $i$ with respect to $e$ under the partial allocation $A$, or $i$ weakly dominates $k$ with respect to $e$ under the partial allocation $A'$, a contradiction.

This yields the following corollary, analogous to \Cref{cor:mutual_exclusion_siblings}, which we will use to bound the number of leaves in the recursion tree. Note that the above discussion only implies~\Cref{cor:weak_mutual_exclusion_siblings} for $e \in F_i \cap F_k$. However, if $e \notin F_i \cap F_k$, then at least one of the two conditions of the corollary is trivially true (if $e \notin F_i$, then $e$ is never allocated to $i$).

\begin{corollary}
    \label{cor:weak_mutual_exclusion_siblings}
    Fix any node $B$ in the recursion tree of \textsc{OptimalRestrictedCompletion} and let $g$ be the good selected at $B$, with candidate set $Z_g$. For any two distinct agents $i,k \in Z_g$, let $C_i$ and $C_k$ denote the corresponding children of $B$ obtained by allocating $g$ to $i$ and $k$, respectively. Then, for every good $e$ other than $g$, that is unallocated in $B$, at least one of the following holds:
    \begin{enumerate}
        \item no partial allocation in the subtree rooted at $C_i$ allocates $e$ to agent $i$, or
        \item no partial allocation in the subtree rooted at $C_k$ allocates $e$ to agent $k$.
    \end{enumerate}
\end{corollary}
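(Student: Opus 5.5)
The plan is to argue by contradiction, combining \Cref{lem:weak_pruning_lemma} with the pruning rule of \textsc{OptimalRestrictedCompletion}, as sketched in the discussion preceding the corollary. The proof also needs the trivial case $e\notin F_i\cap F_k$. A preliminary observation makes that case work: all sets $F_r$ are defined relative to the threshold $\max_{g'\in B_r} v_r(g')$. This threshold never changes along a root-to-leaf path, because only goods from $F_r$ are ever added to agent $r$, and each of them is strictly below the threshold. Hence, at any descendant $D$ of $B$, the set $F_r(D)$ equals $F_r(B)$ minus the goods allocated between $B$ and $D$. In particular, if some node in the subtree of $C_i$ allocates $e$ to $i$, then $e\in F_i(B)$, since the algorithm only gives a good to an agent in its candidate set, i.e., one for which the good is its favorite allocable good. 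Consequently, if $e\notin F_i\cap F_k$ (with respect to $B$), one of the two alternatives holds vacuously.

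Now assume $e\in (F_i\cap F_k)\setminus\{g\}$, and suppose both alternatives fail. Then there is a node $A$ in the subtree of $C_i$ at which $e$ is the selected good and $i$ is a child-producing candidate, so $i$ is not weakly dominated with respect to $e$ under $A$. Similarly, there is a node $A'$ in the subtree of $C_k$ at which $e$ is selected and $k$ is not weakly dominated with respect to $e$ under $A'$. Since each good is allocated at most once along a path, $e$ is unallocated in both $A$ and $A'$. Moreover, $B_r\subseteq A_r$ and $B_r\subseteq A'_r$ for all $r$, with $g\in A_i$ and $g\in A'_k$.

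Because $i,k\in Z_g$ at $B$, we have $f_i=f_k=g$ with respect to $B$, and $e\in F_i\cap F_k$. So all hypotheses of \Cref{lem:weak_pruning_lemma} are met. That lemma gives that either $k$ weakly dominates $i$ with respect to $e$ under $A$, or $i$ weakly dominates $k$ with respect to $e$ under $A'$. Either conclusion contradicts the fact that the corresponding branch was explored, because the algorithm discards any candidate weakly dominated by some agent.

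The only delicate step is the bookkeeping in the first paragraph. Membership in $F_i$ and the weak-domination test are evaluated at the descendant nodes $A$ and $A'$, whereas \Cref{lem:weak_pruning_lemma} refers to $F_i$ and $F_k$ relative to $B$. I expect this to be resolved exactly by the observation that $F_r$ at a descendant is $F_r(B)$ with the already-allocated goods removed. It follows that $e$, being unallocated at $A$ and $A'$, lies in the same sets at $A$ and $A'$ as at $B$, so the weak-domination conditions in the lemma and in the algorithm coincide.
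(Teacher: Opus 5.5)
Your proposal is correct and follows the paper's argument: a contradiction obtained by applying \Cref{lem:weak_pruning_lemma} to the two descendant nodes at which $e$ is branched to $i$ and to $k$, with the case $e\notin F_i\cap F_k$ dismissed as trivial. Your observation that $F_r$ at a descendant equals $F_r(B)$ minus the goods allocated since $B$ is a welcome addition; the paper uses this implicitly, without stating it, inside the proof of \Cref{lem:weak_pruning_lemma}.
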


 Similar to the proof of~\Cref{lem:leaf_bound}, we now define a matrix $M(B)$ as follows. The rows correspond to the goods not allocated under $B$. The columns correspond to the agents. An entry $M(B)[g,i]$ is $1$ if good $g$ is ever allocated to agent $i$ in the subtree of node $B$ in the recursion tree. Trivially, note that $M(B)[g,i] = 0$ for all goods $g \notin F_i$. Define $\text{leaves}(B)$ to be the number of leaves in the subtree rooted at $B$. Then, using~\Cref{cor:weak_mutual_exclusion_siblings} completely analogously to the use of~\Cref{cor:mutual_exclusion_siblings} in the proof of ~\Cref{lem:leaf_bound}, we get the following result, that proves~\Cref{lem:leaf_bound_doubling}. The only difference is that, now we remove the row corresponding to the good $g$, instead of the good $g_{j+1}$ from the matrix $M(A)$ to get the matrix $M'(A)$. For the sake of completeness, we include the analogous proof in~\Cref{sec:missing_proofs_doubling}.
 \begin{restatable}{lemma}{LeafBoundWeak}
    For any node $A$ in the recursion tree of \textsc{OptimalRestrictedCompletion},
    
    $$\text{leaves(A)} \leq \prod_{1 \le i < j \leq n} \left( 1 + M(A)^i \cdot M(A)^j \right)$$
    
    where $M(A)^i$ is the $i^{th}$ column of matrix $M(A)$ and $\cdot$ denotes the dot product.
    \label{lem:leaf_bound_weak}
\end{restatable}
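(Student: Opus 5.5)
We will argue by induction on the number of unallocated goods at $A$ (equivalently, on the height of the subtree rooted at $A$).

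\emph{Base cases.} If $A$ is a leaf, then $\text{leaves}(A)=1$ and the product is at least $1$. If $A$ has a single child $C$, then $\text{leaves}(A)=\text{leaves}(C)$. The subtree of $C$ is contained in that of $A$, and the rows of $M(C)$ are a subset of the rows of $M(A)$ (the row of the good allocated at $A$ is removed). Hence every entry of $M(C)$ is dominated by the corresponding entry of $M(A)$. All factors are monotone in these entries, so the bound for $C$ implies the bound for $A$.

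\emph{General step.} Let $g$ be the good selected at $A$, and let $C_{k_1},\dots,C_{k_t}$ be the explored children, where $k_1,\dots,k_t\in Z_g$ are distinct agents. Define $M'(A)$ as $M(A)$ with the row of $g$ deleted. Then $M(C_{k_s})$ is entrywise at most $M'(A)$, with each child's matrix obtained by restricting to its own subtree.

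For each child, apply the induction hypothesis:
\[
\text{leaves}(C_{k_s})\le \prod_{i<j}\bigl(1+M(C_{k_s})^i\cdot M(C_{k_s})^j\bigr).
\]
The aim is to show that the sum of these $t$ products is at most $\prod_{i<j}(1+M(A)^i\cdot M(A)^j)$.

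\emph{Key observation.} Since every $k_s$ receives $g$ in $C_{k_s}$, the row of $g$ in $M(A)$ has ones in all columns $k_1,\dots,k_t$. Therefore, for each pair $s<s'$,
\[
M(A)^{k_s}\cdot M(A)^{k_{s'}} \ge 1+ M'(A)^{k_s}\cdot M'(A)^{k_{s'}}.
\]
Moreover, by \Cref{cor:weak_mutual_exclusion_siblings}, for every remaining good $e$, either $M(C_{k_s})[e,k_s]=0$ or $M(C_{k_{s'}})[e,k_{s'}]=0$.

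Write $a_e^{(s)} := M(C_{k_s})[e,k_s]\,M'(A)[e,k_{s'}]$ and $b_e^{(s')} := M(C_{k_{s'}})[e,k_{s'}]\,M'(A)[e,k_s]$. By mutual exclusion, for each $e$ at most one of these two terms can be nonzero, so
\[
M(C_{k_s})^{k_s}\cdot M(C_{k_s})^{k_{s'}} + M(C_{k_{s'}})^{k_s}\cdot M(C_{k_{s'}})^{k_{s'}} \le M'(A)^{k_s}\cdot M'(A)^{k_{s'}}.
\]
That is, the dot products $d_s$ and $d_{s'}$ coming from the two children for the pair $\{k_s,k_{s'}\}$ sum to at most $D:=M'(A)^{k_s}\cdot M'(A)^{k_{s'}}$. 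For pairs not entirely inside $\{k_1,\dots,k_t\}$, each child's factor is simply bounded by the corresponding factor of $M'(A)$, and hence of $M(A)$.

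\emph{Main obstacle: the combinatorial inequality.} It remains to show
\[
\sum_s \prod_{s'\ne s}(1+x_{s,s'}) \le \prod_{s<s'}(2+D_{ss'}),
\qquad\text{where } x_{s,s'}+x_{s',s}\le D_{ss'}.
\]
For $t=2$ this reads $(1+x)+(1+y)\le 2+D$, which holds exactly.

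For general $t$, I would try the following route. Expand the left side as a sum over pairs. Each pair $\{s,s'\}$ contributes factors only to the terms for $s$ and $s'$. Then use
\[
(1+x)\,\text{stuff}_1 + (1+y)\,\text{stuff}_2 \le (2+x+y)\max(\text{stuff}_1,\text{stuff}_2)
\]
and induct on $t$. Concretely, I would bound the sum by $\prod_{s<s'}(1+x_{s,s'}+x_{s',s}+1)$, noting that each term on the left is one monomial selection within the expansion of the right side. Specifically, choosing for each pair either $(1+x_{s,s'})$ or $(1+x_{s',s})$ according to which of its two endpoints is $s$ yields a term dominated by the expansion, since all quantities are nonnegative and the expansion contains at least $t$ distinct such selections.

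Multiplying in the remaining pair factors, each at most $1+M(A)^i\cdot M(A)^j$, completes the induction.
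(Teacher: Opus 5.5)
Your ingredients are the right ones and match the paper's: induction on height, $M(C_{k_s})\le M'(A)$ entrywise, the extra $1$ contributed by the row of $g$, and the mutual-exclusion estimate $x_{s,s'}+x_{s',s}\le D_{ss'}$. The gap is in the reduction to your combinatorial inequality. For $t\ge 3$, the inductive bound for child $C_{k_s}$ also contains a factor $1+M(C_{k_s})^{k_{s'}}\cdot M(C_{k_s})^{k_{s''}}$ for every pair of explored agents $k_{s'},k_{s''}$ with $s\notin\{s',s''\}$. Mutual exclusion says nothing about these factors. They can only be bounded by $1+D_{s's''}$. They also differ from child to child, so you cannot pull them out as you did for pairs that leave $\{k_1,\dots,k_t\}$.

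Your sentence ``each pair $\{s,s'\}$ contributes factors only to the terms for $s$ and $s'$'' is therefore false. Writing $P_s$ for the set of pairs of indices not containing $s$, the inequality you actually need is
\[
\sum_{s}\prod_{s'\ne s}(1+x_{s,s'})\prod_{\{s',s''\}\in P_s}(1+D_{s's''})\le\prod_{s<s'}(2+D_{ss'}).
\]
Your route through $\prod_{s<s'}(2+x_{s,s'}+x_{s',s})$ cannot prove this. Take $t=3$, all $x_{s,s'}=0$ and all $D_{ss'}=10$: the left side equals $33$, while $\prod_{s<s'}(2+x_{s,s'}+x_{s',s})=8$.

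The repair is to keep the slack $r_{ss'}:=D_{ss'}-x_{s,s'}-x_{s',s}\ge 0$. Expand $\prod_{s<s'}\bigl((1+x_{s,s'})+(1+x_{s',s})+r_{ss'}\bigr)$ by picking one of the three terms for each pair. Consider the choices in which every pair containing $s$ picks $(1+x_{s,s'})$. They sum to $\prod_{s'\ne s}(1+x_{s,s'})\prod_{\{s',s''\}\in P_s}(2+D_{s's''})$, which dominates the $s$-th term above. These families are disjoint for distinct $s$, since the pair $\{s,s''\}$ cannot select both its $s$-term and its $s''$-term.

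This is the paper's argument in another form. The paper divides each child's bound by $\prod_{p<q}(1+M(A)^p\cdot M(A)^q)$ and discards pairs not containing $i$, whose ratios are at most $1$. It then sets $h_{i,k}=(1+M(C_i)^i\cdot M'(A)^k)/(1+M(A)^i\cdot M(A)^k)$, so that $h_{i,k}+h_{k,i}\le 1$. Finally it proves $\sum_{i\in S}\prod_{k\in S\setminus\{i\}}h_{i,k}\le 1$ via a random orientation of the complete graph on the explored set $S$, in which each edge is also deleted with probability $1-h_{i,k}-h_{k,i}$. That deletion event is exactly the slack your version throws away.
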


Hence, each call to \textsc{OptimalRestrictedCompletion}$(X)$ takes $\O((m + 1)^{\binom{n}{2}} \cdot \text{poly}(n,m))$ time. Since we run \textsc{OptimalRestrictedCompletion} for each of the $\O(m^n)$ choices of maximum valued goods, we get~\Cref{thm:Doubling_Constant_n}. The pseudocode for the overall algorithm is presented in~\Cref{alg:Doubling_Constant_n}.

\begin{algorithm}[t]
    \caption{\textsc{OptimalCompleteAllocation}}
    \DontPrintSemicolon
    \textbf{Global:} Instance $\langle N, M, \V \rangle$ with doubling valuations and increasing concave functions $h_1, h_2, \ldots, h_n$.\;
    \KwIn{Instance $\langle N,M,\V\rangle$ with doubling valuations and constant $n=|N|$.}
    \KwOut{An allocation $A^\star$ that maximizes $\WH(A)$ among all allocations $A$.}

    $A^\star \gets$ an arbitrary allocation\;

    \ForEach{tuple $(g_1,\dots,g_n)\in M^n$}{
        \If{$|\{g_1,\dots,g_n\}|<n$}{
            \textbf{continue}\tcc*{$g_i$ must be all distinct.}
        }

        $X \gets (X_1,\dots,X_n)$ with $X_i \gets \{g_i\}$ for all $i\in N$\;
        $F \gets M \setminus \bigcup_{i\in N} X_i$\;
        $\textsc{Feasible} \gets \texttt{true}$\;
        \For{$g\in F$}{\label{l:check_feasible_begin}
            \If{$v_i(g) > v_i(g_i)$ for every $i\in N$}{
                $\textsc{Feasible} \gets \texttt{false}$\;
                \textbf{break}\tcc*{$g$ cannot be allocated in a restricted completion.}
            }\label{l:check_feasible_end}
        }
        \If{$\textsc{Feasible}=\texttt{false}$}{
            \textbf{continue}\tcc*{Move to the next tuple.}
        }

        $A \gets \textsc{OptimalRestrictedCompletion}(X)$\;

        \If{$\WH(A) > \WH(A^\star)$}{
            $A^\star \gets A$\;
        }
    }
    \Return $A^\star$\;
\label{alg:Doubling_Constant_n}
\end{algorithm}

\section{Hardness Results}
\label{sec:Hardness_Results}

In this section, we establish hardness results for maximizing Nash welfare under lexicographic valuations. In all the hard instances we create, each good is valued \emph{positively} by every agent. The following lemma, which follows from~\Cref{lem:domination} will be useful in our reductions. We include its proof in~\Cref{sec:hardness_appendix}.

\begin{restatable}{lemma}{TransferLemma}
    Suppose in a complete allocation $A$, agent $i$ holds good $g$, and each agent has a positive utility. Suppose there exists an agent $k$ such that
    \[
        \frac{v_i(A_i \setminus \{g\})}{v_k(A_k)} > \frac{v_i(g)}{v_k(g)}.
    \]
    Then, $A$ cannot be Nash optimal.
    \label{lem:transfer_lemma}
\end{restatable}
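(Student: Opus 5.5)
The plan is to exhibit an explicit single-good transfer that strictly increases the Nash product; this contradicts optimality directly. We could instead view this as an instance of \Cref{lem:domination} with $h_r=\log$ for all $r$, but the direct computation is shorter and avoids completions. Concretely, let $A'$ be the allocation obtained from $A$ by moving $g$ from agent $i$ to agent $k$. Every other agent's bundle is unchanged, so we only need to compare the factors for agents $i$ and $k$.

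Write $a = v_i(A_i\setminus\{g\})$ and $b = v_k(A_k)>0$. Under $A$ the relevant factors contribute $(a+v_i(g))\,b$, and under $A'$ they contribute $a\,(b+v_k(g))$. Expanding both products, we have
\[
a(b+v_k(g)) - (a+v_i(g))\,b = a\,v_k(g) - b\,v_i(g).
\]
The hypothesis $a/b > v_i(g)/v_k(g)$ uses $v_k(g)>0$ (all goods are valued positively in our instances, and the ratio is well defined). Multiplying through by $b\,v_k(g)>0$ shows exactly that this difference is positive.

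It remains to check that the Nash product of $A'$ is positive, so that the comparison is meaningful. Since $a/b > v_i(g)/v_k(g) \ge 0$, we get $a>0$, so agent $i$ still has positive utility after giving away $g$. Agent $k$'s utility only increases, and all other agents' utilities are positive and unchanged. Hence $A'$ has a strictly larger Nash product, and therefore strictly larger Nash welfare, than $A$, so $A$ is not Nash optimal.

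No step is a real obstacle. The only care needed is the positivity bookkeeping: $v_k(g)>0$ so the ratio is defined, and $a>0$ so the new product is nonzero.
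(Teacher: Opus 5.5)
Your proof is correct and is essentially the paper's argument written out directly. The paper rewrites the hypothesis as $\log(1+v_i(g)/v_i(A_i\setminus\{g\})) < \log(1+v_k(g)/v_k(A_k))$, i.e., agent $k$ dominates agent $i$ with respect to $g$ (taking $h=\log$ and $Y=\emptyset$), and then invokes \Cref{lem:domination}, whose proof is exactly your single transfer of $g$ from $i$ to $k$ with the same positivity bookkeeping.
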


We start by proving NP-hardness of Nash welfare maximization for ordered lexicographic valuations. The decision version of the Nash welfare problem under ordered lexicographic valuations is formally stated below.

\smallskip

\begin{center}
\fbox{
\begin{minipage}{0.9\linewidth}
\textbf{\textsc{Ordered-Lex-NSW.}} Given an instance $\langle N, M, \{v_i\}_{i \in N} \rangle$ with ordered lexicographic valuations, and a threshold $\theta$, decide whether there exists an
allocation $A$ such that $W^\texttt{Nash}(A) \geq \theta$.
\end{minipage}%
}
\end{center}

\begin{theorem}
    \textsc{Ordered-Lex-NSW} is NP-hard.
    \label{thm:NP-hard_Ordered_Nash}
\end{theorem}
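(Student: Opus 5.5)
We will reduce from Perfect $3$-Dimensional Matching, as announced in the overview. Let the input consist of disjoint sets $X,Y,Z$ with $|X|=|Y|=|Z|=q$ and a family of triples $T=\{t_1,\dots,t_s\}\subseteq X\times Y\times Z$. We create one agent per triple, so $n=s$. The goods are arranged in a single common order, which makes the instance ordered lexicographic.

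\textbf{Goods and their order.} First come $s-q$ high \emph{dummy} goods $d_1,\dots,d_{s-q}$. These serve to occupy the agents whose triples are not chosen. Next come element goods for $x\in X$, then for $y\in Y$, then for $z\in Z$. Each agent $t=(x,y,z)$ values the goods of its own three elements at a ``high'' level and every other element good at a ``low'' level. The levels are chosen so that superincreasingness holds for every agent in the common order, for instance via rapidly decaying base values such as $4^{-j}$ for position $j$. Agent-specific multiplicative bumps then affect only the goods of the agent's own elements.

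\textbf{Threshold.} The intended optimum has two kinds of agents. The $s-q$ unchosen agents each receive one dummy good. The $q$ chosen agents, corresponding to a perfect matching, each receive exactly their own $x,y,z$ goods. The threshold $\theta$ is set to the Nash welfare of this allocation. This value is the same for every perfect matching, because the dummy goods are valued identically by all agents and the element goods are calibrated so that each chosen agent's utility is a fixed constant.

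\textbf{Soundness and main obstacle.} Completeness is immediate from the construction. The hard direction is soundness: any allocation reaching $\theta$ must encode a perfect matching. I would argue it in three steps.
\begin{itemize}
\item First, by the EF1 structure (each agent needs a good among its top $n$) together with the transfer lemma (\Cref{lem:transfer_lemma}), each dummy good goes to a distinct agent, and no dummy holder receives anything further.
\item Second, exactly $q$ agents remain, and they must share the $3q$ element goods. A counting or AM–GM argument shows the product is maximized only when each remaining agent receives exactly three goods, each valued at the high level.
\item Third, this forces each remaining agent's bundle to be exactly its own triple, which is a perfect matching.
\end{itemize}
The main obstacle is choosing numerical values so that all of the following hold at once: the common order stays lexicographic for every agent; the high/low gap is large enough that taking a single foreign element good costs more than any rearrangement can recover; and the dummies are valued so that lumping dummies together, or mixing dummies with element goods, is strictly worse. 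I would first try an exponential scheme with element base values $2^{-3j}$, multiplied by a factor $(1+\delta)$ for own elements. I would then verify the soundness inequalities by comparing, via logarithms, each deviation class with $\theta$, keeping all numbers of polynomial bit length.
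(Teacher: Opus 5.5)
\textbf{Verdict: there is a genuine gap.} Your skeleton matches the paper's: agents are triples, $|T|-|X|$ dummies absorb the unchosen agents, the common order is dummies $\succ X\succ Y\succ Z$, own elements get multiplicative bumps, and the transfer lemma isolates dummy holders. The gap is the numerical calibration, which is where the proof actually lives.

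\textbf{Why constant own-triple utility breaks soundness.} You make $\theta$ matching-independent by requiring every chosen agent's own-triple utility to equal a common constant $K$. That utility is dominated by the agent's own $x$-good, so this forces agent-specific rescaling. With your base values $8^{-p}$, agent $t$'s element values are multiplied by roughly $K\cdot 8^{p_t}/(1+\delta)$, where $p_t$ is the position of its $x$-element.
\begin{itemize}
\item The Nash product over the chosen set $R$ then carries a factor $\prod_{t\in R}8^{p_t}$. This factor depends only on which triples are chosen, not on whether they are disjoint.
\item Concretely, take a no-instance and a set $R$ of $q$ triples in which two share the last $X$-good in the order, none contains the first, and the rest cover the remaining $x$'s.
\item Give dummies to everyone outside $R$; the dummy factor is unchanged because dummies are valued identically. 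Give one $x$ to each agent of $R$, with the first $x$ going to one of the two agents sharing the last one.
\item The product over $R$ is about $8^{q-1}K^q$, up to one lost bump $(1+\delta)^{-1}$ and lower-order factors $(1+O(1/8))^{-q}$. So $\theta$ is exceeded with no matching.
\end{itemize}

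\textbf{Why dropping the rescaling does not help.} If you keep common base values with bumps and no rescaling, utilities are not constant. The intended product $\prod_t(a_{i(t)}+b_{j(t)}+c_{k(t)})$ then depends on how the matching pairs $x$'s with $y$'s. For two triples, swapping their $y$'s changes the product of the $x$-plus-$y$ parts by $(a_i-a_{i'})(b_j-b_{j'})$ in absolute value. Worse, superincreasingness keeps the $Y$-layer bump below the ratio of consecutive base values. That bump is therefore swamped by the gain from giving large $y$'s to agents with small own $x$, so the $Y$ and $Z$ layers carry no usable signal.

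\textbf{The missing idea: layered relative valuations.} With $n=|X|$, $B=10$, $C=3$, the paper sets
\begin{itemize}
\item $v_t(x_r)=B^r$, and $CB^{i}$ for the own $x_i$;
\item $v_t(y_r)=v_t(x_i)B^{-10n+r}$, with an extra factor $C$ on the own $y_j$;
\item $v_t(z_r)=v_t(\{x_i,y_j\})B^{-20n+r}$, with an extra factor $C$ on the own $z_k$.
\end{itemize}
An intended agent's utility is $CB^{i}(1+CB^{-10n+j})(1+CB^{-20n+k})$. This is not constant, but its product over any perfect matching factorizes coordinate-wise, which is what makes $\theta$ well defined.

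\textbf{How soundness then goes.} It proceeds layer by layer rather than by an AM--GM count.
\begin{itemize}
\item Each chosen agent gets exactly one $X$-good, by the transfer lemma.
\item The product of $X$-values over any such assignment is $\prod_r B^r\cdot C^{\#\mathrm{own}}$, while all lower goods contribute at most $(1+B^{-7n})^n<2<C$. So every $x$ must be own.
\item Each chosen agent then gets exactly one $Y$-good. Because $Y$-values are measured relative to the own $x$, each agent's $Y$-factor is $1+B^{-10n+r}$ or $1+CB^{-10n+r}$ regardless of which $x$ it holds. A missed premium therefore costs more than the $Z$-layer error $1+B^{-17n}$ can recover.
\item The $Z$ layer is handled the same way.
\end{itemize}
Finally, your step ``three goods each valued at the high level'' has no meaning in the ordered setting. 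There is no global high level, since any earlier foreign good is worth more than all of an agent's later own goods.
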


Our proof relies on a reduction from the \textsc{Perfect 3D Matching} problem to \textsc{Ordered-Lex-NSW}. An instance of the perfect 3D matching problem consists of three disjoint sets
$X, Y, Z$ with $\abs{X}=\abs{Y}=\abs{Z}=n$, and a set of triples
$T \subseteq X \times Y \times Z$.
A \emph{3D matching} is a subset $T' \subseteq T$ such that no two distinct triples in $T'$
share an element, i.e., for any $(x,y,z), (x',y',z') \in T'$ we have
$x \neq x'$, $y \neq y'$, and $z \neq z'$. The problem asks whether there exists a perfect 3D matching in the input instance, that is, whether there exists a collection of $n$ triples in $T$
that \emph{covers} every element of $X \cup Y \cup Z$ exactly once.
This problem is known to be \NP-complete~\citep*{DBLP:conf/coco/Karp72}.

Given an instance of \textsc{Perfect 3D Matching}, we construct an instance of the Max Nash welfare problem as follows. We create an agent for each tuple in $T$. Each element of $X \cup Y \cup Z$ corresponds to a good. Additionally, we create a set of $|T| - n$ dummy goods $D = \{d_1, d_2, \ldots, d_{|T|-n}\}$. Thus, the total number of goods is $3n + (|T| - n) = |T| + 2n$. Let $X = \{x_1, x_2, \ldots, x_n\}$, $Y = \{y_1, y_2, \ldots, y_n\}$ and $Z = \{z_1, z_2, \ldots, z_n\}$. The \emph{common} preference order (say $\pi$) for each agent is as follows:
$$
\underbrace{d_{|T| - n} \succ d_{|T| - n - 1} \cdots \succ d_1}_{D}
\succ
\underbrace{x_n \succ x_{n-1} \cdots \succ x_1}_{X}
\succ
\underbrace{y_n \succ y_{n-1} \cdots \succ y_1}_{Y}
\succ
\underbrace{z_n \succ z_{n-1} \cdots \succ z_1}_{Z}
$$

Consider an agent corresponding to a tuple $t = (x_i, y_j, z_k) \in T$. We say that the goods $x_i, y_j$ and $z_k$ are the \emph{signature goods} for agent $t$. Our goal is to set up the valuations such that, whenever there is a perfect matching, the allocation that gives each agent its signature goods and assigns dummy goods to the remaining agents achieves the optimal Nash welfare. To achieve this, we carefully ``bump up'' the valuations of the signature goods for each agent. Let $B = 10, C = 3$ be constants. Then, we set the valuation function $v_t$ for agent $t = (x_i, y_j, z_k)$ as follows:

\begin{itemize}
    \item For each dummy good $d_r$, $v_t(d_r) = B^{10n + r}$.
    \item For each $r \neq i, v_t(x_r) = B^r$. Also, $v_t(x_i) = C \cdot B^i$.
    \item For each $r \neq j$, $v_t(y_r) = v_t(x_i) \cdot B^{-10n + r}$. Also, $v_t(y_j) = v_t(x_i) \cdot C \cdot B^{-10n + j}$.
    \item For each $r \neq k$, $v_t(z_r) = v_t(\{x_i, y_j\}) \cdot B^{-20n + r}$. Also, $v_t(z_k) = v_t(\{x_i, y_j\}) \cdot C \cdot B^{-20n + k}$.
\end{itemize}

It is easy to see that our instance has ordered lexicographic (in fact, also doubling) valuations, since for each agent, the value of each good is at least $B/C > 2$ times larger than the value of the next good in the common preference order $\pi$. The valuation functions are engineered to make the Nash product behave in \emph{phases}. 

First, dummy goods are so valuable that any Nash-optimal allocation must distribute them across distinct agents. Additionally, the gap in dummy and non-dummy goods is so large that in a Nash optimal allocation, any agent who receives a dummy good must not receive any other good. Hence, after allocating all the dummy goods, exactly $n$ agents remain to share the $3n$ goods in $X \cup Y \cup Z$. We will set up a threshold $\theta$ that is only achievable if and only if each of these $n$ agents is allocated all its signature goods. The construction makes the assignment of $X$ behave as the first ``phase'' of the objective. In particular, when we restrict attention to the $n$ agents without dummy goods, the Nash welfare can reach the threshold $\theta$ only when each of the $n$ goods in $X$ is given to an agent for whom it is a signature good. Any allocation that assigns some $x_r$ to an agent for whom $x_r$ is \emph{not} the signature loses a multiplicative factor (the $C$-boost) which cannot be compensated by any rearrangement of $Y$ and $Z$ goods since the goods in $X$ heavily dominate the goods in $Y \cup Z$.

The key idea is then to treat $Y$-goods as an \emph{increment relative to the agent's baseline after receiving its signature $x$}. 
Concretely, the values of goods $y\in Y$ are defined \emph{with respect to} the value of the signature $x$ so that once every agent (who does not receive a dummy good) has its signature $x$, giving an agent some $y$ produces essentially the same \emph{multiplicative} gain regardless of which agent it is (everyone is being ''scaled'' from their own $x$ baseline), except that the multiplicative is slightly higher when $y$ is the agent's signature $Y$-good. 
As a result, the Nash product no longer cares about how we permute the $y$-goods \emph{except} that assigning the \emph{signature} $y_j$ to agent $(x_i,y_j,z_k)$ gives a slightly larger boost; hence maximizing Nash welfare pushes the allocation toward making as many agents as possible get their signature $y$ simultaneously, which corresponds to choosing non-conflicting triples.

Finally, $z$-goods are defined similarly, but now \emph{relative to the utility from the union of the agent's signature $x$ and $y$}. 
This makes $z$ act as the third ``layer'' of controlled multiplicative improvement: once allocation of $X$ and $Y$ is fixed, allocating $Z$ is again almost neutral up to permutation, with an added premium for giving an agent its signature $z_k$. 
Thus, any allocation that achieves the maximum Nash product must align $x$, then $y$ given $x$, and then $z$ given $(x,y)$—exactly mirroring the requirement that the chosen triples form a perfect 3D matching. The corresponding threshold $\theta$ is defined as follows:
$$\theta = \left[\left(\displaystyle \prod_{r=1}^{|T| - n}B^{10n + r}\right) \left(\displaystyle \prod_{r=1}^{n}C \cdot B^{r}\right) \left(\displaystyle  \prod_{r=1}^{n} (1 + C \cdot B^{-10n + r})\right) \left(\displaystyle  \prod_{r=1}^{n}(1 + C \cdot B^{-20n + r})\right)\right]^{1/|T|}$$
The following lemma then establishes the correctness of the reduction. We include its proof in \Cref{sec:hardness_appendix}.

\begin{restatable}{lemma}{OrderedLexReduction}
    There exists a perfect 3D matching in the original instance if and only if there exists an allocation with Nash welfare at least $\theta$ in the constructed instance.
\end{restatable}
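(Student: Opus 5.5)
The plan is to prove the two directions separately. The forward direction is a direct computation. For the backward direction, I would replace the given allocation by a Nash optimal one and peel off its structure one ``phase'' at a time: dummies first, then $X$, then $Y$, then $Z$.

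\emph{Forward direction.} Let $T'\subseteq T$ be a perfect 3D matching. Give each agent $t=(x_i,y_j,z_k)\in T'$ exactly its three signature goods, and give the $|T|-n$ dummy goods one each to the agents in $T\setminus T'$. Every agent values $d_r$ at $B^{10n+r}$, so the dummy holders contribute $\prod_{r}B^{10n+r}$. A matched agent has utility
\[
CB^i\,(1+CB^{-10n+j})(1+CB^{-20n+k}),
\]
because $v_t(y_j)=v_t(x_i)\,CB^{-10n+j}$ and $v_t(z_k)=v_t(\{x_i,y_j\})\,CB^{-20n+k}$. Since $T'$ is perfect, the indices $i$, $j$ and $k$ each range over $[n]$ exactly once across $T'$. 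The Nash product is therefore exactly $\theta^{|T|}$.

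\emph{Backward direction.} Suppose some allocation has Nash welfare at least $\theta$. Then a Nash optimal allocation $A$ also does, and every agent has positive utility in $A$. I would establish the structure of $A$ in the following order.

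\begin{enumerate}
\item \emph{Dummies.} Apply \Cref{lem:transfer_lemma} to show that no agent holds two dummy goods, and that no dummy holder holds any other good. The reason is that the ratio between consecutive dummy values, and between the lowest dummy and $x_n$, is at least $B/C$ or far larger. Since each dummy is valued identically by all agents, the dummy holders contribute exactly $\prod_r B^{10n+r}$. So the remaining $n$ agents must achieve a product of at least $\prod_r CB^r\cdot\prod_r(1+CB^{-10n+r})(1+CB^{-20n+r})$ using only $X\cup Y\cup Z$.
\item \emph{$X$-goods.} If one of these $n$ agents received no $X$-good, its utility would be at most about $B^{-8n}$. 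The other agents' utilities are at most about $B^{n+1}$, so the product would fall far below the threshold. Hence each of the $n$ agents gets exactly one $X$-good. The $X$-part of the product is then $\prod_r B^r\cdot C^{s}$, where $s$ is the number of agents holding their signature $x$. The total $Y$- and $Z$-contribution is at most a factor $(1+O(CB^{-9n}))^{n}<C$. It follows that $s=n$, so the chosen agents are triples with distinct $x$-coordinates.
\item \emph{$Y$-goods.} Each agent's utility now equals $v_t(x_i)(1+a_t+\beta_t)$, where $a_t=\sum_{y_r\in A_t}c_{t,r}B^{-10n+r}$ with $c_{t,r}\in\{1,C\}$, and $\beta_t=O(B^{-19n})$ collects the $Z$-contribution. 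Comparing $\prod_t(1+a_t+\beta_t)$ with $\prod_r(1+CB^{-10n+r})$, I would argue that if any $y_r$ goes to an agent for whom it is not the signature good, the product loses a term of order at least $(C-1)B^{-10n+1}$ at first order. This loss dwarfs every second-order term and every $Z$-term. Hence every $y_r$ goes to an agent for whom it is the signature good. Since each agent has only one signature $y$ and there are $n$ $Y$-goods for $n$ agents, each agent receives exactly its own signature $y$, and these are distinct.
\item \emph{$Z$-goods.} Repeat the same argument one scale lower, relative to $v_t(\{x_i,y_j\})$. Now there are no lower-order goods to interfere. This forces each agent to receive exactly its signature $z$, again all distinct.
\end{enumerate}

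The $n$ chosen triples then form a perfect 3D matching.

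The main obstacle will be the error bookkeeping in the $Y$-phase. The concavity loss from giving one agent two $Y$-goods is of order $B^{-20n+2}$, which sits at the same scale as the $Z$-gains. I would avoid comparing those two effects directly. Instead, I would first prove the purely first-order statement that every $Y$-good goes to a signature owner, using only the gap between $(C-1)B^{-10n+1}$ and $O(B^{-19n})$ with $B=10$. Distinctness of the $Y$-goods then follows combinatorially from counting, as in step 3, rather than from any comparison at the $B^{-20n}$ scale.
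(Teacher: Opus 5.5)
Your plan is sound overall, and in the $Y$ and $Z$ phases it takes a genuinely different route from the paper. One estimate in the $X$ phase is wrong as written and needs a different bound.

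\textbf{How your route differs.} Your forward direction, dummy step and ``signature $x$'' step follow the paper: the transfer lemma handles the dummies, and the factor $C$ cannot be recovered from $Y\cup Z$. The difference is in the $Y$ and $Z$ phases. The paper first proves that each agent gets at most one $Y$-good (resp.\ $Z$-good) by a local exchange. It moves a non-signature good out of a double holder's bundle, checks the inequality of \Cref{lem:transfer_lemma} (which relies on optimality of $A$), and only then compares products to force signatures. You instead bound $\prod_t(1+a_t+\beta_t)\le\exp\bigl(\sum_t a_t+\sum_t\beta_t\bigr)$ for an arbitrary distribution of the $Y$-goods. You note that $\sum_t a_t=\sum_r c_rB^{-10n+r}$ falls short of $\sum_r CB^{-10n+r}$ by at least $2B^{-10n+1}$ once some $c_r=1$. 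This shortfall exceeds both $\sum_t\beta_t=O(nB^{-19n})$ and the $O(B^{-18n})$ gap between $\exp\bigl(\sum_r CB^{-10n+r}\bigr)$ and $\prod_r(1+CB^{-10n+r})$. Pigeonhole then gives ``exactly one, and distinct.'' This is correct and somewhat more economical: these phases use only the threshold, not optimality, and you never compare a concavity loss against $Z$-gains. The $Z$ phase goes through the same way.

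\textbf{The gap in the $X$ phase.} Your argument that every agent in $R$ gets an $X$-good does not work as stated. You bound the agent without an $X$-good by about $B^{-8n}$ and each of the other $n-1$ agents by about $B^{n+1}$. That gives a product of about $B^{n^2-8n-1}$. The part of $\theta^{|T|}$ attributable to $R$ is only about $C^nB^{n(n+1)/2}$, so for $n$ larger than about $18$ there is no contradiction. You need to use that the other agents share the $X$-goods. Their most valuable $X$-goods are distinct, and an agent whose best $X$-good is $x_r$ has utility at most about $\frac{10}{9}CB^r$. So the product over the others is at most about $(\frac{10}{9}C)^{n-1}\prod_{r=2}^{n}B^r$. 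Together with the factor $O(C^2B^{-8n})$ from the agent with no $X$-good, this is far below $C^n\prod_{r=1}^{n}B^r$. Alternatively, do what the paper does: move the lower of two $X$-goods from a double holder to the agent with none, and use optimality.

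\textbf{Two smaller points.}
\begin{itemize}
\item In the dummy step, \Cref{lem:transfer_lemma} needs a bound on the cross-agent ratio $v_t(g)/v_{t'}(g)$ for a non-dummy $g$. Within-agent gaps between consecutive goods do not control this ratio. For $Y$- and $Z$-goods it can be of order $CB^{n-1}$, because those values are scaled by each agent's own signature $x$. It is still easily beaten by $v_t(A_t\setminus\{g\})/v_{t'}(A_{t'})\ge B^{9n+1}/(2C)$, so the step closes once you state the bound.
\item In the signature-$x$ step, an agent holding a non-signature $x_{s(t)}$ has $v_t(Y\cup Z)/v_t(x_{s(t)})=O(C^2B^{-8n})$, not $O(CB^{-9n})$. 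Your conclusion is unaffected.
\end{itemize}
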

    \begin{remark}[Encoding size]
        We verify that the constructed instance has polynomial encoding size. By multiplying every agent's valuation by $B^{30n}$, all values become integers. The largest original value is $B^{10n+|T|-n}$ (for $d_{|T|-n}$), so after scaling every value is at most $B^{40n+|T|}$. Hence each scaled value has $\O(n+|T|)$ digits, i.e., $\O((n+|T|)\log B) = \O(n+|T|)$ bits. The threshold $\theta$ is also scaled by $B^{30n}$, and $\theta^{|T|}$ (after scaling) is a product of $|T|$ such integer values, hence an integer with $\O(|T|(n+|T|))$ bits. Hence, the entire constructed instance is encoded in $\mathrm{poly}(n,|T|)$ bits.
    \end{remark}


    For doubling valuations, we give a gap-preserving reduction from 4-dimensional matching (\Cref{thm:APX-hard_Nash_General_Lexicographic}). The proof of this result is presented in~\Cref{sec:apxhard_appendix}.

    \begin{restatable}[APX-hardness]{theorem}{NashLexAPXHard}
        It is NP-hard to approximate the Nash welfare objective to within a factor of \(0.9996\) even for doubling valuations.
    \label{thm:APX-hard_Nash_General_Lexicographic}
    \end{restatable}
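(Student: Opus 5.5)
We would prove the theorem with a gap-preserving reduction from a bounded-occurrence version of \textsc{4-Dimensional Matching}. For this problem there is a known constant $\delta>0$ such that it is NP-hard to distinguish instances with a perfect matching from instances in which every matching covers at most a $(1-\delta)$ fraction of the elements. Given sets $W,X,Y,Z$ of size $n$ and a family $T$ of $4$-tuples, we build the instance as in the proof of \Cref{thm:NP-hard_Ordered_Nash}. There is one agent per tuple, and the goods are $W\cup X\cup Y\cup Z$ together with $|T|-n$ dummy goods. The dummy goods are so valuable, under a common order, that by \Cref{lem:transfer_lemma} each of them goes to a distinct agent and its holder receives nothing else. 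The agents that receive no dummy good must then split the element goods among themselves.

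The main design change from the NP-hardness reduction is to make the boosts and layers \emph{constant-sized}, so that the reduction loses a constant factor per unmatched agent rather than an exponentially small one. We would use doubling valuations that need not be ordered. Each agent $t=(w,x,y,z)$ ranks its four signature goods above all non-signature element goods, so that these preferences differ across agents. Its values on the four signature goods are $1,\tfrac12,\tfrac14,\tfrac18$, giving value $15/8$ to the full signature set. Every non-signature element good gets a geometrically smaller value, at most $2^{-4}$ times the previous one. For the doubling property, the dummy goods must sit on top, with values at least twice everything below them.

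Completeness is then straightforward. A perfect matching yields $n$ agents, each holding its whole signature set and thus utility $15/8$, while the remaining agents each hold a dummy good. This defines the threshold $\theta$.

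For soundness, we fix an optimal allocation and normalize it using \Cref{lem:transfer_lemma}: no agent with a dummy good holds anything else, and each remaining agent holds at least one good. We then charge utilities as follows. An agent whose best good is a signature good but which does not hold all four of its signature goods has utility at most $(1-c)\cdot 15/8$ for a fixed constant $c>0$. An agent whose best good is a non-signature good has utility at most about $2\cdot 2^{-4}$. Combining these bounds by AM--GM, we would show that if every matching covers at most $(1-\delta)n$ elements, then at least $\delta' n$ of the $n$ element-holding agents lose a constant factor. Since $|T|=O(n)$ under bounded occurrence, this yields a constant-factor loss in the $|T|$-th root. Tuning the constants so that the loss is a factor $0.9996$ requires balancing $\delta$, the occurrence bound, and the per-agent loss.

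The main obstacle is this counting step. An agent can collect several partial signature goods, and an agent can hold goods ranked highly by other agents. We would handle this by showing, via a local exchange argument using \Cref{lem:transfer_lemma}, that an optimal allocation gives every agent at most one ``top'' good, namely its best good, from $W\cup X\cup Y\cup Z$ that carries value comparable to a full signature. The set of agents that hold all four of their signature goods then forms a $4$D matching, which limits how many agents can be fully satisfied. We would first try a direct transfer argument for this normalization, and fall back on a cruder counting bound if it fails. With the constants fixed, computing the final ratio is routine arithmetic.
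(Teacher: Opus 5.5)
Your architecture matches the paper's: one agent per tuple, $|T|-n$ dummy goods on top that \Cref{lem:transfer_lemma} isolates on distinct agents, doubling values on the signature goods, and the observation that the non-dummy agents holding their whole signature set form a 4D matching. The gap is that, as set up, the proposal cannot reach the explicit constant $0.9996$, which is the content of the theorem. A bounded-occurrence 4DM source with \emph{perfect} completeness (for example, padded from gap-location results for bounded 3DM) gives only a non-explicit $\delta$. With your values the best factor is roughly $(14/15)^{\delta n/|T|}$, and reaching $0.9996$ needs $\delta n/|T|\gtrsim 0.0058$. The paper instead uses the explicit gap of Chleb\'{\i}k and Chleb\'{\i}kov\'a: at least $(1-\epsilon)n$ versus at most $(53/54+\epsilon)n$, with exactly $3n$ quadruples, so the exponent is $1/162$. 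That source has only near-perfect completeness, and your dummy-only construction does not absorb it. In the YES case, $\epsilon n$ non-dummy agents are unmatched, and nothing in your construction guarantees them more than non-signature element goods worth $2^{-\Theta(n)}$. That costs a factor $2^{-\Theta(\epsilon n)}$ rather than a constant. The paper's missing device is a private \emph{supporting good} $s_t$ for each agent, valued $1$ just below its signature goods (valued $16,8,4,2$). Every agent can then be given utility at least $1$, and the YES-case loss is only $31^{-\epsilon/3}$. Some alternative device would also do, such as choosing which unmatched agents get dummies so that each remaining one receives an uncovered element of its own tuple, but you need one.

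Relatedly, goods outside the signature and supporting sets must be \emph{negligible}, not merely geometrically smaller. Your soundness plan bounds only the agents that miss a signature good, but a fully matched agent can exceed $15/8$ by the value of the non-signature goods it holds. With your decay (first non-signature good $\le 1/128$, then factor $2^{-4}$), that gain can reach about $1/120$, a factor $1+1/225$. Against the $53/54$ gap, the per-agent bounds then give $(226/225)^{53/162}(211/225)^{1/162}>1$, which is no gap at all. The paper gives all such ``insignificant'' goods total value $<q$ and lets $q\to 0$. Once this is done, the obstacle you single out disappears, and no exchange argument or ``one top good'' normalization is needed. After the dummy normalization, each non-dummy agent has utility $<31+q$ if it holds all its signature goods, and $<29+q$ otherwise, since it misses a signature good worth at least $2$. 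The former agents form a 4D matching, so there are at most $(c+\epsilon)n$ of them. This yields the limit ratio $(29/31)^{1/162}<0.99959$.
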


\section{Conclusion and Future Directions}

We studied Nash welfare maximization under lexicographic, equivalently superincreasing additive, valuations. On the algorithmic side, we obtained a $(\sfrac{1}{\sqrt{2}}-\varepsilon)$-approximation for general lexicographic valuations via a tight analysis of the configuration LP, and exact polynomial-time algorithms for ordered and doubling subclasses when the number of agents is constant.  The exact algorithms are based on a domination-based branch-and-prune framework whose recursion tree is polynomially bounded by a mutual-exclusion argument.  On the hardness side, we showed that exact optimization is NP-hard even for ordered lexicographic valuations, and that APX-hardness persists even for doubling valuations. Overall, our work highlighted the rich computational landscape of Nash welfare for lexicographic valuations.

Several questions remain open.  First, can the $(\sfrac{1}{\sqrt{2}}-\varepsilon)$ guarantee be improved by using a relaxation stronger than the configuration LP, or is there a matching hardness threshold for lexicographic valuations?  Second, our exact algorithms are XP parameterized by the number of agents; it would be interesting to determine whether ordered or doubling lexicographic valuations admit an $f(n) \cdot \poly(m)$ exact algorithm for Nash welfare, or whether such an FPT algorithm is unlikely. Finally, it would be useful to understand whether our algorithmic results extend to \emph{weakly} lexicographic valuations that allow for indifference (ties) between items.

\section*{Acknowledgments}
We are grateful to Vignesh Viswanathan for helpful discussions. RV acknowledges support from DST INSPIRE grant no. DST/INSPIRE/04/2020/000107, ANRF grant no. CRG/2022/002621, and Mr. D.P. Gupta Chair Professorship. JY acknowledges support from a Google PhD fellowship.

\paragraph{AI usage.} The authors used GPT-5.5 for editorial tasks during the preparation of this manuscript, including proofreading for spelling and grammar and assisting with the preparation and formatting of figures. AI assistance was not used to develop the technical content of the work. All technical ideas, including the algorithms and hardness proofs, were conceived and developed by the authors.

\providecommand\showeprint[2][arxiv]{%
	arXiv:\href{https://arxiv.org/abs/#2}{#2}%
}

\bibliographystyle{plainnat}
\bibliography{arxiv_bib}

\clearpage

\appendix

\begin{center}
    \Large{\textbf{Appendix}}
\end{center}

\section{Additional Related Work: Generalized Mean welfare}
\label{appendix:Additional_Related_Work}

The class of $p$-mean welfare measures has also received attention in recent years. For subadditive valuations, an $\Omega(\sfrac{1}{n})$-approximation algorithm is known for all weighted $p$-mean welfare measures with $p \in (-\infty,1]$~\citep*{CGM21fair}. This approximation is essentially tight, since $\omega(\sfrac{1}{n})$ approximations are known to require an exponential number of value queries~\citep*{DNS05approximation,BBK+20tight}. For additive valuations, the complexity landscape of weighted $p$-mean welfare maximization is more diverse: While the weighted utilitarian welfare ($p=1$) is easy to maximize, the weighted Nash welfare ($p=0$) admits an approximation algorithm of $e^{-1/e}\approx 0.692$~\citep*{FL25note} and is \NPH{} to approximate beyond a factor of $\sqrt{\sfrac{7}{8}} \approx 0.935$~\citep*{GHM24satiation}, and the egalitarian welfare ($p \rightarrow -\infty$) admits an $\tilde{\Omega}(1/n^\varepsilon)$ approximation for $n$ agents~\citep*{CCK09allocating} and is \NPH{} to approximate beyond a factor of $1/2$~\citep*{BD05allocating,CCK09allocating}. 
For identical additive valuations, a polynomial-time approximation scheme (\PTAS{}) is known for weighted $p$-mean welfare~\citep*{GHM+22tractable}. Additionally, for a constant number of agents with possibly non-identical additive valuations, a fully polynomial-time approximation scheme (\FPTAS{}) is known for weighted $p$-mean welfare~\citep*{GHM+22tractable}.

\section{Extensions to Generalized \texorpdfstring{$p$}{p}-Mean Welfare Measures}
\label{appendix:Extensions_p_Mean}

The exact algorithms in \Cref{sec:Results_Subclasses} are stated for separable concave objectives
\[
        \WH(A)=\sum_{i\in N} h_i(v_i(A_i)).
\]
This formulation immediately captures weighted Nash welfare by taking $h_i(x)=w_i\log x$.  It also captures weighted $p$-mean welfare for every $p\in(-\infty,1)\setminus\{0\}$ by taking $h_i(x)=w_i\,\sign(p)\,x^p$. Indeed, maximizing the weighted $p$-mean is equivalent to maximizing $\sign(p)\sum_i w_i v_i(A_i)^p$, because the outer power $1/p$ is increasing for $p>0$ and decreasing for $p<0$.

For ordered lexicographic valuations, \Cref{thm:Ordered_Constant_n_Weighted} therefore applies to all weighted $p$-mean welfare objectives with $p<1$, $p\ne0$, as well as to weighted Nash welfare in the limit $p\to0$.  For doubling valuations, \Cref{thm:Doubling_Constant_n} requires the additional scaling condition
\[
        \Delta_i(2x,2y)\ge \Delta_i(x,y) \qquad \text{for all }x,y\ge0.
\]
This condition holds for $h_i(x)=w_i\log x$ and for $h_i(x)=w_i x^p$ when $p\in(0,1)$, and hence the doubling-valuations exact algorithm extends to weighted Nash welfare and to weighted $p$-mean welfare for $p\in(0,1)$.

\section{Integrality gap of the config LP for lexicographic valuations}
\label{appendix:config_lp_gap}

In this section, we show that the config LP has an integrality gap of $\sqrt{2}$ in the case of lexicographic valuations. Let $L$ be a large enough positive integer, and let $\delta_L = 1/L$ be such that $\lim_{L \to \infty} L^{\delta_L} = 1$. In the instance, there are $4$ agents $a_1, a_2, a_3, a_4$ and $5$ goods $g_1, g_2, g_3, g_4, g_5$. The weights of the agents are $1/2-\delta, 1/2-\delta, \delta, \delta$ respectively. Let $\gamma_L  = 2^{-L^2}$. The following table lists the valuation matrix. We use $\delta, \gamma$ instead of $\delta_L, \gamma_L$ for brevity.

\begin{table}[H]
    \centering
    \begin{tabular}{c|ccccc}
         & $g_1$ & $g_2$ & $g_3$ & $g_4$ & $g_5$ \\
         \hline
         $a_1$& $L$ & $1+4\gamma$ & $1$ & $2\gamma$ & $\gamma$\\
         $a_2$& $L$ & $2\gamma$ & $\gamma$ & $1+4\gamma$ & $1$\\
         $a_3$& $4\gamma$ & $1+8\gamma$ & $1$ & $2\gamma$ &$\gamma$ \\
         $a_4$& $4\gamma$ & $2\gamma$ & $\gamma$ & $1+8\gamma$ & $1$\\
    \end{tabular}
    \caption{An instance with integrality gap approaching $\sqrt{2}$.}
    \label{tab:integrality_gap}
\end{table}

It can be easily verified that this instance has lexicographic valuations for large enough $L$. First, consider the following fractional solution to the config LP: 
\[y_{a_1, \{g_1\}} = y_{a_1,\{g_2,g_3\}} = 1/2,\]
\[y_{a_2, \{g_1\}} = y_{a_2, \{g_4,g_5\}} = 1/2,\]
\[y_{a_3, \{g_2\}} = y_{a_3, \{g_3\}} = 1/2,\]
\[y_{a_4, \{g_4\}} = y_{a_4, \{g_5\}} = 1/2.\]

The LP objective for this LP solution therefore is:
\[\left(\frac{1}{2} - \delta\right) \ln L + \left(\frac{1}{2} - \delta\right)\ln (2+4\gamma) + \delta \ln (1 + 8 \gamma)\]

For the allocation $A = (\{g_1\}, \{g_4\}, \{g_2,g_3\}, \{g_5\})$, the weighted Nash welfare is $L^{1/2 - \delta} (1+4\gamma)^{1/2-\delta} (2+8\gamma)^\delta$ which goes to $\infty$ when $L$ approaches $\infty$. For any other allocation to have a better weighted Nash welfare, note that each agent must receive utility at least one since $L^{1/2-\delta}O(\gamma)^\delta$ approaches $0$ as $L \to \infty$, as $\delta=1/L, \gamma=2^{-L^2}$. Also, $g_1$ must be allocated to $a_1$ or $a_2$ since otherwise for large enough $L$, the utility of each agent is at most $3$. Hence, the weighted Nash welfare is also at most $3$, which is less than that of $A$ for large enough $L$. Since agents $a_1,a_2$ are symmetric, assume that $g_1$ is allocated to $a_1$. Then, $g_4,g_5$ must be given one each to $a_2,a_4$ and both of them have utility $1 + O(\gamma)$. $a_3$ has utility at most $2 + O(\gamma)$, and $a_1$ has utility at most $L + O(1)$. Thus, the logarithm of the weighted Nash welfare of the optimal allocation is

\[\left(\frac{1}{2} - \delta\right) \ln (L+O(1)) + \left(\frac{1}{2} - \delta\right) \ln(1+O(\gamma)) + \delta \ln (2 + O(\gamma)) + \delta \ln (1 +  O(\gamma)) \]

Recall that $\delta = 1/L, \gamma = 2^{-L^2}$. Thus, as $L \to \infty$, we have $\delta, \gamma \to 0$ and $\ln(\frac{L}{L + O(1)}) \to 0$. Therefore, difference in the fractional and integral optima of the LP approaches $1/2 \ln 2$.

\section{Proof of~\Cref{thm:ratio_root_2}}
\label{sec:root_2_proof}
In this section, we prove~\Cref{thm:ratio_root_2}.
\StructuralRootTwo*

We will use $\val(e_i)$ to denote the value of the good $e_i$. If $m'<n$ or if at least one of $e_1, \ldots e_n$ have value $0$, then every allocation gives utility $0$ to at least one agent, so
both Nash products are $0$. Hence the theorem is trivial. Assume from now on
that $m'\ge n$. We may also assume that $m'\ge 2n$. Indeed, if $m'<2n$,
append a dummy type with $2n-m'$ zero-valued copies. This number is at most
$n$, since $m'\ge n$. Complete $A^\VV$ by assigning these dummy copies to
distinct agents. Complete $A^\BB$ by assigning them to exactly the agents not
yet represented in the second aligned block. Then $A^\VV$ remains valid,
$A^\BB$ remains block-balanced, and all utilities, hence both Nash welfares,
are unchanged.

First, let us establish the following lemma that shall be crucial in structurally modifying the allocation $A^\VV$ without decreasing the Nash welfare.
\begin{lemma}[Safe transfer]
\label{lem:safe-transfer}
Consider an allocation under identical additive valuations with a non-zero Nash product. Let $U_i$ and
$U_j$ be the utilities of agents $i$ and $j$, respectively, and suppose
agent $i$ holds a good $g$ of value $x$. If $U_i-x\ge U_j$, then transferring $g$ from $i$ to $j$ does not decrease the Nash welfare. If the inequality is strict, then the Nash welfare strictly increases.
\end{lemma}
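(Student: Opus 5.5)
The plan is to compare the Nash products before and after the transfer. Only agents $i$ and $j$ change utility, so it suffices to compare $U_iU_j$ with $(U_i-x)(U_j+x)$. Valuations are identical, so agent $j$ values $g$ at $x$ as well. Hence after the transfer agent $i$ has utility $U_i-x$ and agent $j$ has utility $U_j+x$, while every other factor of the product is unchanged.

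Expanding, we get
\[
(U_i-x)(U_j+x)-U_iU_j = x\,(U_i-x-U_j).
\]
This is nonnegative whenever $x\ge 0$ and $U_i-x\ge U_j$. If $x=0$, the allocation's utilities are unchanged and the weak claim is trivial.

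For the strict claim, assume $U_i-x>U_j$. Since the original Nash product is non-zero, $U_j>0$, and therefore $U_i-x>0$. If $x>0$, the difference above is strictly positive. The only remaining concern is the degenerate case $x=0$, where the welfare stays the same. I would handle it by noting that the strict statement implicitly requires $x>0$ (goods of value zero are irrelevant to the welfare), or by adding that hypothesis explicitly.

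It remains to pass from the two-agent product to the Nash welfare. The other factors are positive and unchanged, and the new utility $U_i-x$ is positive, so the full product weakly (respectively strictly) increases. Taking $n$-th roots preserves this, since $t\mapsto t^{1/n}$ is strictly increasing on the nonnegatives.

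There is no real obstacle here. The only points needing care are the positivity of the utilities and the zero-value edge case in the strict claim.
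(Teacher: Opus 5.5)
Your proof is correct and matches the paper's: both reduce to the two-agent product and use the identity $(U_i-x)(U_j+x)-U_iU_j = x\bigl((U_i-x)-U_j\bigr)$. Your observation that the strict claim needs $x>0$ is a valid refinement, since the paper's proof asserts strict positivity without noting this; it is harmless because the rest of the paper only relies on the weak form of the lemma.
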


\begin{proof}
Only the utilities of $i$ and $j$ change. The two-agent product changes by
\[
        (U_i-x)(U_j+x)-U_iU_j
        =
        x\bigl((U_i-x)-U_j\bigr)
        \ge 0.
\]
Thus the Nash product, and hence the Nash welfare, does not decrease. If
$U_i-x>U_j$, the displayed change is strictly positive.
\end{proof}

\subsection{Aligning the first block}

Because $A^\BB$ is block-balanced, the first $n$ goods $e_1,\ldots,e_n$
are assigned to distinct agents in $A^\BB$. We first show that $A^\VV$
may also be assumed to assign these $n$ goods to distinct agents, without
decreasing its Nash product. Let $t$ be the smallest index such that $c_1+\cdots+c_t\ge n$. Thus the first $n$ goods consist of all copies of $g_1,\ldots,g_{t-1}$, together with some copies of $g_t$. 

Suppose some agent $i$ receives at least two first block goods, and an agent $j$ receives none in $A^\VV$. Let $g_s$ be the type of the least valuable first-block good held by $i$. Let $g_h$ be the highest-valued good type in agent $j$'s bundle. If $h = s$, then we can swap the goods of type $g_s$ in the two bundles, without changing the Nash welfare. Else if $h > s$, then we can transfer the good of type $g_s$ from agent $i$ to agent $j$. This is because, the utility $U_i$ of agent $i$ in $A^\VV$ is at least $v_s + v_{s - 1}$, and utility $U_j$ of agent $j$ is at most $v_{h} + v_{h+1} + \ldots + v_m \leq v_{h-1} \le v_s$. Hence, \Cref{lem:safe-transfer} applies and the transfer does not decrease the Nash product. 

Repeating this finitely many times yields a valid allocation, still denoted $A^\VV$, in which the first $n$ goods go to distinct agents. Next, modify $A^\BB$ by making agents swap whole bundles so that agent $i$ receives $e_i$, for every $i$. This preserves the block-balanced property and leaves the Nash
product unchanged. Modify $A^\VV$ in the same way, by making agents swap whole bundles so that agent $i$ receives $e_i$, for every $i$; this keeps $A^\VV$ valid and also leaves its Nash product unchanged. We therefore have,
for every $i$,$e_i\in A^\VV_i\cap A^\BB_i$.

\subsection{The $L$-normalization}

Let $L = 1 + 2\sum_{i=1}^m v_i$ be a large enough value. Let $H$ denote the set of agents who prefer their bundle in $A^\BB$ over $A^\VV$:
\[
        H:=\{i\in N:v(A^\VV_i)<v(A^\BB_i)\}.
\]

If $|H| = n$, then all agents prefer $A^\BB$ to $A^\VV$ and we are done. Thus, we will assume $|H| < n$. We will now modify the valuations as follows: create a new good type with value $L$. For each agent $i \in H$, remove $e_i$ from the instance and replace it with a copy of the new good type of value $L$. The allocations $A^\VV$ and $A^\BB$ remain unchanged. In the new ordering of the goods, all the copies of the new good are now before the other goods. The first block of $n$ goods in the ordering however remains unchanged, and all these goods are still allocated to distinct agents. Thus $A^\VV$ and $A^\BB$ remain valid and block-balanced respectively.

We will now argue that this modification does not decrease the Nash welfare ratio $\rho$. Indeed, for any agent $i \in H$, let $\alpha_i$ be the total value of the goods in $A^\VV_i$ excluding $e_i$, and let $\beta_i$ be the total value of the goods in $A^\BB_i$ excluding $e_i$. Note that, since $\alpha_i < \beta_i$, the function $\frac{x + \alpha_i}{x + \beta_i}$ is increasing in $x$. Also, since $L > \val(e_i)$, we have that $\frac{L + \alpha_i}{L + \beta_i} > \frac{\val(e_i) + \alpha_i}{\val(e_i) + \beta_i}$. Thus, the Nash welfare ratio $\rho$ does not decrease after this modification.

We renumber the goods to $e_1, e_2, \ldots e_{m'}$. The $L$-valued goods form a prefix of this sequence. We also renumber the good types to $g_1, g_2, \ldots$ (and the corresponding values to $v_1, v_2, \ldots$), so that $g_1$ is now the good of type $1$ and $v_1 = L$. Let $c_i$ be the number of copies of good type $g_i$. Also, we again perform bundle permutations to ensure that agent $i$ gets $e_i$ in both $A^\VV$ and $A^\BB$, for every $i$. Note that, by construction, for all agents $i \in \{c_1 + 1, c_1 + 2, \ldots, n\}$, we have that $v(A^\VV_i) \geq v(A^\BB_i)$.

\subsection{Compressing the first block to three mini-blocks}

Let $t$ be the smallest integer such that $c_1 + c_2 + \ldots c_t \geq n$. For each good type $g_j$ with $j \in \{2, 3, \ldots t-1\}$, we will decrease its value $v_j$ to $v_t + v_{t+1} + \ldots v_m$. Note that this is a decrement in value, since for any such $j$, $v_j \geq v_{t-1} \geq v_t + v_{t + 1} + \ldots v_m$. This new value will correspond to a new good type with $c_2 + c_3 + \ldots c_{t-1}$ copies. For any agent $i \in \{c_1 + 1, \ldots c_1 + c_2 + \ldots c_{t-1}\}$, we argue that their ratio only increases. Indeed, let $\alpha_i$ denote the value of all the goods in $A^\VV_i \setminus e_i$ and $\beta_i$ denote the value of all the goods in $A^\BB_i \setminus e_i$. Since $v(A^\VV_i) \geq v(A^\BB_i)$, we have $\alpha_i \geq \beta_i$. Thus, the function $\frac{x + \alpha_i}{x + \beta_i}$ is decreasing and the ratio only increases upon decreasing $x$ from $\val(e_i)$ to $v_t + v_{t+1} + \ldots v_m$. Formally, in the new instance, the value $v_1$ and the goods of type $g_1$ remain the same, the goods of types $g_2,g_3,\ldots g_{t-1}$ now have type $2$ and $c_2, v_2$ now become $c_2 + c_3 + \ldots c_{t-1}$, $v_t+v_{t+1}+\ldots v_m$ respectively. For each $\ell \geq t$, the goods that earlier had type $g_\ell$ now have type $g_{\ell - t + 3}$ and $c_{\ell - t + 3}, v_{\ell-t+3}$ now become the earlier $c_\ell, v_\ell$. The sequence $e_1, e_2, \ldots e_{m'}$ remains the same, and the instance is still lexicographic by construction. The first block of $n$ items consists of $3$ good types $g_1, g_2, g_3$ with values $v_1 = L, v_2, v_3$ respectively and number of first-block copies $c_1,c_2,n-c_1-c_2$ respectively. We denote these as the three (possibly empty) mini-blocks $A,B,C$. Let $r:=c_1+c_2+c_3-n$ be the number of remaining copies of $g_3$. These appear in the second block. Note that the mini-blocks $A$ and $B$ can be empty, but the mini-block $C$ is non-empty by definition. The instance now looks as in ~\Cref{fig:three-block-normal-form}. 

\begin{figure}[ht]
\centering
\begin{tikzpicture}[x=1cm,y=1cm]
        \node[proofblock, fill=proofblue!13, minimum width=2.35cm] (A)
                at (1.175,0) {$A$\\ top value $v_1=L$};
        \node[proofblock, fill=proofgreen!14, minimum width=2.65cm] (B)
                at (3.675,0) {$B$\\ top value $v_2$};
        \node[proofblock, fill=prooforange!17, minimum width=2.1cm] (C)
                at (6.05,0) {$C$\\ top value $v_3$};

        \node[proofblock, fill=prooforange!8, minimum width=1.6cm,
                minimum height=.58cm] (Gthree) at (8.05,0)
                {$r$ remaining\\ copies of $g_3$};

        \draw[proofarrow] (Gthree.east) -- ++(1.55,0)
                node[midway, above, font=\scriptsize] {other types}
                node[midway, below, font=\scriptsize] {$g_4,g_5,\ldots$};

        \draw[decorate, decoration={brace, mirror, amplitude=4pt}]
                (0,-.62) -- (7.1,-.62)
                node[midway, yshift=-14pt, font=\scriptsize]
                {first block after compression};

\end{tikzpicture}
\caption{The normalized first block: agents are grouped into three mini-blocks $A$, $B$, and
$C$. There are $r$ remaining goods of type $g_3$.}
\label{fig:three-block-normal-form}
\end{figure}
\subsection{Second block allocation in $A^\BB$}
\label{sec:second_block_XB}

Now discard from the denominator all copies after $e_{2n}$. This can only decrease $\WNash(A^\BB)$ and hence can only increase $\rho$. The first two aligned blocks contain exactly one copy for each agent in $A^\BB$.This might result in the allocation $A^\BB$ no longer being block-balanced (since it is not a complete allocation, a requirement for block-balancedness). However, we would not require block-balancedness of $A^\BB$ going forward.  We next
replace the denominator's assignment of the second block by the sorted pairing in which agent $i$ receives $e_i$ and $e_{n+i}$.

Let $a_i:=\val(e_i)$ and $b_i:=\val(e_{n+i})$. Both sequences are
nonincreasing. For $i<k$, if agent $i$ receives the smaller second value and
agent $k$ receives the larger one, then swapping these two second-block copies
changes the two-agent denominator product from the crossed value to the aligned
value, and the difference is
$(a_i+b_k)(a_k+b_i)-(a_i+b_i)(a_k+b_k)=(a_i-a_k)(b_i-b_k)\ge0$.
Thus removing inversions weakly decreases the denominator product. The final
sorted pairing is valid, because positions $i$ and $n+i$ cannot contain copies
of the same type: a type has at most $n$ copies, while these positions are
$n$ apart. We keep the notation $A^\BB$ for this reduced denominator vector.
Consequently, after this point the denominator utility of agent $i$ is
$\val(e_i)+\val(e_{n+i})$.

\subsection{Allocation of the remaining copies of $g_3$ in $A^\VV$}
\label{sec:g_3_rem_XV}
 Since $A^\VV$ is valid, the remaining copies of $g_3$ can only be allocated to agents (whose top items are) in $A$ and $B$. First, while there is an agent $i \in A$ who receives a copy of $g_3$ and an agent $j \in B$ who does not, we can transfer the copy of $g_3$ from agent $i$ to agent $j$. This only increases the Nash welfare by~\Cref{lem:safe-transfer}. This is because agent $i$ already has a copy of $v_1$ that is more valuable than the rest of the good types combined. Additionally, within $A$ and $B$, we can assume that the agents receiving a copy of $g_3$ form some prefix of $A$ and $B$ respectively. Indeed, if agent $i$ receives a copy of $g_3$ and agent $i+1$ does not, where $i$ and $i + 1$ belong to the same mini-block, we can swap the tails (everything except the top good) of the bundles of agents $i$ and $i + 1$. This maintains the validity and the Nash welfare of $A^\VV$. 

\begin{figure}[ht]
\centering
\begin{tikzpicture}[x=1cm,y=1cm]
        \begin{scope}
                \node[font=\scriptsize] at (3.05,1.95) {3 remaining copies of $g_3$};
                \node[font=\scriptsize, anchor=east] at (-.55,.85) {$A^\BB$};
                \node[font=\scriptsize, anchor=east] at (-.55,-.85) {$A^\VV$};

                \foreach \x/\agent in {0/1,.72/2,1.44/3,2.16/4,2.88/5}{
                        \node[proofcell, fill=proofblue!12] at (\x,.85) {$\agent$};
                        \node[proofcell, fill=proofblue!12] at (\x,-.85) {$\agent$};
                }
                \foreach \x/\agent in {3.95/6,4.67/7,5.39/8,6.11/9}{
                        \node[proofcell, fill=proofgreen!14] at (\x,.85) {$\agent$};
                        \node[proofcell, fill=proofgreen!14] at (\x,-.85) {$\agent$};
                }

                \foreach \x in {0,.72,1.44}{
                        \node[proofcopy] at (\x,1.25) {$g_3$};
                }
                \foreach \x in {3.95,4.67,5.39}{
                        \node[proofcopy] at (\x,-.45) {$g_3$};
                }

                \node[font=\tiny, text=proofblue!75!black] at (1.44,.2)
                        {$A$ not exhausted};
                \node[font=\tiny, text=proofgreen!75!black] at (5.03,-1.5)
                        {$B$ not exhausted};
        \end{scope}

        \begin{scope}[xshift=8.0cm]
                \node[font=\scriptsize] at (3.05,1.95) {6 remaining copies of $g_3$};
                \node[font=\scriptsize, anchor=east] at (-.55,.85) {$A^\BB$};
                \node[font=\scriptsize, anchor=east] at (-.55,-.85) {$A^\VV$};

                \foreach \x/\agent in {0/1,.72/2,1.44/3,2.16/4,2.88/5}{
                        \node[proofcell, fill=proofblue!12] at (\x,.85) {$\agent$};
                        \node[proofcell, fill=proofblue!12] at (\x,-.85) {$\agent$};
                }
                \foreach \x/\agent in {3.95/6,4.67/7,5.39/8,6.11/9}{
                        \node[proofcell, fill=proofgreen!14] at (\x,.85) {$\agent$};
                        \node[proofcell, fill=proofgreen!14] at (\x,-.85) {$\agent$};
                }

                \foreach \x in {0,.72,1.44,2.16,2.88,3.95}{
                        \node[proofcopy] at (\x,1.25) {$g_3$};
                }
                \foreach \x in {3.95,4.67,5.39,6.11,0,.72}{
                        \node[proofcopy] at (\x,-.45) {$g_3$};
                }

                \node[font=\tiny, text=prooforange!85!black] at (2.85,.2)
                        {all $A$, then $B$};
                \node[font=\tiny, text=prooforange!85!black] at (3.05,-1.5)
                        {all $B$, then $A$};
        \end{scope}
\end{tikzpicture}
\caption{Two examples with five agents in $A$ and four agents in $B$, numbered
$1,\ldots,9$ from left to right. With
three remaining copies of $g_3$, $A^\BB$ does not exhaust $A$ and $A^\VV$
does not exhaust $B$. With six remaining copies, $A^\BB$ exhausts $A$ before
assigning to $B$, while $A^\VV$ exhausts $B$ before assigning to $A$.}
\label{fig:g3-grouping-example}
\end{figure}

For $q\in\{1,\ldots,m+1\}$, let $V_q:=\sum_{\ell\ge q}v_\ell$, with
$V_{m+1}=0$. We have the following inequalities from the lexicographic nature of
the values:
\[
  v_1\ge v_2+v_3+V_4,\qquad
  v_2\ge v_3+V_4,\qquad
  v_3\ge V_4 .
\]
The proof strategy is to partition the agents into groups of size at most $3$.
Call a group $G$ \emph{safe} if
$\prod_{i\in G}v(A^\VV_i)\le2^{|G|/2}\prod_{i\in G}v(A^\BB_i)$. So, a singleton is safe if its utility ratio is at most $\sqrt{2}$, a pair is safe if the ratio of the product of their utility ratios is at most $2$, and a triple is safe if the ratio of the product of their utility ratios is at most $2\sqrt{2}$. If every group is safe, then the overall ratio $\rho$ of geometric mean of utilities is at most $\sqrt2$. We build these
groups incrementally. An agent is \emph{unassigned} if it has not yet been placed in any group, and \emph{assigned} otherwise. To make progress in the proof, we will either form new groups from unassigned agents or add an unassigned agent to an existing group, while always maintaining that the groups are safe.

First, note that any agent from $A \cup B$ who receives $g_3$ in both the allocations can already be assigned to a singleton group: if its first-block value is
$\theta\in\{v_1,v_2\}$, then its ratio is at most
$(\theta+v_3+V_4)/(\theta+v_3) = 1 + \frac{V_4}{\theta + v_3}\le4/3<\sqrt2$. The first inequality holds since $v_3 \geq V_4$ and $\theta \geq v_2 \geq v_3 + V_4 \geq 2V_4$.

Then, note via the prefix condition that there is no agent in $A$ who receives $g_3$ only in $A^\VV$ and no agent in $B$ who receives $g_3$ only in $A^\BB$. Also, the number of agents in $A$ receiving $g_3$ only in $A^\BB$ and the same number of $B$-agents
receiving $g_3$ only in $A^\VV$ are the same. Match them arbitrarily into pairs. For a matched pair $(a,b)$, call $a$ the anchor and $b$ the representative; we also call $b$ a \emph{bad $B$-representative}. Assign both $a$ and $b$ to this pair, but keep the pair open and represented by $b$. The product of their utility ratios is at most $2$: the anchor contributes ratio at most $(v_1+V_4)/(v_1+v_3)\le1$, while the representative contributes ratio at most $(v_2+v_3+V_4)/v_2\le2$. Thus, the product of the ratios is at most $2$, and the pair is safe. However, we do not finalize this pair yet and call it an \emph{open pair}. 

For example, in the example with $3$ remaining $g_3$ copies from~\Cref{fig:g3-grouping-example}, agents $\{1,2,3\}$ receive $g_3$ only in $A^\BB$ and agents $\{6, 7, 8\}$ receive $g_3$ only in $A^\VV$. We can match them into pairs $(1,6),(2,7),(3,8)$. In the example with $6$ remaining $g_3$ copies, agents $\{1, 2, 6\}$ receive $g_3$ in both allocations and are thus assigned as singletons. Agents $\{3,4,5\}$ receive $g_3$ only in $A^\BB$ and agents $\{7,8,9\}$ receive $g_3$ only in $A^\VV$. We can match them into pairs $(3,7),(4,8),(5,9)$.

For the rest of the proof, we call the agents that are either unassigned or are the representative of some pair, as an \emph{open agents}. The open agents, at this point, are the precisely $\{r+1,r+2,\ldots, n\}$, that is, the agents who do not receive any good of type $g_3$ from the second block in $A^\BB$. The open agents have four possible classes presented in the table below. The second and third columns record the numerator and denominator bases:
\[
\begin{array}{c|c|c}
\text{class} &
\begin{array}{c}\text{numerator utility derived}\\\text{from types }\{g_1,g_2,g_3\}\end{array} &
\begin{array}{c}\text{denominator utility derived}\\\text{from types }\{g_1,g_2,g_3\}\end{array}\\ \hline
\text{unassigned }A\text{-singleton} & v_1 & v_1\\
\text{bad }B\text{-representative} & v_2+v_3 & v_2\\
\text{unassigned }B\text{-singleton} & v_2 & v_2\\
\text{unassigned }C\text{-singleton} & v_3 & v_3
\end{array}
\]

\subsection{Allocation of the lower types}

If $V_4=0$, every remaining singleton has ratio $1$, and every remaining open pair is safe by the preceding paragraph on open pairs. Hence assume $V_4>0$. For $j\ge3$, let $P_j$ be the total number of copies of types $g_1,\ldots,g_j$. Thus $P_3=n+r$. We process the unaccounted good types in the order $g_4,g_5,\ldots, g_m$. The invariant before processing $g_j$ is the following. 

\textbf{Invariant:} The open agents, if any, form a consecutive interval $\ell_j,\ell_j+1,\ldots,h_j$, where $\ell_j=P_{j-1}-n+1$ and $h_j\le n$;
none of these representatives has received any already processed lower type in
either allocation. 

This invariant is initially satisfied since the open agents are $\{r+1, \ldots, n\}$ and $P_3=n+r$. Suppose we have already processed $g_4,g_5,\ldots g_{j-1}$ and the invariant holds. By the sorted form of the denominator allocation from \Cref{sec:second_block_XB}, agent $i$ receives the second-block item $e_{n+i}$. Hence, among the current open agents, the agents who receive a copy of $g_j$ in $A^\BB$ are exactly
\[
        I_j=\{\ell_j,\ell_j+1,\ldots,\min(h_j,\ell_j+c_j-1)\}.
\]
Thus $I_j$ is a prefix of the current open interval $\{\ell_j,\ell_j+1,\ldots,h_j\}$.

We next modify $A^\VV$, without decreasing its Nash product, so that the open
agents who receive a copy of $g_j$ in $A^\VV$ form a suffix of the current
open interval. Indeed, suppose there are open agents $i<k$ such that $i$
receives a copy of $g_j$ in $A^\VV$ and $k$ does not. By the invariant, neither
agent has received any of the already processed lower types
$g_4,\ldots,g_{j-1}$. If $i$ and $k$ have the same bundle of types from
$\{g_1,g_2,g_3\}$, we may swap their lower-type tails, leaving the Nash product
unchanged. Otherwise, by the ordering of the open classes established after the
$g_3$ step, agent $i$ has a strictly more valuable base from
$\{g_1,g_2,g_3\}$ than agent $k$. The lexicographic inequalities imply that this
base advantage is at least the total value $V_{j+1}$ of all types below $g_j$.
Therefore, after removing the copy of $g_j$ from $i$, agent $i$ still has utility
at least that of agent $k$, and \Cref{lem:safe-transfer} allows us to transfer
this copy of $g_j$ from $i$ to $k$ without decreasing the Nash product. Repeating
this operation finitely many times yields a numerator allocation in which the
open agents receiving $g_j$ form some suffix
\[
        T_j=\{s_j,s_j+1,\ldots,h_j\}
\]
of the current open interval. Note that this modification does not affect the \emph{safety} of open pairs. This is because, when an open pair was opened, the upper bound of $2$ on the product of ratios for the two agents in the pair was proven for the fixed allocation of $\{g_1,g_2,g_3\}$ good types, and the worst possible allocation of the lower types.

We now argue in the following lemma that we can finalize the groups for the agents $I_j \cup T_j$, that is, all the open agents who receive a copy of $g_j$.

\begin{lemma}[One-step certification]
\label{lem:one-step-certification}
 All agents in $I_j\cup T_j$ can be assigned to safe groups and removed from the open list. If the set of open agents is non-empty after this removal, the invariant is maintained.
\end{lemma}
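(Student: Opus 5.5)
The plan is to classify the agents of $I_j\cup T_j$ into three kinds, $I_j\cap T_j$, $I_j\setminus T_j$ and $T_j\setminus I_j$, and to bound each agent's utility ratio using its class from the table above. Write $\beta^\VV_i,\beta^\BB_i$ for the numerator and denominator bases of an open agent $i$ (its utility from types $\{g_1,g_2,g_3\}$). By the invariant, agent $i$ has received no processed lower type in either allocation. Hence its numerator utility is at most $\beta^\VV_i+v_j\mathbb 1[i\in T_j]+V_{j+1}$, and its denominator utility is at least $\beta^\BB_i+v_j\mathbb 1[i\in I_j]$. In every class $\beta^\BB_i\ge v_3\ge V_4\ge v_j+V_{j+1}$, and $v_j\ge V_{j+1}$.

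\emph{Agents in $I_j\cap T_j$.} An unassigned singleton has $\beta^\VV_i=\beta^\BB_i=\theta\ge v_j+V_{j+1}$, so its ratio is at most $(\theta+v_j+V_{j+1})/(\theta+v_j)\le 1+V_{j+1}/(2v_j)\le 3/2$. I expect this to sharpen to $\le 4/3<\sqrt2$ via $\theta\ge 2V_{j+1}$, exactly as in the $g_3$ step. I then close it as a safe singleton.

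A bad $B$-representative in $I_j\cap T_j$ has ratio at most $(v_2+v_3+v_j+V_{j+1})/(v_2+v_j)\le 2$. Its anchor has ratio at most $(v_1+V_4)/(v_1+v_3)\le 1$. So the open pair is safe and I close it.

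\emph{Agents in $I_j\setminus T_j$.} An unassigned singleton has ratio at most $(\theta+V_{j+1})/(\theta+v_j)\le 1$, so it is safe. A bad representative has ratio at most $(v_2+v_3+V_{j+1})/(v_2+v_j)\le 2$, and together with its anchor the pair stays safe. Both can be closed.

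\emph{Agents in $T_j\setminus I_j$.} This is the only dangerous kind. An unassigned $C$-singleton can have ratio $(v_3+v_j+V_{j+1})/v_3$, which may be as large as $2$. The plan is to match each $T_j\setminus I_j$ agent to a distinct $I_j\setminus T_j$ agent.

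First I need such a matching to exist, i.e.\ $|T_j\setminus I_j|\le |I_j\setminus T_j|$. There are two cases.
\begin{itemize}
\item If $|I_j|<c_j$, then $I_j$ is the whole open interval, so $T_j\subseteq I_j$ and there is nothing to match.
\item Otherwise $|I_j|=c_j\ge |T_j|$, which gives the required inequality.
\end{itemize}

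For a matched singleton pair $(i,k)$ with $i\in I_j\setminus T_j$ and $k\in T_j\setminus I_j$, the ratio of $i$ is at most $1$ and the ratio of $k$ is at most $(\beta_k+v_j+V_{j+1})/\beta_k\le 2$. The product is therefore at most $2$, and the pair is safe.

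\textbf{Main obstacle.} The hard step is when $i$ or $k$ is a bad $B$-representative, because adding a third member to an open pair leaves only a budget of $2\sqrt2$ for the triple. In the order of open classes, $k$ comes after $i$. So if $k$ is a bad representative, $i$ is an $A$-singleton or a bad representative; if $i$ is a bad representative, $k$ is a bad representative, a $B$-singleton, or a $C$-singleton.

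For these cases I would not use the crude per-agent bounds. Instead I would compute the product over the triple directly, using the sharper inequalities $v_1\ge v_2+v_3+V_4$, $v_2\ge v_3+V_4\ge v_3+v_j+V_{j+1}$ and $v_3\ge v_j+V_{j+1}$. In each triple one factor should be visibly at most $1$, and the remaining two should multiply to at most $2\sqrt2$. For example, when $i$ is a bad representative, its ratio $(v_2+v_3+V_{j+1})/(v_2+v_j)$ is at most $3/2$ since $v_2\ge v_3+v_j$. I expect the tightest case to be two bad representatives together with an anchor, which I would handle by merging the two open pairs' anchors. If some case fails, I would instead reassign which $I_j\setminus T_j$ agent $k$ is matched to, preferring $A$- or $B$-singletons over bad representatives.

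\emph{Remaining unmatched agents and the invariant.} Any leftover agents of $I_j\setminus T_j$ are safe on their own, since their ratio is at most $1$, or, for a bad representative, their pair stays safe. So every agent of $I_j\cup T_j$ is now in a safe group of size at most $3$.

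For the invariant: if open agents remain after the removal, then $I_j$ did not cover the interval. Hence $|I_j|=c_j$, and the survivors form the consecutive interval from $\ell_j+c_j=P_j-n+1=\ell_{j+1}$ to $s_j-1$. None of them received $g_j$ in either allocation, and $h_{j+1}=s_j-1\le n$, so the invariant holds for $g_{j+1}$.
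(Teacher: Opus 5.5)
Your setup, the bounds for $I_j\cap T_j$ and $I_j\setminus T_j$, the counting $|T_j\setminus I_j|\le|I_j\setminus T_j|$, the singleton--singleton pairs, and the invariant update all agree with the paper. The sharpening to $4/3$ is immediate from $\theta+v_j\ge 2v_j+V_{j+1}\ge 3V_{j+1}$.

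The gap is the case you call the main obstacle and leave open. Here a bad $B$-representative $b\in I_j\setminus T_j$, together with its anchor, is matched to a $B$- or $C$-singleton $c\in T_j\setminus I_j$ with base $\theta_c$. Since the anchor contributes at most $1$, you need
\[
\frac{v_2+v_3+V_{j+1}}{v_2+v_j}\cdot\frac{\theta_c+V_j}{\theta_c}\le 2\sqrt2 .
\]
This cannot come from bounding the factors separately, because each can approach $2$. The first does so when $V_j\ll v_3$ (e.g.\ $v_3=1$, $v_2=1+\delta$, $V_4=\delta$, $j=4$); the second does so when $\theta_c=v_3=V_j$. In particular, your claimed bound $(v_2+v_3+V_{j+1})/(v_2+v_j)\le 3/2$ is false, and even if it held, $\tfrac32\cdot 2=3>2\sqrt2$.

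What is missing is the coupling between the factors. When $V_j$ is comparable to $v_3$, the constraint $v_2\ge v_3+V_j$ together with $V_{j+1}\le V_j/2$ pushes the first factor down, to at most $7/5$ when $V_j=v_3$. The paper first reduces to $\theta_c=v_3$; the second factor decreases in $\theta_c$, so this also covers $B$-singletons. It then writes $v_3=V_j+\alpha$ and $v_2=v_3+V_j+\beta$, shows that $\beta=0$ and then $\alpha=0$ are worst, and obtains $2(3V_j+V_{j+1})/(3V_j-V_{j+1})\le 14/5<2\sqrt2$. The slack is tiny: the factor bounds give exactly $14/5$ at $v_3=V_4=1$, $v_4=V_5=\tfrac12$, $v_2=2$, $j=4$. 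So no crude estimate will do.

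Your fallback of rematching $c$ to an $A$- or $B$-singleton is not always available. Suppose $r=c_1\le c_2$. Then every $A$-agent is an anchor, and the open interval at $j=4$ starts with the $c_1$ bad representatives. With $c_4=c_1$, the set $I_4$ consists only of these representatives, while $T_4$ may lie entirely in $C$. Every singleton of $T_4\setminus I_4$ is then forced into such a triple.

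Conversely, the case you expect to be tightest, two bad representatives, needs no matching at all. A bad representative in $T_j\setminus I_j$ has ratio at most $(v_2+v_3+V_j)/v_2\le 2$, because $v_2\ge v_3+V_j$. So its open pair is safe by itself, and only the singletons of $T_j\setminus I_j$ need to be matched, exactly as in the paper.
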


\begin{proof}
Note that $|I_j| \geq |T_j|$, as $I_j$ is either the set of all the open agents, or has a size equal to the total number of goods $c_j$ of type $g_j$. Let $\Lambda_j:=I_j\setminus T_j$ and $\Gamma_j:=T_j\setminus I_j$. Since
$I_j$ forms a prefix and $T_j$ a suffix of the open agents, every agent of $\Lambda_j$ lies to the left of every agent of $\Gamma_j$, and $|\Gamma_j|\le |\Lambda_j|$.
By lexicographicity, $V_{j+1}\le v_j$, $v_3\ge V_j$, and $v_2\ge v_3+V_j$.

Agents in $I_j\cap T_j$ are immediately safe. A singleton there has top value $\theta\ge v_3$ and ratio at most $(\theta+V_j)/(\theta+v_j)\le4/3<\sqrt2$. An open pair there has
anchor ratio at most $1$ and representative ratio at most $(v_2+v_3+V_j)/(v_2+v_j)\le2$.

For an agent in $\Lambda_j$, a singleton is safe by itself because it gains $v_j$ in the numerator and can gain only $V_{j+1}\le v_j$ in the denominator; thus it has ratio at most $1 < \sqrt{2}$. An open pair in
$\Lambda_j$ is safe unless it is needed for a match below; by itself its
representative ratio is at most $(v_2+v_3+V_{j+1})/(v_2+v_j)\le2$. An open pair
in $\Gamma_j$ is also safe by itself, since its representative ratio is at most
$(v_2+v_3+V_j)/v_2\le2$.

It remains to handle singleton agents in $\Gamma_j$. Match each such singleton
injectively to a distinct agent of $\Lambda_j$. If the matched left agent is a
singleton, the two singletons form a safe pair: the left ratio is at most $1$,
and the right ratio is at most $(\theta+V_j)/\theta\le2$ because
$\theta\ge v_3\ge V_j$.

If the matched left agent is an open pair $(a',b')$ and the gain singleton is
$c'$, put all three agents in one group. The anchor $a'$ has ratio at most $1$.
For the representative $b'$ and singleton $c'$, it is enough to bound
\begin{equation*}
  \frac{v_2+v_3+V_{j+1}}{v_2+v_j}\cdot
  \frac{v_3+V_j}{v_3} .
\end{equation*}
Since $V_j=v_j+V_{j+1}$ and $v_j\ge V_{j+1}$, we have
$0\le V_{j+1}\le V_j/2$. Thus we may write
$v_3=V_j+\alpha$ and $v_2=v_3+V_j+\beta$ with $\alpha,\beta\ge0$. The displayed
product is at most
\begin{equation*}
  \frac{3V_j+2\alpha+\beta+V_{j+1}}
       {3V_j+\alpha+\beta-V_{j+1}}\cdot
  \frac{2V_j+\alpha}{V_j+\alpha}.
\end{equation*}
The first factor is at least $1$, so increasing $\beta$ decreases it. Thus
$\beta=0$ is worst. A direct cross multiplication gives
\begin{equation*}
  \frac{3V_j+V_{j+1}}{3V_j-V_{j+1}}\cdot2
  -
  \frac{3V_j+2\alpha+V_{j+1}}{3V_j+\alpha-V_{j+1}}\cdot
  \frac{2V_j+\alpha}{V_j+\alpha}
  =
  \frac{\alpha(3V_j^2+6V_jV_{j+1}-V_{j+1}^2+4\alpha V_{j+1})}
       {(3V_j-V_{j+1})(3V_j+\alpha-V_{j+1})(V_j+\alpha)}
  \ge0 .
\end{equation*}
Therefore the product is at most
$2(3V_j+V_{j+1})/(3V_j-V_{j+1})\le14/5<2\sqrt2$, using
$V_{j+1}\le V_j/2$. Hence the triple is safe. After these matches, all unmatched agents of $\Lambda_j$ are handled by the loss cases above.

Since we remove a prefix and a suffix from the set of open agents, the remaining set of open agents (if any) still forms a contiguous interval, and starts at $\ell_{j+1} = P_j - n + 1$. Thus, the invariant is maintained.
\end{proof}

We process the good types $g_4,g_5,\ldots$ according to
\Cref{lem:one-step-certification} until no open $C$-agent remains; this stopping
point is well-defined because the invariant forces the open interval to move
past $C$. Indeed, after type $g_j$ is processed, if any open agents remain, they
form an interval starting at $\ell_{j+1}=P_j-n+1$ and ending at some
$h_{j+1}\le n$. Since $P_j$ eventually reaches at least $2n$, this left endpoint
would eventually be at least $n+1$, which is impossible for a nonempty open
interval. Thus all open agents, and in particular all open $C$-agents, are
removed no later than the point when the second block has been fully processed.
Suppose the open $C$-agents are exhausted just before the next type to be
processed is $g_j$. Then $j\ge5$, since the set of $C$-agents was nonempty. At
this point, we can finalize all remaining open singleton agents as safe
singleton groups and finalize all remaining open pairs as safe pairs. This is
because any remaining open singleton has base at least $v_2$ in both allocations
and can gain only goods of total value at most $V_j\le V_5$ in $A^\VV$. Since
$v_2 \ge v_3 + v_4 + V_5 \ge (v_4 + V_5) + v_4 + V_5 \ge 4V_5$, the ratio of
the utilities in the two allocations for any remaining open singleton is at most
$5/4 < \sqrt{2}$. The open pairs are already proven to be safe, so they can be
finalized.

After this finalization, all agents have been partitioned into groups $G$ of
size $1$, $2$, or $3$ such that
\[
        \prod_{i\in G}v(A^\VV_i)
        \le
        2^{|G|/2}\prod_{i\in G}v(A^\BB_i)
\]
for every $G$. Multiplying over the groups gives
\[
        \prod_{i=1}^n v(A^\VV_i)
        \le
        2^{n/2}\prod_{i=1}^n v(A^\BB_i),
\]
and taking
$n$-th roots gives $\rho\le\sqrt2$. Since every transformation weakly increased
$\rho$, the original ratio was also at most $\sqrt2$, thus proving~\Cref{thm:ratio_root_2}.

\section{Missing Proofs from Section~\ref{sec:Results_Subclasses}} 
\label{sec:algo_appendix}

\DominationLemmaMain*
\begin{proof}
    Let $B = (B_1, B_2, \ldots, B_n)$ be a completion of the partial allocation $A$, that assigns good $g$ to agent $i$, and let $B'=(B'_1,\ldots,B'_n)$ be the completion obtained from $B$ by reallocating $g$ from agent $i$ to agent $k$, that is, $B'_r = B_r$ for all $r \in N \setminus \{i,k\}$, $B'_i = B_i \setminus \{g\}$ and $B'_k = B_k \cup \{g\}$. Note that
    \begin{align*}
        \WH(B') - \WH(B) &= \Delta_k(v_k(B_k), v_k(g)) - \Delta_i(v_i(B_i \setminus \{g\}), v_i(g))\\
        &\geq \Delta_k(v_k(A_k) + v_k(Y), v_k(g)) - \Delta_i(v_i(A_i), v_i(g))\\
        &> 0,
    \end{align*}

    where the first inequality follows from $B_k \subseteq A_k \cup Y, A_i \subseteq B_i \setminus \{g\}$ and the fact that the marginal functions are decreasing in their first argument. The second inequality follows from the domination condition. Hence, $\WH(B') > \WH(B)$, and thus $B$ cannot be optimal.
    
\end{proof}

\subsection{Missing Proofs from Section~\ref{sec:identical}}
\label{sec:appendix_identical}

\IdenticalCardinality*
\begin{proof}
    Let $A^\star$ be any optimal solution. Suppose $i$ is an agent satisfying $|A^\star_i| \ge 2$ and $k$ is any other agent such that $v(A^\star_i) > v(A^\star_k)$. Let $g_i$ be the favorite good of agent $i$ in $A^\star_i$, and $g'$ be any other good in $A^\star_i$. Consider the partial allocation $A$ obtained from $A^\star$ by unallocating $g'$ from agent $i$. Clearly, $g'$ is the only unallocated good under $A$, and the set $Y$ of unallocated goods other than $g'$ is empty. Also, since the valuation function $v$ is lexicographic, $v(A_i) = v(A^\star_i \setminus g') \geq v(g_i) > v(A^\star_k) = v(A_k)$. Thus, by the strict concavity of $f$, $\Delta_i(v(A_i), v(g')) < \Delta_k(v(A_k), v(g'))$ since $\Delta_i = \Delta_k$. Hence, the domination condition(\Cref{eqn:domination_condition}) is satisfied and agent $k$ dominates agent $i$ with respect to good $g'$ under the partial allocation $A$. Thus, $A^\star$ cannot be optimal by the domination lemma (~\Cref{lem:domination}).
\end{proof}

\IdenticalStructure*
\begin{proof}
    If $m=n$,~\Cref{lem:identical} implies that every agent receives exactly one good, and the result follows. If $m > n$,~\Cref{lem:identical} implies that there is exactly one agent who receives two or more goods (if there were more than two agents, one of them must have a higher utility, contradicting~\Cref{lem:identical}). Also, this agent(say agent $i$) must have a lower utility than all other agents. Hence, agent $i$ must not receive any of the best $n-1$ goods, for otherwise, there would be an agent who does not receive any of the best $n-1$ goods and thus has a lower utility than agent $i$, contradicting~\Cref{lem:identical}. Thus, agent $i$ receives the worst $m-n+1$ goods, and the best $n-1$ goods are allocated one each to the remaining $n-1$ agents.
\end{proof}

\subsection{Missing Proofs from Section~\ref{subsec:Results_Ordered_Nash}} 
\label{sec:ordered_appendix}

\DominationLemma*
\begin{proof}
    Suppose neither of the two domination conditions hold. Let $T$ be the set of all unallocated goods in $A$ except $e$ and let $T'$ be the set of all unallocated goods in $A'$ except $e$. The negation of the domination condition~(\Cref{eqn:domination_condition}) gives us the following two inequalities:

    \begin{equation}
    \Delta_i(v_i(A_i), v_i(e)) \geq \Delta_k(v_k(A_k) + v_k(T), v_k(e)),
        \label{eqn:helper_1}
    \end{equation}

    \begin{equation}
        \Delta_k(v_k(A'_k), v_k(e)) \geq \Delta_i(v_i(A'_i) + v_i(T'), v_i(e)).
        \label{eqn:helper_2}
    \end{equation}

Now, let $U$ be the set of goods that were unallocated in $B$. As the instance is lexicographic and $g$ is the most valued good in $U$ for both agents, we have that
\begin{equation}\label{eqn:midi}
    v_i(A_i) \geq v_i(B_i \cup \{g\}) > v_i(B_i \cup U \setminus \{g\}) \geq v_i(A'_i \cup T') = v_i(A'_i) + v_i(T')
\end{equation}

where the third inequality follows from $A'_i \cup T' \subseteq B_i \cup U \setminus \{g\}$, since $A'_i \setminus B_i$ and $T'$ are both subsets of $U \setminus g$. Similarly, we have that
\begin{equation}\label{eqn:midk}
    v_k(A'_k) \geq v_k(B_k \cup \{g\}) > v_k(B_k \cup U \setminus \{g\}) \geq v_k(A_k \cup T) = v_k(A_k) + v_k(T).
\end{equation}

Hence, applying the inequalities in \Cref{eqn:midi} and \Cref{eqn:midk} to the strict concavity of $h_i$ and $h_k$ in \Cref{eqn:helper_1} and \Cref{eqn:helper_2} respectively, we get the following two inequalities:

\begin{equation}
    \Delta_i(v_i(A'_i) + v_i(T'), v_i(e)) > \Delta_i(v_i(A_i), v_i(e)) \geq \Delta_k(v_k(A_k) + v_k(T), v_k(e))
\end{equation}

\begin{equation}
    \Delta_k(v_k(A_k) + v_k(T), v_k(e)) > \Delta_k(v_k(A'_k)) \geq \Delta_i(v_i(A'_i) + v_i(T'), v_i(e)),
\end{equation}

a contradiction.
\end{proof}

\LeafBound*
\begin{proof}[Proof of \Cref{lem:leaf_bound}]
Fix a node $A$ at depth $j$, and recall that $M(A)$ has one row per remaining good $g_{j+1},\dots,g_m$ and one column per agent. We prove the claim by induction on the height of $A$ (i.e., on the number of goods remaining to be allocated).

\smallskip
\noindent\textbf{Base case.}
If $A$ is a leaf, then $\text{leaves}(A)=1$.  Also $M(A)$ has zero rows, hence $M(A)^r\cdot M(A)^s=0$ for all $r\neq s$, so the RHS equals $\prod_{r<s}(1+0)=1$.

\smallskip
\noindent\textbf{Inductive step.}
Assume the statement holds for all nodes of smaller height.  Consider $A$ (height $\ge 1$), and let $S\subseteq N$ be the set of agents that are \emph{not} pruned when allocating the next good $g_{j+1}$.  For each $i\in S$, let $C_i$ be the corresponding child. Clearly,
\begin{equation}\label{eq:leaves-sum-clean}
    \text{leaves}(A)=\sum_{i\in S}\text{leaves}(C_i).
\end{equation}
By induction, for each $i\in S$,
\begin{equation}\label{eq:IH-clean}
    \text{leaves}(C_i)\le \prod_{1\le r<s\le n}\bigl(1+M(C_i)^r\cdot M(C_i)^s\bigr).
\end{equation}
Let $M'(A)$ denote $M(A)$ with the row corresponding to $g_{j+1}$ removed. Every descendant of $C_i$ is a descendant of $A$, hence for every agent $r$,
\begin{equation}\label{eq:monotone-clean}
    M(C_i)^r \le M'(A)^r \qquad \text{(entry-wise)}.
\end{equation}
Therefore, for any pair $\{r,s\}$,
\[
1+M(C_i)^r\cdot M(C_i)^s \le 1+M'(A)^r\cdot M'(A)^s \le 1+M(A)^r\cdot M(A)^s ,
\]
Additionally, for pairs involving $i$ we keep the $i$-column from $M(C_i)$ and upper bound the other by $M'(A)$:
\[
1+M(C_i)^i\cdot M(C_i)^k \le 1+M(C_i)^i\cdot M'(A)^k \qquad (k\neq i).
\]
Plugging these bounds into \eqref{eq:IH-clean} yields
\begin{equation}\label{eq:child-bound-factorized}
\text{leaves}(C_i)\le
\Biggl(\prod_{1\le r<s\le n}\bigl(1+M(A)^r\cdot M(A)^s\bigr)\Biggr)
\cdot
\prod_{k \in N \setminus \{i\}} h_{i,k},
\end{equation}
where we define
\[
h_{i,k}\coloneqq
\frac{1+M(C_i)^i\cdot M'(A)^k}{1+M(A)^i\cdot M(A)^k}
\qquad (i\in S,\ k\in N\setminus\{i\}).
\]
Summing \eqref{eq:child-bound-factorized} over $i\in S$ and using
\eqref{eq:leaves-sum-clean}, we obtain
\begin{equation}\label{eq:reduce-to-sum-clean}
\begin{aligned}
\text{leaves}(A)&=\sum_{i\in S}\text{leaves}(C_i)\\
&\le \Biggl(\prod_{1\le r<s\le n}\bigl(1+M(A)^r\cdot M(A)^s\bigr)\Biggr) \cdot \sum_{i\in S}\ \prod_{k\in N\setminus\{i\}} h_{i,k}.\\
&\le \Biggl(\prod_{1\le r<s\le n}\bigl(1+M(A)^r\cdot M(A)^s\bigr)\Biggr) \cdot \sum_{i\in S}\ \prod_{k\in S\setminus\{i\}} h_{i,k}.
\end{aligned}
\end{equation}
where the final inequality uses $h_{i, k} \le 1$ for all $i \in S, k \in N \setminus \{i\}$. Thus it remains to prove
\begin{equation}\label{eq:sum-leq-1-target}
\sum_{i\in S}\ \prod_{k\in S\setminus\{i\}} h_{i,k}\ \le\ 1.
\end{equation}

Let us first prove that $h_{i,k}+h_{k,i}\le 1$ for all $i\neq k \in S$. Recall that
\[
h_{i,k}=
\frac{1+M(C_i)^i\cdot M'(A)^k}{1+M(A)^i\cdot M(A)^k} \qquad\text{and}\qquad h_{k,i}= \frac{1+M(C_k)^k\cdot M'(A)^i}{1+M(A)^i\cdot M(A)^k}.
\]
Let $D = 1+M(A)^i\cdot M(A)^k$ be the common denominator in $h_{i,k}$ and $h_{k,i}$. Because $i,k\in S$, the row corresponding to $g_{j+1}$ of $M(A)$ has a $1$ in both the columns $i$ and $k$, and $M'(A)$ denotes the matrix $A$ with the row corresponding to $g_{j+1}$ removed, we have
\[
D = 2 + M'(A)^i\cdot M'(A)^k.
\]
Now, fix any remaining good $g_{r}$ with $r>j+1$ and define the row-indicators
\[
x_r \coloneqq M'(A)_{g_r,i},\qquad y_r \coloneqq M'(A)_{g_r,k},\qquad
u_r \coloneqq M(C_i)_{g_r,i},\qquad v_r \coloneqq M(C_k)_{g_r,k}
\]
where $x_r,y_r,u_r,v_r\in\{0,1\}$. Since $M(C_i) \leq M'(A)$ and $M(C_k) \leq M'(A)$ entry-wise, we have:
\begin{equation}\label{eq:monotone-row}
u_r \le x_r
\qquad\text{and}\qquad
v_r \le y_r
\qquad\text{for all } r>j+1.
\end{equation}
Moreover, by \Cref{cor:mutual_exclusion_siblings} (mutual exclusion across sibling branches), for each $r>j+1$ we cannot have simultaneously: (i) in the subtree of $C_i$ there exists a node that gives $g_r$ to $i$, and (ii) in the subtree of $C_k$ there exists a node that gives $g_r$ to $k$. Equivalently, at most one of $u_r$ and $v_r$ can be $1$, that is, $u_r + v_r \le 1$. Observe that
\begin{equation}
u_r y_r + v_r x_r \le x_r y_r.
\label{eqn:dot_product_contribution}
\end{equation}
Indeed, if $x_r y_r = 0$, then either $x_r=0$ or $y_r=0$ and by \eqref{eq:monotone-row} we get $u_r y_r + v_r x_r \leq 2x_r y_r = 0$. If $x_r y_r = 1$, then $x_r=y_r=1$ and therefore $u_r y_r + v_r x_r = u_r+v_r \le 1 = x_r y_r$.
Summing inequality~\eqref{eqn:dot_product_contribution} over all rows $r>j+1$ yields
\[
M(C_i)^i\cdot M'(A)^k \;+\; M(C_k)^k\cdot M'(A)^i
\;\le\;
M'(A)^i\cdot M'(A)^k.
\]
Putting everything together, we get
\begin{equation}\label{eq:h-sum}
\begin{aligned}
h_{i,k}+h_{k,i}
&=
\frac{2 + \big(M(C_i)^i\cdot M'(A)^k\big) + \big(M(C_k)^k\cdot M'(A)^i\big)}{D}\\
&\le
\frac{2 + M'(A)^i\cdot M'(A)^k}{2 + M'(A)^i\cdot M'(A)^k}
= 1.
\end{aligned}
\end{equation}

\smallskip
\noindent\textbf{Proof of \eqref{eq:sum-leq-1-target}.} Let us generate a random directed graph as follows. Start with a complete undirected graph on the vertex set $S$. For each undirected edge $(i,k)$, do the following independently of other edges: direct it $i\to k$ with probability $h_{i,k}$, direct it $k\to i$ with probability $h_{k,i}$, and remove it with probability $1-h_{i,k}-h_{k,i}\ge 0$ (this is where we use $h_{i,k} + h_{k,i} \le 1$). For any $i\in S$, the probability that there exists an edge $i \to k$ for all $k \in S \setminus \{i\}$ in the random directed graph is exactly $\prod_{k\in S\setminus\{i\}} h_{i,k}$,
by independence. Moreover, two distinct vertices cannot both have all incident edges directed away from themselves. Hence
\[
\sum_{i\in S}\ \prod_{k\in S\setminus\{i\}} h_{i,k}
\]
is the probability of a disjoint union of events, and is therefore at most $1$. This proves \eqref{eq:sum-leq-1-target}. Plugging into \eqref{eq:reduce-to-sum-clean} gives the desired bound on $\text{leaves}(A)$ and completes the induction.
\end{proof}

\subsubsection{Proof of Theorem~\ref{thm:EPTAS}}
\label{sec:EPTASthmappendix}

\EPTASNashOrdered*

We begin with proving the following property of Nash optimal allocations under ordered lexicographic valuations:

\begin{lemma}
    In every Nash optimal allocation, the goods $g_1, g_2, \ldots g_n$ are allocated to distinct agents.
\end{lemma}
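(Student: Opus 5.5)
We argue by contradiction using a single transfer, via \Cref{lem:transfer_lemma} (or directly \Cref{lem:safe-transfer}-style reasoning on the two-agent product). Let $A$ be a Nash optimal allocation (so every agent has positive utility; by our standing assumption $m \ge n$ such allocations exist, and every good is positively valued). Suppose some agent $i$ receives two goods among $g_1,\dots,g_n$. By pigeonhole, since $n$ top goods are spread over fewer than $n$ agents, some agent $k$ receives none of $g_1,\dots,g_n$, so $A_k \subseteq \tail(n)$.

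Let $g$ be the lower-ranked of the two top goods held by $i$, say $g = g_b$, and let $g_a$ with $a<b\le n$ be the other. We would transfer $g$ from $i$ to $k$. To certify improvement we verify the condition of \Cref{lem:transfer_lemma}:
\[
\frac{v_i(A_i\setminus\{g\})}{v_k(A_k)} > \frac{v_i(g)}{v_k(g)}.
\]
The numerator on the left is at least $v_i(g_a) > v_i(\tail(a)) \ge v_i(g_b)=v_i(g)$. The denominator satisfies $v_k(A_k) \le v_k(\tail(n)) < v_k(g_n) \le v_k(g_b)=v_k(g)$, using the ordered lexicographic property for agent $k$ and $b\le n$. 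Hence the left side is strictly greater than $\frac{v_i(g)}{v_k(g)}$, and the lemma says $A$ is not Nash optimal, a contradiction.

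The only delicate points are (i) ensuring positivity of all utilities so that \Cref{lem:transfer_lemma} applies (true for any Nash optimal allocation when $m\ge n$, since some allocation has positive product), and (ii) using the common order so that $g_n$ beats all of $\tail(n)$ for agent $k$ specifically; both are immediate. I expect no real obstacle; alternatively one can inline the two-agent computation $(U_i - x)(U_k + y) > U_iU_k$ with $x=v_i(g)$, $y=v_k(g)$, which reduces to exactly the displayed ratio inequality.
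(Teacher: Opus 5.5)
Your proof is correct and follows essentially the same argument as the paper: transfer the lower-ranked of the two top-$n$ goods from $i$ to an agent $k$ holding none of $g_1,\dots,g_n$, using lexicographicity to get $v_i(A_i\setminus\{g\})>v_i(g)$ and $v_k(A_k)<v_k(g)$. The paper states this directly as ``$i$ loses less than half its utility while $k$ more than doubles,'' whereas you feed the same two inequalities into \Cref{lem:transfer_lemma}; the two are equivalent, and your positivity caveat matches the paper's standing assumption in \Cref{sec:Results_Subclasses}.
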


\begin{proof}
    Suppose not. Then, there exists an agent $i$ who receives at least two of the goods $g_1, g_2, \ldots g_n$. and an agent $j$ who receives none of these goods. Suppose agent $i$ receives goods $g_r$ and $g_s$ with $r < s \leq n$. Consider the allocation $A'$ obtained by transferring good $g_s$ to agent $j$. Because of the lexicographic nature of the valuations, the utility of agent $i$ reduces by at most half, while the utility of agent $j$ increases by at a factor strictly greater than $2$. Hence, the Nash welfare of $A'$ is strictly greater than that of $A$, a contradiction.
\end{proof}

Also, note that for any lexicographic valuation $v$, where the preference order of goods is $g_1 \succ g_2 \succ \ldots \succ g_m$, we have the following property:
\begin{equation}
    v(g_i) > 2^{t-1} (v(g_{i + t}) + v(g_{i + t + 1}) + \ldots v(g_m))
    \label{eq:lex_gap}
\end{equation}

for all $i$ and $t$ such that $i + t \leq m$. This can be easily seen through induction on $t$. For $t=1$, it follows through the definition of lexicographic valuations. For the inductive step, we have:
\begin{align*}
v(g_i) &> \sum_{j=i+1}^{m} v(g_j) = \sum_{j=i+1}^{i+t-1} v(g_j) + \sum_{j=i+t}^{m} v(g_j) \\
&> (\sum_{j=i+t}^{m} v(g_j)) \cdot (1 + 1 + 2^1 + \ldots 2^{t-2})\\
&= 2^{t-1} \cdot \sum_{j=i+t}^{m} v(g_j)
\end{align*}

Let $A^\star$ be the Nash optimal allocation. We know that each agent receives exactly one good from $\{g_1, g_2, \ldots g_n\}$ in $A^\star$. Additionally, we will guess (by enumerating over all $\O(n^\ell)$ choices) the agent who gets the good $g_r$ in $A^\star$, for all $r \in {n + 1, \ldots n+\ell}$, for some parameter $\ell$ to be determined later. For each agent $i \in N$, let $B_i$ be the set of goods among $\{g_{n+1}, \ldots g_{n+\ell}\}$ that are allocated to agent $i$ in $A^\star$. We will find a matching $\pi$ between the agents and the goods $\{g_1, g_2, \ldots g_n\}$ that maximizes the following:

\[
W'(\pi) = \prod_{i \in [n]} \bigl(v_i(\pi(i)) + v_i(B_i)\bigr)^{\nicefrac{1}{n}}
\]

\begin{algorithm}[t]
    \caption{\textsc{OrderedLexicographicNashEPTAS}}

    \DontPrintSemicolon
    \KwIn{An instance $\langle N, M, \V \rangle$ with ordered lexicographic valuations, parameter $\varepsilon > 0$.}
    \KwOut{An allocation $A'$ with $W^\texttt{\textup{Nash}}(A') \geq (1-\varepsilon) \cdot W^\texttt{\textup{Nash}}(A^\star)$.}

    \If{$n >  2/\varepsilon $}{
        $\ell \gets 0$\;
    }
    \Else{
        $\ell \gets \min(m-n, \lceil \log_2(2/\varepsilon) \rceil)$\;
    }

    $A^\star_{\text{best}} \gets \texttt{null}$\;
    $W_{\text{best}} \gets -\infty$\;

    \ForEach{assignment $(B_i)_{i \in N}$ of goods $\{g_{n+1}, \ldots, g_{n+\ell}\}$ to agents}{
        \tcc{There are at most $n^\ell$ such assignments}

        Construct the $n \times n$ weight matrix $W$ where $W_{i,j} = \log(v_i(g_j) + v_i(B_i))$ for $i \in N, j \in [n]$\;

        Find a maximum weight perfect matching $\pi: N \to \{g_1, \ldots, g_n\}$ using the Hungarian algorithm\;

        \tcc{$\pi(i)$ is the good from $\{g_1, \ldots, g_n\}$ assigned to agent $i$}

        Construct allocation $A'$: for each agent $i \in N$, set $A'_i \gets \{\pi(i)\} \cup B_i$\;

        Assign all remaining goods $R = \{g_{n+\ell+1}, \ldots, g_m\}$ to an arbitrary agent\;

        \If{$W^\texttt{\textup{Nash}}(A') > W_{\text{best}}$}{
            $W_{\text{best}} \gets W^\texttt{\textup{Nash}}(A')$\;
            $A^\star_{\text{best}} \gets A'$\;
        }
    }

    \Return $A^\star_{\text{best}}$\;
    \label{alg:OrderedLexicographicNashEPTAS}
\end{algorithm}
Such a matching can be found in polynomial time using the Hungarian algorithm. Let $R = \{g_{n + \ell + 1},\ldots g_{m}\}$ be the set of remaining goods. We will return the allocation $A'$ obtained by giving each agent $i$ the good $\pi(i)$ and all the goods in $B_i$, and giving all of $R$ to an arbitrary agent. The overall runtime will therefore be $\O(n^\ell \cdot (n^3 + nm))$, since we run the Hungarian algorithm for each of the $\O(n^\ell)$ choices for $B$. We present the pseudo-code of our algorithm in~\Cref{alg:OrderedLexicographicNashEPTAS}. For each agent $i\in N$, let $r_i^\star\in[n]$ be the index of the unique good from $\{g_1,\ldots,g_n\}$ allocated to $i$ in $A^\star$, and define the matching $\pi^\star$ by $\pi^\star(i):=g_{r_i^\star}$. Since these goods are allocated to distinct agents, $(r_i^\star)_{i\in N}$ is a permutation of $[n]$. From~\eqref{eq:lex_gap}, we get that $v_i(g_{r_i^\star}) \geq 2^{n+\ell-r_i^\star}v_i(R)$. Hence, we have:
\begin{align*}
    v_{i}(A^\star_i) &\leq v_{i}(g_{r_i^\star}) + v_{i}(B_{i}) + v_{i}(R)\\
    &\leq v_{i}(g_{r_i^\star}) + v_{i}(B_{i}) + \frac{v_{i}(g_{r_i^\star})}{2^{n+\ell-r_i^\star}}\\
    &\leq \bigl(v_i(g_{r_i^\star}) + v_i(B_i)\bigr) \cdot \bigl(1 + 2^{r_i^\star-n-\ell}\bigr)\\
\end{align*}
Using this inequality to upper bound the optimum, we get:
\begin{align*}
\W^\texttt{\textup{Nash}}(A^\star) &= \prod_{i \in N} v_i(A^\star_i)^{\nicefrac{1}{n}}\\
&\leq \prod_{i \in N} \bigl(v_i(g_{r_i^\star}) + v_i(B_i)\bigr)^{\nicefrac{1}{n}} \cdot \prod_{i \in N} \bigl(1 + 2^{r_i^\star-n-\ell}\bigr)^{\nicefrac{1}{n}}\\
&= W'(\pi^\star) \cdot \prod_{j \in [n]} \bigl(1 + 2^{j-n-\ell}\bigr)^{\nicefrac{1}{n}}\\
&\leq W'(\pi) \cdot \exp\bigg(\frac{1}{n} \cdot \sum_{j=1}^{n} 2^{j-n-\ell}\bigg)\\
&< W'(\pi) \cdot \exp\bigg(\frac{2^{1-\ell}}{n}\bigg)\\
&\leq \W^\texttt{\textup{Nash}}(A') \cdot \exp\bigg(\frac{2^{1-\ell}}{n}\bigg)
\end{align*}

If $n \geq \frac{2}{\varepsilon}$, then even $\ell = 0$ gives  $W^\texttt{\textup{Nash}}(A') > e^{-\varepsilon} \cdot W^\texttt{\textup{Nash}}(A^\star) > (1-\varepsilon) \cdot W^\texttt{\textup{Nash}}(A^\star)$. When $n$ is smaller than $\frac{2}{\varepsilon}$, we can set $\ell = \min(m - n, \lceil\log_2(\frac{2}{\varepsilon})\rceil)$\footnote{Note that, if $\ell = m - n$, our algorithm in fact finds the optimal Nash welfare, since then $v_i(R) = 0$ as $R = \emptyset$.} to get the same guarantee, and the overall runtime will be $(\nicefrac{1}{\varepsilon})^{\O(\log \nicefrac{1}{\varepsilon})} \cdot (n^3 + nm)$.

\subsection{Missing Proofs from Section~\ref{subsec:Doubling_Valuations}}
\label{sec:missing_proofs_doubling}

\GoodIdentification*
\begin{proof}
    We will prove that agent $s$ weakly dominates every agent $t \in N \setminus Z_{f_s}$ such that $f_s \in F_t$ with respect to $f_s$, under the partial allocation $B$. Note that for every such agent $t$, $f_t \neq f_s$. Since the valuations are doubling, we have $v_t(f_t) \geq 2 v_t(f_s)$. Also, by definition, each agent $r$ prefers its favorite good from $B_r$ over every good in $F_r$. Since the valuations are lexicographic (each doubling valuation is also lexicographic), $v_r(F_r \setminus \{f_r\}) < v_r(B_r)$ for all agents $r \in N$. Thus, we get:

    \begin{align*}
        \Delta_s(v_s(B_s \cup F_s \setminus \{f_s\}), v_s(f_s)) &> \Delta_s(2v_s(B_s), v_s(f_s))\\
        &\geq \Delta_t(2v_t(B_t), v_t(f_t))\\
        &\geq \Delta_t(2v_t(B_t), 2v_t(f_s))\\
        &\geq \Delta_t(v_t(B_t), v_t(f_s))\\
    \end{align*}
    where the first inequality follows from $v_s(F_s \setminus \{f_s\}) < v_s(B_s)$ and the strict concavity of $h_s$, the second inequality follows from the definition of $s$, the third inequality follows from the doubling property and the fact that $h_t$ is increasing, and the last inequality follows from the requirement $\Delta_t(2x, 2y) \geq \Delta_t(x, y)$ for all $x > 0$.

    Hence, every agent outside $Z_{f_s}$ that can be allocated $f_s$ is weakly dominated by agent $s$ with respect to $f_s$. Hence, in every optimal restricted completion of $B$, good $f_s$ is allocated to some agent in $Z_{f_s}$.
\end{proof}

\WeakDominationLemma*
\begin{proof}
    Suppose neither of the two domination conditions hold. Let $T$ be the set of all unallocated goods in $A$ except $e$ and let $T'$ be the set of all unallocated goods in $A'$ except $e$. Let $F_i$ denote the set of unallocated goods in $B$ that are allocable to agent $i$ and $F_k$ denote the set of unallocated goods in $B$ that are allocable to $k$. The set of goods except $e$ allocable to agent $k$ for the partial allocation $A$ is precisely $T \cap F_k\setminus \{g\}$. Similarly, the set of goods except $e$ allocable to agent $i$ for the partial allocation $A'$ is precisely $T' \cap F_i\setminus \{g\}$. The negation of the domination condition~(\Cref{eqn:weak_domination_2}) thus the following two inequalities:

    \begin{equation}
    \Delta_i(v_i(A_i), v_i(e)) \geq \Delta_k(v_k(A_k \cup (T \cap F_k\setminus \{g\})), v_k(e)),
        \label{eqn:helper_w1}
    \end{equation}

    \begin{equation}
        \Delta_k(v_k(A'_k), v_k(e)) \geq \Delta_i(v_i(A'_i \cup (T' \cap F_i\setminus \{g\})), v_i(e));
        \label{eqn:helper_w2}
    \end{equation}

As the instance is lexicographic and $g$ is the most valued good in $F_i$ that is allocable to $i$,
\begin{equation}\label{eqn:wmidi}
    v_i(A_i) \geq v_i(B_i \cup \{g\}) > v_i(B_i \cup F_i \setminus \{g\}) \geq v_i(A'_i \cup (T' \cap F_i \setminus \{g\}))
\end{equation}

where the third inequality holds since $A'_i \setminus B_i$ and $T' \cap F_i \setminus \{g\}$ are both subsets of $F_i \setminus \{g\}$. This is because on the path from $B$ to $A'$ in the recursion tree, only goods in $F_i$ can be allocated to agent $i$. Similarly, we have that
\begin{equation}\label{eqn:wmidk}
    v_k(A'_k) \geq v_k(B_k \cup \{g\}) > v_k(B_k \cup F_k \setminus \{g\}) \geq v_k(A_k \cup (T \cap F_k \setminus \{g\})).
\end{equation}

Hence, applying the inequalities in \Cref{eqn:wmidi} and \Cref{eqn:wmidk} to the strict concavity of $h_i$ and $h_k$ in \Cref{eqn:helper_w1} and \Cref{eqn:helper_w2} respectively, we get the following two inequalities:

\begin{equation}
    \Delta_i(v_i(A'_i \cup (T' \cap F_i\setminus \{g\})), v_i(e)) > \Delta_i(v_i(A_i), v_i(e)) \geq \Delta_k(v_k(A_k \cup (T \cap F_k\setminus \{g\})), v_k(e)),
\end{equation}

\begin{equation}
    \Delta_k(v_k(A_k \cup (T \cap F_k\setminus \{g\})), v_k(e)) > \Delta_k(v_k(A'_k), v_k(e)) \geq \Delta_i(v_i(A'_i \cup (T' \cap F_i\setminus \{g\})), v_i(e))
\end{equation}

a contradiction.
\end{proof}

\LeafBoundWeak*
\begin{proof}
    We proceed exactly as in the proof of \Cref{lem:leaf_bound}. The base case is again when the node $B$ is a leaf node. For the inductive step, let us look at the row corresponding to the good $g$ that is being allocated at the current node. Define $M'(A)$ to be the matrix $M(A)$ with the row corresponding to good $g$ removed. Let us similarly define $S$ to be the set of agents that are not pruned away by the weak domination condition. For each agent $i \in S$, let $C_i$ denote the child node corresponding to the partial allocation where $g$ was allocated to $i$. Same as in the proof of \Cref{lem:leaf_bound}, we have that for all agents $r$,

    \begin{equation}\label{eq:weak_monotone-clean}
    M(C_i)^r \le M'(A)^r \qquad \text{(entry-wise)}.
    \end{equation}

    Then, equations~\ref{eq:child-bound-factorized} and~\ref{eq:reduce-to-sum-clean} from~\Cref{lem:leaf_bound} hold and we again get:

    \[\text{leaves}(A) \le \Biggl(\prod_{1\le r<s\le n}\bigl(1+M(A)^r\cdot M(A)^s\bigr)\Biggr) \cdot \sum_{i\in S}\ \prod_{k\in S\setminus\{i\}} h_{i,k}.\]

    where

    \[h_{i,k} = \frac{1+M(C_i)^i\cdot M'(A)^k}{1+M(A)^i\cdot M(A)^k} \qquad (i\in S, k\in N\setminus\{i\})\]

    Next, in~\Cref{lem:leaf_bound}, we showed that $h_{i,k} + h_{k,i} \le 1$ for all $i \neq k \in S$. The only additional property (that was implied by~\Cref{cor:mutual_exclusion_siblings}) used in the proof of this key inequality was that for every $r > j + 1$, $u_r = 0$ or $v_r = 0$, or equivalently 

    \begin{equation}\label{eq:zero_dot_product}
        M(C_i)^i \cdot M(C_k)^k = 0.
    \end{equation}

    Note that~\Cref{cor:weak_mutual_exclusion_siblings} implies the exact same property in our case as well. Hence, the rest of the proof of \Cref{lem:leaf_bound} goes through with the only modification that $g_{j+1}$ (the good that was being allocated in the proof of~\Cref{lem:leaf_bound}) is replaced by $g$ (the good that is being allocated at the current node now).

    Once we have that $h_{i,k} + h_{k,i} \le 1$, we can use the exact same probabilistic argument as in the proof of \Cref{lem:leaf_bound} to finish the proof.
\end{proof}

\section{Missing Proofs from Section~\ref{sec:Hardness_Results}}
\label{sec:hardness_appendix}

\TransferLemma*
\begin{proof}
    Consider the partial allocation $A' = (A_1, \ldots, A_i \setminus \{g\}, \ldots, A_k, \ldots, A_n)$, where good $g$ is unallocated. The set $Y$ of unallocated goods in $A'$ other than $g$ is empty. Also, $v_k(A_k) > 0$ since each agent has a positive utility. The given inequality implies $v_i(A'_i) = v_i(A_i\setminus \{g\})$ is also positive. Maximizing the nash welfare is equivalent to maximizing the sum of log of utilities. Using $h_i(x) = \log x$, we get $\Delta_i(x, y) = \log(1 + y/x)$. Thus, we have:

    \begin{align*}
            \Delta_i(v_i(A'_i), v_i(g)) &= \log\left(1 + \frac{v_i(g)}{v_i(A_i\setminus \{g\})}\right)\\
            &< \log\left(1 + \frac{v_k(g)}{v_k(A_k)}\right)\\
            &= \Delta_k(v_k(A_k), v_k(g))\\
            &= \Delta_k(v_k(A'_k \cup Y), v_k(g))
    \end{align*}
    where the final equality uses $Y = \emptyset$ and $A'_k = A_k$. Thus, agent $k$ dominates agent $i$ with respect to good $g$ under the partial allocation $A'$. Hence, by~\Cref{lem:domination}, $A$ cannot be optimal.
\end{proof}

\OrderedLexReduction*
\begin{proof}
    For each agent \(t\in T\), define \(i(t)\), \(j(t)\), and \(k(t)\) so that \(t=(x_{i(t)},y_{j(t)},z_{k(t)})\);
    thus the signature goods of \(t\) are \(x_{i(t)},y_{j(t)},z_{k(t)}\).

    \smallskip
    \noindent(\(\Rightarrow\)) Suppose the \textsc{Perfect 3D Matching} instance has a perfect matching \(T'\subseteq T\) of size \(n\).
    Define an allocation \(A\) as follows: assign the \(|T|-n\) dummy goods \(D=\{d_1,\dots,d_{|T|-n}\}\) bijectively to the \(|T|-n\) agents in \(T\setminus T'\), and for each \(t\in T'\) give \(t\) the three signature goods \(\{x_{i(t)},y_{j(t)},z_{k(t)}\}\).
    This is feasible since \(T'\) covers each element of \(X\cup Y\cup Z\) exactly once.

    Agents in \(T\setminus T'\) contribute Nash-product factor \(\prod_{r=1}^{|T|-n} B^{10n+r}\).
    For \(t\in T'\), by construction
    \[
    \begin{aligned}
    v_t(\{x_{i(t)},y_{j(t)},z_{k(t)}\})
    &= v_t(x_{i(t)}) + v_t(y_{j(t)}) + v_t(z_{k(t)}) \\
    &= (v_t(x_{i(t)}) + v_t(y_{j(t)}))(1 + C\cdot B^{-20n+k(t)}) \\
    &= v_t(x_{i(t)})\Bigl(1 + C\cdot B^{-10n+j(t)}\Bigr)\Bigl(1 + C\cdot B^{-20n+k(t)}\Bigr) \\
    &= C\cdot B^{i(t)} \Bigl(1 + C\cdot B^{-10n+j(t)}\Bigr)\Bigl(1 + C\cdot B^{-20n+k(t)}\Bigr),
    \end{aligned}
    \]

    Since \(T'\) is a perfect matching, the indices \(i(t)\), \(j(t)\), and \(k(t)\) each range over \([n]\) exactly once (as \(t\) ranges over \(T'\)), hence the contribution of \(T'\) to the Nash welfare equals
    \[
    \left[
    \left(\prod_{r=1}^{n} C\cdot B^r\right)
    \left(\prod_{r=1}^{n} \bigl(1+C\cdot B^{-10n+r}\bigr)\right)
    \left(\prod_{r=1}^{n} \bigl(1+C\cdot B^{-20n+r}\bigr)\right)
    \right]^{1/n}
    \]
    Multiplying with the dummy-good factor gives a Nash welfare of exactly \(\theta\).
    
    \smallskip
    
    \noindent(\(\Leftarrow\)) Now, suppose that there exists an allocation \(B\) with Nash welfare at least \(\theta\). Then, the optimal allocation, say $A$, also has Nash welfare at least \(\theta\). We begin with some crucial observations about the structure of \(A\).

    \begin{claim}
        Any agent who receives a dummy good in $A$ must not receive any other good.
    \end{claim}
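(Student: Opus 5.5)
We argue by contradiction using the Transfer Lemma (\Cref{lem:transfer_lemma}). Every good is valued positively by every agent, and $A$ has Nash welfare at least $\theta>0$, so all utilities in $A$ are positive. Suppose some agent $t$ receives a dummy good $d_r$ together with at least one other good, and let $g$ be the least valuable good in $A_t$ according to the common order $\pi$. Then $A_t\setminus\{g\}$ still contains $d_r$, so $v_t(A_t\setminus\{g\})\ge B^{10n+r}\ge B^{10n+1}$.

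The first step is to find an agent $k$ whose utility is tiny compared to this. The $|T|-n$ dummy goods are held by at most $|T|-n$ agents, and agent $t$ already holds at least one dummy together with another good. By pigeonhole, at least $n$ agents hold no dummy good. Suppose first that every dummy-free agent had utility at least $B^{n+1}$. Each such agent $k$ has value $v_k(x_r)\le C B^r\le CB^n$ for any single $X$-good, and every $Y$- or $Z$-good is worth less than any $X$-good. So its bundle value is below $v_k$ of its top good times a factor $1+o(1)$, which is less than $B^{n+1}$. Hence each dummy-free agent's utility is at most about $CB^n(1+1/(B/C-1))<B^{n+1}$, using $B=10$ and $C=3$. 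So we may simply take $k$ to be any dummy-free agent; then $v_k(A_k)\le 2CB^n$.

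Next we compare ratios. If $g$ is itself a dummy $d_s$, then $v_t(g)/v_k(g)=1$ because all agents value dummies identically. In that case $v_t(A_t\setminus\{g\})/v_k(A_k)\ge B^{10n+1}/(2CB^n)>1$, and the Transfer Lemma shows $A$ is not optimal. If $g$ is a non-dummy good, we need to bound $v_t(g)/v_k(g)$ from above. The main obstacle is that $Y$- and $Z$-values are defined relative to each agent's signature $x$-value, so the ratio of two agents' values for the same good can be as large as roughly $C\cdot B^{n}\cdot C\cdot(\text{similar factors})$. Concretely, for a $Z$-good the ratio is at most $C\cdot\frac{v_t(\{x_{i(t)},y_{j(t)}\})}{v_{k}(\{x_{i(k)},y_{j(k)}\})}\le C\cdot 2CB^{n}\cdot(1+o(1))\le B^{3n}$, say. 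The same bound holds for $Y$- and $X$-goods.

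Since $B^{10n+1}/(2CB^n)\ge B^{8n}>B^{3n}$, the hypothesis of the Transfer Lemma holds, and $A$ is not Nash optimal, a contradiction. It remains to verify these crude bounds carefully; they are routine once the ratio bound for non-dummy goods is written out via the explicit formulas for $v_t(y_r)$ and $v_t(z_r)$.
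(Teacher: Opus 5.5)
Your proof is correct and follows the paper's argument. It finds a dummy-free agent $k$ by counting and bounds its utility by $2CB^n$. It bounds the cross-agent ratio $v_t(g)/v_k(g)$ for any non-dummy good by a small power of $B^n$ (the paper uses $B^{2n}$, you use $B^{3n}$). It then applies the Transfer Lemma using $v_t(A_t\setminus\{g\})\ge B^{10n+1}$. The differences are cosmetic: you handle the dummy-$g$ case through the Transfer Lemma with ratio $1$, where the paper compares factors directly.

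One small fix: choose $d_r$ to be the most valuable dummy in $A_t$. Otherwise $A_t\setminus\{g\}$ need not contain $d_r$ itself, though it always contains some dummy, so your bound $B^{10n+1}$ still holds.
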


    \begin{proof}
        Consider an agent $t$ who receives a dummy good. Let $d$ be the most valuable dummy good assigned to agent $t$. For the sake of contradiction, assume that it also receives another good $g$. Since there are only $|T|-n$ dummy goods and $|T|$ agents, there must exist an agent $t'$ who does not receive any dummy good. If $g$ is also a dummy good, then transferring $g$ from $t$ to $t'$ reduces the utility of $t$ by at most a factor of $1/2$ and increases the utility of $t'$ by at least a factor of $B^{8n}$. Hence, the Nash product strictly increases, contradicting the optimality of $A$. Thus $g \in X \cup Y \cup Z$.

        We first record a uniform bound on relative values of non-dummy goods: for any good $g \in X\cup Y\cup Z$ and any two agents $p,q \in T$,
        \begin{equation}
            \frac{v_p(g)}{v_q(g)} \le B^{2n}.
            \label{eq:ratio_bound_nondummy}
        \end{equation}
        Indeed, if $g\in X$, then $v_u(g)\in\{B^r,\,C\cdot B^r\}$ for some $r\in[n]$, so
        $\frac{v_p(g)}{v_q(g)}\le C \le B \le B^{2n}$.
        If $g=y_s\in Y$, then for any agent $u$ we have
        $v_u(y_s)\in\{v_u(x_{i(u)})B^{-10n+s},\,v_u(x_{i(u)})C\,B^{-10n+s}\}$, and thus
        \[
            \frac{v_p(y_s)}{v_q(y_s)}
            \le
            C\cdot \frac{v_p(x_{i(p)})}{v_q(x_{i(q)})}
            \le
            C\cdot B^{n-1}
            \le
            B^{2n},
        \]
        using $C=3$ and $B = 10$. If $g=z_s\in Z$, its value is defined multiplicatively from $v_u(\{x_{i(u)},y_{j(u)}\})$ times a factor $B^{-20n+s}$ (and possibly an extra $C$ for the signature), so the same calculation as for $Y$ (together with $v_u(x_{i(u)}) \le v_u(\{x_{i(u)},y_{j(u)}\}) \le v_u(x_{i(u)})(1+CB^{-9n}))$) yields \eqref{eq:ratio_bound_nondummy}.
        
        \smallskip
        
        Now, since $d \in A_t \setminus \{g\}$, we have $v_t(A_t \setminus \{g\}) \ge v_t(d) \ge B^{10n+1}$. Since $t'$ has no dummy goods, its utility $v_{t'}(A_{t'})$ is at most the sum of values of all goods in $X \cup Y \cup Z$, which is less than $2 \cdot C \cdot B^n$. Hence, 

        $$
        \dfrac{v_t(A_t)}{v_{t'}(A_{t'})} \geq \dfrac{v_t(A_t \setminus \{g\})}{v_{t'}(A_{t'})} \geq \dfrac{B^{10n+1}}{2 \cdot C \cdot B^n} = \dfrac{B^{9n+1}}{2C} > B^{2n} \geq \dfrac{v_t(g)}{v_{t'}(g)}
        $$

        Hence, the condition of the transfer lemma(~\Cref{lem:transfer_lemma}) is satisfied, contradicting the optimality of $A$.
    \end{proof}

    From the above claim, the set of agents who receive a dummy good contribute exactly $(\prod_{r=1}^{|T|-n} B^{10n + r})^{\nicefrac{1}{|T|}}$ to the Nash welfare. This is precisely the first part of our threshold $\theta$. Let $R$ be the set of agents who do not receive any dummy good in $A$. Note that $|R| = |T| - (|T| - n) = n$. We will prove that $R$ must form a perfect 3D matching. First, we argue that each agent in $R$ must receive at most one (and hence exactly one, since $|X| = |R| = n$) good from $X$.

    \begin{claim} \label{clm:at_most_one_x}
        Each agent in \(R\) receives exactly one good from \(X\).
    \end{claim}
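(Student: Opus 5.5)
The plan is to argue by contradiction using the transfer lemma (\Cref{lem:transfer_lemma}) together with a counting argument. Since $|X| = |R| = n$ and no agent outside $R$ holds any good of $X$ (by the preceding claim, dummy holders receive nothing else), it suffices to show that no agent in $R$ receives two or more goods of $X$. Indeed, if every agent of $R$ receives at most one good of $X$, then counting forces exactly one each.

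So suppose some $t \in R$ holds two goods $x_a, x_b \in X$ with $a > b$. By counting, some $t' \in R$ holds no good of $X$. Then $v_{t'}(A_{t'})$ is at most the total value of $Y \cup Z$ for $t'$, which is at most $v_{t'}(x_{i(t')})\cdot C\cdot B^{-9n}\cdot(1+o(1))$. This is at most roughly $C^2 B^{n} B^{-9n} = C^2 B^{-8n}$, since $v_{t'}(x_{i(t')}) \le C B^n$ and the $Y$ values are bounded by $v(x_{i})\,C\,B^{-10n+s}$ with $s\le n$, while the $Z$ values are smaller still.

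On the other side, take $g = x_b$. Then $v_t(A_t\setminus\{g\}) \ge v_t(x_a) \ge B$. Hence
\[
\frac{v_t(A_t\setminus\{g\})}{v_{t'}(A_{t'})} \ge \frac{B^{8n+1}}{C^2}\cdot(1-o(1)).
\]
Meanwhile $v_t(x_b)/v_{t'}(x_b) \le C$, because values of $X$-goods are agent-independent up to the $C$ boost. Since $B^{8n+1}/C^2 \gg C$, the hypothesis of \Cref{lem:transfer_lemma} holds, so $A$ is not Nash optimal, a contradiction. We also need all utilities positive; this holds because $A$ attains welfare at least $\theta > 0$.

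The main obstacle is purely bookkeeping: giving a clean upper bound on the utility of an $R$-agent with no $X$-good. In particular, $v_{t'}(y_s)$ is defined relative to $v_{t'}(x_{i(t')})$, which is a good that $t'$ does not hold. I would bound $v_{t'}(Y\cup Z) \le 2C\cdot v_{t'}(x_{i(t')})\,B^{-9n} \le 2C^2 B^{-8n}$ using geometric sums with ratio $B = 10$. After that, the comparison of exponents is immediate.
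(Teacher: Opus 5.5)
Your proof is correct and follows the paper's argument: move the lower-valued $X$-good from an agent of $R$ holding two $X$-goods to an agent of $R$ holding none, whose utility is below $2C^2B^{-8n}$. The paper verifies the product increase directly (the donor loses at most a factor of $2$, the recipient gains a factor of at least $B^{7n}$) rather than through \Cref{lem:transfer_lemma}, but this is the same computation.
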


    \begin{proof}
        Say for the sake of contradiction that some agent $t$ receives at least two goods (say $x_{r_1}, x_{r_2}$) from $X$. There must exist some agent who receives no good from $X$. Without loss of generality, assume $r_1 < r_2$. Then, transferring $x_{r_1}$ to $t'$ decreases the utility of $t$ by at most a factor of $1/2$ and increases the utility of $t'$ by a factor of at least $B^{7n}$ (each value in $X$ is at least $B$, and the sum of all the values in $Y\cup Z$ is at most twice the value of $y_n$, which is at most $CB^{n} \cdot B^{-9n} \cdot C = C^2B^{-8n}$). Hence, the Nash product strictly increases, contradicting the optimality of $A$.
    \end{proof}

    Next, we argue that each agent in $R$ must receive its signature good from $X$.

    \begin{claim}
        Each agent in \(R\) receives its signature good from \(X\).
    \end{claim}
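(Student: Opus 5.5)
The plan is to argue by contradiction using only the threshold $\theta$ and the structure already established. By the previous claims, every dummy-receiving agent holds exactly one dummy good and nothing else. Since the dummy goods go to distinct agents, the dummy agents contribute exactly $\prod_{r=1}^{|T|-n}B^{10n+r}$ to the Nash product, which is the first factor of $\theta^{|T|}$. Hence $\W^{\texttt{Nash}}(A)\ge\theta$ forces
\[
\prod_{t\in R} v_t(A_t)\;\ge\;\Bigl(\prod_{r=1}^n C\cdot B^r\Bigr)\prod_{r=1}^n\bigl(1+C B^{-10n+r}\bigr)\bigl(1+C B^{-20n+r}\bigr)\;\ge\; C^n B^{n(n+1)/2}.
\]
By \Cref{clm:at_most_one_x}, each $t\in R$ holds exactly one good $x_{\pi(t)}\in X$, and $\pi$ is a bijection from $R$ to $[n]$. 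So I will upper bound the left-hand side and compare the two bounds.

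For the upper bound, I write $v_t(A_t)\le v_t(x_{\pi(t)})+v_t(Y\cup Z)$. The lexicographic (indeed doubling) structure gives $v_t(Y\cup Z)\le 2v_t(y_n)\le 2C\,v_t(x_{i(t)})B^{-9n}\le 2C^2B^{-8n}$. Every $X$-value is at least $B$, so
\[
v_t(A_t)\le v_t(x_{\pi(t)})\bigl(1+2C^2B^{-8n}\bigr).
\]
Moreover, $v_t(x_{\pi(t)})=C\cdot B^{\pi(t)}$ if $\pi(t)=i(t)$, and $v_t(x_{\pi(t)})=B^{\pi(t)}$ otherwise.

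Now suppose some agent in $R$ does not receive its signature $X$-good. Since $\pi$ is a bijection, $\prod_{t\in R}B^{\pi(t)}=B^{n(n+1)/2}$, and at most $n-1$ agents collect the factor $C$. Therefore
\[
\prod_{t\in R}v_t(A_t)\le C^{n-1}B^{n(n+1)/2}\bigl(1+2C^2B^{-8n}\bigr)^n .
\]
Combining this with the lower bound yields $C\le(1+2C^2B^{-8n})^n\le\exp(2nC^2B^{-8n})<3=C$, a contradiction. The last step uses $C=3$ and $B=10$, so that $18n\cdot 10^{-8n}$ is tiny.

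The main obstacle is making the bound on $v_t(Y\cup Z)$ relative to the received $X$-good uniform across agents. The $Y$- and $Z$-values of agent $t$ are scaled by $v_t(x_{i(t)})$, its signature $X$-value, and not by the $X$-good it actually holds. I handle this by the crude absolute bound $v_t(x_{i(t)})\le C B^n$ against the lower bound $B$ on any $X$-value. The gap $B^{-9n}$ built into the $Y$- and $Z$-layers absorbs the resulting factor $B^n$.
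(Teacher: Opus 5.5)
Your proof is correct and follows essentially the same route as the paper. Both arguments bound $v_t(Y\cup Z)$ by a tiny multiple of the $X$-value the agent actually holds: you use the absolute bound $2C^2B^{-8n}$ together with $v_t(x_{\pi(t)})\ge B$, while the paper uses $B^{-7n}v_t(x_1)$. Both then note that a non-signature assignment loses a factor of $C$ in the product of $X$-values, which the $(1+\text{tiny})^n$ slack cannot recover against the threshold $\theta$.
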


    \begin{proof}
       Note that for each agent $t \in R$,
       \[
       v_t(Y \cup Z) < 2 \cdot v_t(y_n) \leq 2 v_t(x_{i(t)}) \cdot CB^{-9n}
       \leq 2 (C v_t(x_1) B^{n-1}) C B^{-9n} \leq B^{-7n} \cdot v_t(x_1).
       \]

       From \Cref{clm:at_most_one_x}, each agent in $R$ receives exactly one good from $X$. Say each agent $t \in R$ receives the good $x_{s(t)}$. Then,
       \[
       v_t(A_t) \leq v_t(x_{s(t)})\left(1 + \frac{v_t(Y \cup Z)}{v_t(x_{s(t)})}\right) \leq v_t(x_{s(t)})\cdot (1 + B^{-7n}).
       \]

       Now, suppose that there is at least one agent $\ell \in R$ who does not receive its signature good from $X$, i.e., $s(\ell) \neq i(\ell)$. Recall that $v_t(x_r) = B^r$ if $r \neq i(t)$, else $CB^r$. Then, using the above inequality and the fact that $s$ is a permutation of $[n]$, we get that
       \begin{equation*}
       \begin{aligned}
            \prod_{t\in R} v_t(A_t) &\leq \left(\prod_{t \in R} v_t(x_{s(t)})\right) \cdot (1+B^{-7n})^n\\
            &\leq B^{s(\ell)} \cdot \left(\prod_{t \in R\setminus\ell} CB^{s(t)}\right)\cdot (1 + B^{-7n})^n\\
            &= \frac{1}{C} \cdot \left(\prod_{r=1}^{n} C\cdot B^{r}\right)\cdot \left(1+B^{-7n}\right)^{n}\\
            &< \left(\prod_{r=1}^{n} C\cdot B^{r}\right),
       \end{aligned}
       \end{equation*}
        where the last inequality follows from $(1 + B^{-7n})^n < 1 + 2^n \cdot B^{-7n} < 2$, and $C = 3$. However, this implies that the Nash product of all agents is strictly less than $\theta^{|T|}$, a contradiction.
    \end{proof}

    Now, we argue that each agent in $R$ must receive at most one (and hence exactly one) good from $Y$.

    \begin{claim} \label{clm:at_most_one_y}
        Each agent in \(R\) receives exactly one good from \(Y\).
    \end{claim}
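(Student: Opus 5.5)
We will argue exactly as in the proof of \Cref{clm:at_most_one_x}, by contradiction via a single transfer. Suppose some agent $t\in R$ receives two goods $y_{r_1},y_{r_2}$ with $r_1<r_2$ from $Y$. Since $|Y|=|R|=n$, some agent $t'\in R$ receives no good from $Y$. We will show that moving $y_{r_1}$ from $t$ to $t'$ strictly increases the Nash product, contradicting optimality of $A$. Cleanest is to invoke the transfer lemma (\Cref{lem:transfer_lemma}), so the target inequality is
\[
\frac{v_t(A_t\setminus\{y_{r_1}\})}{v_{t'}(A_{t'})} > \frac{v_t(y_{r_1})}{v_{t'}(y_{r_1})}.
\]

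Key steps, in order. First, by the previous claim both $t$ and $t'$ hold their signature $x$-goods, so $v_t(A_t\setminus\{y_{r_1}\})\ge v_t(x_{i(t)}) + v_t(y_{r_2})$ and $v_{t'}(A_{t'}) \le v_{t'}(x_{i(t')}) + v_{t'}(Z)$. Second, we use the "relative baseline" structure: $v_t(y_{r_1})\le C\,B^{-10n+r_1}\,v_t(x_{i(t)})$ and $v_{t'}(y_{r_1})\ge B^{-10n+r_1}\,v_{t'}(x_{i(t')})$. Hence the right-hand side is at most $C\,v_t(x_{i(t)})/v_{t'}(x_{i(t')})$.

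Third, for the left-hand side we use $v_t(y_{r_2})\ge B^{-10n+r_2}v_t(x_{i(t)})$ and $v_{t'}(Z)\le 2v_{t'}(z_n)\le 2C\,B^{-19n}(1+CB^{-9n})\,v_{t'}(x_{i(t')})$. The left-hand side is then at least
\[
\frac{v_t(x_{i(t)})\,(1+B^{-10n+r_2})}{v_{t'}(x_{i(t')})\,(1+2C^2B^{-19n})}.
\]
So the $x$-baselines cancel, and we need $(1+B^{-10n+r_2})/(1+2C^2B^{-19n})>C$. This fails, since the gain $1+B^{-10n+r_2}$ is nearly $1$ while $C=3$.

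So the transfer-lemma route with these crude bounds is too weak; the $C$-boost on a signature $y$ can exceed the tiny marginal. I will instead compare the Nash product directly, multiplicatively. Removing $y_{r_1}$ from $t$ multiplies $t$'s utility by at least $1/(1+CB^{-10n+r_1})$. Giving it to $t'$ multiplies $t'$'s utility by at least $1+B^{-10n+r_1}/(1+2C^2B^{-19n})$. That is not enough either when $y_{r_1}$ is $t$'s signature.

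The real argument must therefore be a counting one, mirroring the signature-$x$ claim. Suppose some agent holds $\ge2$ goods of $Y$. Then the multiset of $Y$-gains across $R$ loses, relative to the threshold factor $\prod_r(1+CB^{-10n+r})$, the entire factor contributed by some index $r$ whose good is not "alone". Concretely, I will bound $\prod_{t\in R}v_t(A_t)$ by $\prod_t v_t(x_{i(t)})$ times $\prod_t\bigl(1+\text{(Y-gain of }t)\bigr)(1+B^{-8n})$. An agent with two $y$'s has gain at most $C(B^{-10n+r_1}+B^{-10n+r_2})$. Some other agent then has zero $Y$-gain, so it loses a factor $1+B^{-10n+s}$ for some $s\ge1$, namely at least $1+B^{-10n+1}$. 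Meanwhile $1+a+b\le(1+a)(1+b)$ gives no compensation, and the $Z$ contribution is at most $1+O(B^{-19n})$. Hence the product falls below $\theta^{|T|}$ after dividing out the dummy part. The main obstacle is this step: making the error terms from $Z$ (of order $B^{-19n}$) provably smaller than the loss $B^{-10n+1}$, while carefully handling the fact that the threshold's $Y$-factor uses the $C$-boost for every index.
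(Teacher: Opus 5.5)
Your proof has a genuine gap, and it comes from one missed observation. Every agent has exactly one signature $Y$-good, so of the two goods $y_{r_1},y_{r_2}$ held by $t$, at least one, say $y_s$, is \emph{not} $t$'s signature good. That is the good you should move.

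With this choice, $v_t(y_s)=v_t(x_{i(t)})B^{-10n+s}$ exactly, while $v_{t'}(y_s)\ge v_{t'}(x_{i(t')})B^{-10n+s}$. So the right-hand side of the transfer-lemma inequality is at most $v_t(x_{i(t)})/v_{t'}(x_{i(t')})$, with no factor $C$. Your own estimates show the left-hand side is at least this ratio times $(1+B^{-10n})/(1+B^{-18n})>1$. This holds because the $Y$-good that $t$ keeps is worth at least $B^{-10n}v_t(x_{i(t)})$, and $v_{t'}(Z)<B^{-18n}v_{t'}(x_{i(t')})$. Hence \Cref{lem:transfer_lemma} applies.

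This is exactly the paper's proof. Your first route failed only because you let the transferred good carry the $C$-boost. When you noticed that the multiplicative comparison breaks if $y_{r_1}$ is the signature good, the fix was simply to move $y_{r_2}$ instead.

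The counting argument you fall back on does not repair this, because its loss accounting is wrong. The threshold factor $\prod_{r=1}^n(1+CB^{-10n+r})$ is indexed by goods, not agents. So the agent $t'$ with no $Y$-good does not by itself forfeit a factor $1+B^{-10n+s}$; its deficit is offset by the extra gain $a+b$ collected by $t$. If both of $t$'s goods could be $C$-boosted, the $Y$-part would fall short of the threshold only by the factor $\frac{1+a+b}{(1+a)(1+b)}=1-\Theta(ab)$, where $ab=\Theta(B^{-20n+r_1+r_2})$. When $r_1+r_2$ is small and $n$ is large, this loss is far below the $Z$-error $1+O(B^{-19n})$ you allow.

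The first-order loss you need exists only because one of $t$'s goods is non-signature. That good contributes $1+B^{-10n+s}$ instead of $1+CB^{-10n+s}$, a factor below $1-B^{-11n}$, and this does beat $(1+B^{-18n})^n<1+B^{-17n}$. The step you flagged as the main obstacle is precisely where the missing idea belongs. With it inserted, the threshold-based route also works, but the transfer argument is shorter.
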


    \begin{proof}
        Suppose not. Since $|R| = |Y| = n$, there exists some agent $t\in R$ who receives two distinct goods $y_{s_1},y_{s_2}\in Y$, and an agent $t' \in R$ who receives no good from $Y$.
        Since $t$ has two $Y$-goods, at least one of them is \emph{not} signature for $t$; without loss of generality, let $y_{s_1}$ be such a non-signature good in $A_t$ (so $s\neq j(t)$). Consider transferring the good $y_{s_1}$ from $t$ to $t'$ while keeping everything else fixed. Since $y_{s_1}$ is not the signature $Y$-good for $t$, we have $v_t(y_{s_1}) = v_t(x_{i(t)}) \cdot B^{-10n + s_1}$. Also, $v_{t'}(y_{s_1}) \ge v_{t'}(x_{i(t')}) \cdot B^{-10n + s_1}$. Hence, we have:

        \[
            \frac{v_t(y_{s_1})}{v_{t'}(y_{s_1})}
            \le
            \frac{v_t(x_{i(t)})\cdot B^{-10n+s_1}}{v_{t'}(x_{i(t')})\cdot B^{-10n+s_1}}
            =
            \frac{v_t(x_{i(t)})}{v_{t'}(x_{i(t')})}.
        \]
        Since $t$ receives two $Y$-goods, we have:
        \[
            v_t(A_t\setminus\{y_{s_1}\})
            \ge
            v_t(x_{i(t)})+v_t(y_{s_2})
            \ge
            v_t(x_{i(t)})\cdot\bigl(1+B^{-10n}\bigr),
        \]
        where we used $v_t(y_{s_2})\ge v_t(x_{i(t)})\cdot B^{-10n+s_2}\ge v_t(x_{i(t)})\cdot B^{-10n}$.
        For $t'$, since it receives no $Y$-good,
        \[
            v_{t'}(A_{t'})
            \le
            v_{t'}(x_{i(t')}) + v_{t'}(Z)
        \]
        
        Now, since the valuations are lexicographic and $B=10,C=3$, 
        \begin{equation}
        v_{t'}(Z) < 2v_{t'}(z_n) < 2C\,B^{-19n}\cdot v_{t'}(\{x_{i(t')},y_{j(t')}\})< B^{-18n}\cdot v_{t'}(x_{i(t')})
        \label{eq:Z_is_small}
        \end{equation}

        Thus, $v_{t'}(A_{t'}) \leq v_{t'}(x_{i(t')}) \cdot (1 + B^{-18n})$, and we get
        
        \[
            \frac{v_t(A_t \setminus y_{s_1})}{v_{t'}(A_{t'})} \geq \frac{v_t(x_{i(t)})}{v_{t'}(x_{i(t')})} \, \frac{1+B^{-10n}}{1+B^{-18n}} > \frac{v_t(x_{i(t)})}{v_{t'}(x_{i(t')})} \geq \frac{v_t(y_{s_1})}{v_{t'}(y_{s_1})}.
        \]
        Hence, the condition of the transfer lemma (\Cref{lem:transfer_lemma}) holds, contradicting the optimality of $A$.
    \end{proof}

    Next, we argue that each agent in $R$ must receive its signature good from $Y$.
    \begin{claim}
        Each agent in \(R\) receives its signature good from \(Y\).
    \end{claim}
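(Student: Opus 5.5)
We mirror the argument for the signature $X$-goods, one layer down. By the preceding claims, every agent $t\in R$ holds exactly $x_{i(t)}$ and exactly one $Y$-good, say $y_{s(t)}$. Since $|R|=|Y|=n$, the map $s$ is a permutation of $[n]$. We then bound $\prod_{t\in R} v_t(A_t)$ from above and compare it with $\theta^{|T|}$ divided by the dummy factor.

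\emph{Upper bound on each utility.} For $t\in R$, the $Z$-goods contribute little relative to $x$ and $y$. By \eqref{eq:Z_is_small}, $v_t(Z) < B^{-18n}\, v_t(x_{i(t)})$. Hence
\[
v_t(A_t)\le v_t(x_{i(t)})\Bigl(1+\tfrac{v_t(y_{s(t)})}{v_t(x_{i(t)})}+B^{-18n}\Bigr).
\]
Here $v_t(y_{s(t)})/v_t(x_{i(t)})$ equals $C B^{-10n+s(t)}$ if $s(t)=j(t)$, and $B^{-10n+s(t)}$ otherwise.

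\emph{Comparing with the threshold.} We already know $\prod_{t\in R} v_t(x_{i(t)})=\prod_{r=1}^n C B^r$. Since $s$ is a permutation, a perfect assignment of signature $Y$-goods would give the factor $\prod_{r}(1+CB^{-10n+r})$, which is exactly the $Y$-part of $\theta$. Suppose some $\ell\in R$ has $s(\ell)\neq j(\ell)$. Then the factor for $\ell$ becomes $1+B^{-10n+s(\ell)}+B^{-18n}$ instead of $1+CB^{-10n+s(\ell)}$. Its ratio to the target factor is at most
\[
1-(C-1)B^{-10n+s(\ell)}+B^{-18n}\le 1-2B^{-10n}+B^{-18n}.
\]
Every other agent's factor is at most $1+CB^{-10n+s(t)}+B^{-18n}$, which is at most the target factor times $(1+B^{-18n})$. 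The product of these slack terms is at most $(1+B^{-18n})^n\le 1+2nB^{-18n}$. The $Z$-part of $\theta$ is at least $1$, so we must also check that the loss of order $B^{-10n}$ beats the $Z$-part. That part is at most $\prod_r(1+CB^{-20n+r})\le 1+2CB^{-19n}$.

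Putting these together, the total Nash product restricted to $R$ is at most the target times $(1-2B^{-10n}+B^{-18n})(1+2nB^{-18n})$. This is strictly less than the target divided by the $Z$-factor, and hence strictly less than the $R$-part of $\theta^{|T|}$. This contradicts the assumption that $A$ has Nash welfare at least $\theta$.

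\emph{Main obstacle.} The delicate step is the bookkeeping. We must show that the multiplicative loss of order $B^{-10n}$ from one non-signature $Y$-assignment cannot be compensated by the accumulated slack. That slack comes from two sources: the ignored $Z$-value, of order $nB^{-18n}$, and the entire $Z$-premium in $\theta$, of order $B^{-19n}$. I would carry out this comparison cleanly via $\log(1+u)$ bounds, using $B=10$ and $C=3$, so that $2B^{-10n}\gg 2nB^{-18n}+2CB^{-19n}$ for all $n\ge1$.
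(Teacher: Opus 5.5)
Your argument is essentially the paper's: bound each $v_t(A_t)$ by its $\{x_{i(t)},y_{s(t)}\}$-value up to a $B^{-18n}$ slack coming from $v_t(Z)<B^{-18n}v_t(x_{i(t)})$, use that $s$ is a permutation to compare against $\prod_r(1+CB^{-10n+r})$, and show that a single non-signature $Y$-assignment costs a factor $1-\Omega(B^{-10n})$ that the slack cannot recover. It goes through, with two small fixes. First, with $u=B^{-10n+s(\ell)}$, your intermediate bound $\frac{1+u+B^{-18n}}{1+Cu}\le 1-(C-1)u+B^{-18n}$ is false as written, because dropping the denominator $1+Cu>1$ goes the wrong way; keep it as $1-\frac{(C-1)u}{1+Cu}+B^{-18n}$, which is still at most $1-2B^{-10n}+B^{-18n}$ since $u\ge B^{-10n+1}$. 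Second, the $Z$-part of $\theta$ is at least $1$ and so works in your favor; you do not need the loss to beat it, because being strictly below the $X$- and $Y$-parts already suffices (this is how the paper concludes), although your stronger inequality is also true.
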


    \begin{proof}
        Let $y_{s(t)}$ denote the unique good from $Y$ that agent $t \in R$ receives (this exists and is unique by \Cref{clm:at_most_one_y}). We show that $s(t)=j(t)$ for all $t\in R$. From~\eqref{eq:Z_is_small}, we know that for all $t \in R$,
            
            \[
                v_t(Z) < v_t(x_{i(t)}) \cdot B^{-18n} < v_t(\{x_{i(t)}, y_{s(t)}\})\cdot B^{-18n}.
            \]

            Hence for every $t\in R$,
            \[
            v_t(A_t) \le v_t(\{x_{i(t)}, y_{s(t)}\}) + v_t(Z) < v_t(\{x_{i(t)}, y_{s(t)}\})\cdot(1 + B^{-18n}).
            \]

            Now, 
            \begin{equation}
                \begin{aligned}
                    \displaystyle \prod_{t \in R} v_t(A_t) &< \left[\prod_{t \in R} (v_t(\{x_{i(t)},y_{s(t)}\})\right] \cdot (1 + B^{-18n})^n\\ 
                    &< \left[\prod_{t\in R}(v_t(\{x_{i(t)}, y_{s(t)}\})\right] \cdot (1 + B^{-17n})\\
                    &= \left[\prod_{t \in R} \left(v_t(x_{i(t)}) (1 + \frac{v_t(y_{s(t)})}{v_t(x_{i(t)})})\right)\right] \cdot (1 + B^{-17n})\\
                    &= \left[\prod_{t \in R} v_t(x_{i(t)}) \prod_{t \in R} \left(1 + \frac{v_t(y_{s(t)})}{v_t(x_{i(t)})}\right)\right] \cdot (1 + B^{-17n}) 
                    \label{eq:prod_bound_maxZ}
                \end{aligned}
            \end{equation}
            where the second inequality follows by substituting $x = B^{-17n}$ in $(1 + x)^n < 1 + 2^n x$ for all $x \in (0, 1]$, and using $B = 10 > 2$. Now, $\frac{v_t(y_{s(t)})}{v_t(x_{i(t)})} = B^{-10n+s(t)}$ if $s(t) \neq j(t)$ and $ C \cdot B^{-10n+s(t)}$ if $s(t) = j(t)$. Hence, every good $r$ contributes $1 + CB^{-10n + r}$ to the second product if it is assigned to some agent as its signature good, and $1 + B^{-10n + r}$ otherwise. Suppose for the sake of contradiction that there exists some agent $t' \in R$ for whom $s(t') \neq j(t')$. Then, we have:
            
            \[
            \prod_{t\in R}\left(1+\frac{v_t(y_{s(t)})}{v_t(x_{i(t)})}\right)
            \;\le\;
            \left(\prod_{r=1}^{n}\bigl(1+C\,B^{-10n+r}\bigr)\right)\cdot
            \frac{1+B^{-10n+s(t')}}{1+C\,B^{-10n+s(t')}}.
            \]
            
            Substituting this back into \eqref{eq:prod_bound_maxZ} gives us
            \[
            \prod_{t\in R} v_t(A_t)
            \;<\;
            \left(\prod_{t\in R} v_t(x_{i(t)})\right)\cdot
            \left(\prod_{r=1}^{n}\bigl(1+C\,B^{-10n+r}\bigr)\right)\cdot
            \frac{1+B^{-10n+s(t')}}{1+C\,B^{-10n+s(t')}}\cdot (1+B^{-17n}).
            \]

            Now, note that $\frac{1 + B^{-10n + r}}{1 + CB^{-10n + r}} = 1 - \frac{(C-1)B^{-10n + r}}{1 + CB^{-10n + r}} < 1 - B^{-11n}$ for all $r \ge 0, n \ge 1$ (recall that $B = 10, C = 3$). Also, $(1 - B^{-11n})(1 + B^{-17n}) < 1$. Thus, we get:

            \[
            \prod_{t\in R} v_t(A_t)<
            \left(\prod_{t\in R} v_t(x_{i(t)})\right)\cdot
            \left(\prod_{r=1}^{n}\bigl(1+C\,B^{-10n+r}\bigr)\right).
            \]
            Since each \(t\in R\) receives its signature $X$-good we have $\prod_{t\in R} v_t(x_{i(t)}) = \prod_{r=1}^{n} (C\cdot B^{r})$. Therefore,
            \[
            \prod_{t\in R} v_t(A_t)
            \;<\;
            \left(\prod_{r=1}^{n} C\cdot B^{r}\right)\left(\prod_{r=1}^{n}\bigl(1+C\,B^{-10n+r}\bigr)\right),
            \]
            and multiplying by the (already fixed) dummy-good contribution \(\prod_{r=1}^{|T|-n} B^{10n+r}\) implies that the Nash product of \(A\) is strictly smaller than \(\theta^{|T|}\), a contradiction. Hence \(s(t)=j(t)\) for all \(t\in R\), i.e., each agent in \(R\) receives its signature good from \(Y\).
        \end{proof}
    
    Now, we prove that each agent in $R$ must receive at most one (and hence exactly one) good from $Z$.

    \begin{claim}
        Each agent in $R$ receives at most one good from $Z$.
    \end{claim}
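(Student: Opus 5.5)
We mirror the argument of \Cref{clm:at_most_one_y}, now one level lower, using the transfer lemma (\Cref{lem:transfer_lemma}). Suppose some agent $t\in R$ receives two distinct goods $z_{s_1},z_{s_2}\in Z$. Since $|R|=|Z|=n$, some agent $t'\in R$ receives no good from $Z$. At least one of the two goods is not the signature $Z$-good of $t$; call it $z_{s_1}$, so $s_1\neq k(t)$. By the previous claims, every agent $u\in R$ holds exactly its signature goods $x_{i(u)}$ and $y_{j(u)}$ from $X\cup Y$. Write $W_u\coloneqq v_u(\{x_{i(u)},y_{j(u)}\})$.

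First we bound the value ratio of the transferred good. Since $z_{s_1}$ is not a signature good of $t$, we have $v_t(z_{s_1})=W_t\,B^{-20n+s_1}$. For $t'$, whether or not $z_{s_1}$ is its signature good, $v_{t'}(z_{s_1})\ge W_{t'}\,B^{-20n+s_1}$. Hence
\[
\frac{v_t(z_{s_1})}{v_{t'}(z_{s_1})}\le \frac{W_t}{W_{t'}}.
\]

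Next we bound the utility ratio. Agent $t$ keeps $z_{s_2}$ after the transfer, so
\[
v_t(A_t\setminus\{z_{s_1}\})\ge W_t+v_t(z_{s_2})\ge W_t\,(1+B^{-20n}).
\]
Agent $t'$ holds exactly $\{x_{i(t')},y_{j(t')}\}$ and no $Z$-good. Dummy goods are excluded because $t'\in R$, and the previous claims pin down its $X$- and $Y$-goods. Therefore $v_{t'}(A_{t'})=W_{t'}$ exactly. Combining,
\[
\frac{v_t(A_t\setminus\{z_{s_1}\})}{v_{t'}(A_{t'})}\ge \frac{W_t}{W_{t'}}(1+B^{-20n})>\frac{W_t}{W_{t'}}\ge\frac{v_t(z_{s_1})}{v_{t'}(z_{s_1})}.
\]
All utilities are positive, so \Cref{lem:transfer_lemma} applies and contradicts the optimality of $A$.

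The step that needs the most care is justifying that $t'$ holds nothing beyond its two signature goods. This follows from three facts established earlier: agents in $R$ hold no dummy goods, each holds exactly one $X$-good (its signature), and each holds exactly one $Y$-good (its signature). So this step is bookkeeping rather than an estimate. The case where $z_{s_1}$ happens to be the signature good of $t'$ only strengthens the inequality, because it increases the denominator $v_{t'}(z_{s_1})$. Unlike in \Cref{clm:at_most_one_y}, no smallness bound on lower goods is needed here, since $Z$ is the last layer.
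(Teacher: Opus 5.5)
Your proof is correct and follows the paper's argument step for step. You pick a non-signature $Z$-good $z_{s_1}$ of $t$, bound $v_t(z_{s_1})/v_{t'}(z_{s_1}) \le W_t/W_{t'}$ from the valuation definitions, use the earlier claims to get $v_{t'}(A_{t'}) = W_{t'}$ exactly, and then invoke the transfer lemma. Your explicit bound $v_t(z_{s_2}) \ge W_t B^{-20n}$ is only a quantitative version of the paper's strict inequality $v_t(A_t\setminus\{z_{s_1}\}) > W_t$.
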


    \begin{proof}
        Suppose not. Since $|R| = |Z| = n$, there must exist some agent $t \in R$ who receives at least two goods from $Z$, say $z_{s_1}$ and $z_{s_2}$, and another agent $t' \in R$ who receives no good from $Z$. Since every agent has exactly one signature $Z$-good, at least one of $z_{s_1}, z_{s_2}$ must not be a signature good for agent $t$. Without loss of generality, let it be $z_{s_1}$. Thus, $v_t(z_{s_1}) = v_t(\{x_{i(t)}, y_{j(t)}\})\cdot B^{-20n+s_1}$. Also, $v_{t'}(z_{s_1}) \geq v_{t'}(\{x_{i(t')}, y_{j(t')}\}\}) \cdot B^{-20n+s_1}$. Thus,

        \[
        \frac{v_t(z_{s_1})}{v_{t'}(z_{s_1})} \leq \frac{v_t(\{x_{i(t)}, y_{j(t)}\})}{v_{t'}(\{x_{i(t')}, y_{j(t')}\})}
        \]

        Since agent $t$ contains two $Z$-goods,

        \[
        v_t(A_t \setminus \{z_{s_1}\}) \geq v_t(\{x_{i(t)}, y_{j(t)}\}) + v_t(z_{s_2}) > v_t(\{x_{i(t)}, y_{j(t)}\})
        \]

        On the other hand, agent $t'$ has no $Z$-goods. Thus,

        \[v_{t'}(A_{t'}) = v_{t'}(\{x_{i(t')}, y_{j(t')}\})\]

        Clearly,

        \[
        \frac{v_t(A_t \setminus \{z_{s_1}\})}{v_{t'}(A_{t'})} > \frac{v_t(\{x_{i(t)}, y_{j(t)}\})}{v_{t'}(\{x_{i(t')}, y_{j(t')}\})} \geq \frac{v_t(z_{s_1})}{v_{t'}(z_{s_1})}
        \]
        
        Hence, the condition of the transfer lemma(~\Cref{lem:transfer_lemma}) is satisfied, contradicting the optimality of $A$.
    \end{proof}

    Finally, we argue that each agent in $R$ must receive its signature good from $Z$.
    \begin{claim}
        Each agent in $R$ receives its signature good from $Z$.
    \end{claim}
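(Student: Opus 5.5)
We follow the same template as the claim for $Y$: bound the total Nash product of $R$ in terms of how the $Z$-goods are distributed, and show that any non-signature assignment falls strictly below $\theta^{|T|}$. By the previous claims, each $t\in R$ holds exactly $x_{i(t)}$ and $y_{j(t)}$. Since $|Z|=|R|=n$ and each agent receives at most one $Z$-good, each agent receives exactly one, say $z_{s(t)}$, and $s$ is a permutation of $[n]$. Hence $A_t=\{x_{i(t)},y_{j(t)},z_{s(t)}\}$, and there are no leftover goods to absorb into error terms.

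By construction, $v_t(z_{s(t)})=v_t(\{x_{i(t)},y_{j(t)}\})\cdot B^{-20n+s(t)}$, with an extra factor $C$ when $s(t)=k(t)$. Therefore
\[
v_t(A_t)=v_t(\{x_{i(t)},y_{j(t)}\})\bigl(1+c_t B^{-20n+s(t)}\bigr),\qquad c_t\in\{1,C\}.
\]
The first factor is $C B^{i(t)}\bigl(1+C B^{-10n+j(t)}\bigr)$. Because $i$ and $j$ are bijections onto $[n]$, the product over $t\in R$ of the first factors is exactly
\[
\prod_{r=1}^n C B^r \cdot \prod_{r=1}^n\bigl(1+C B^{-10n+r}\bigr).
\]
The remaining product $\prod_{t\in R}\bigl(1+c_tB^{-20n+s(t)}\bigr)$ reindexes over $r\in[n]$ as $\prod_{r=1}^n\bigl(1+c'_rB^{-20n+r}\bigr)$, where $c'_r=C$ exactly when $z_r$ goes to the agent for which it is the signature good. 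This is at most $\prod_{r=1}^n\bigl(1+CB^{-20n+r}\bigr)$, with strict inequality if some $t'$ has $s(t')\neq k(t')$, since then the factor $1+B^{-20n+s(t')}$ is strictly smaller than $1+CB^{-20n+s(t')}$.

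Multiplying by the fixed dummy contribution $\prod_{r=1}^{|T|-n}B^{10n+r}$, a non-signature assignment gives a Nash product strictly less than $\theta^{|T|}$. This contradicts the assumption that $A$ has Nash welfare at least $\theta$. Hence $s(t)=k(t)$ for every $t\in R$.

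The only delicate step is establishing that the bundles consist of exactly these three goods. That requires the previous claim (at most one $Z$-good per agent) together with the counting $|Z|=|R|$; once it holds the comparison is an exact equality, so unlike the $Y$ case no $(1+B^{-17n})$ slack needs to be beaten.

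To conclude the lemma, observe that the $n$ triples $\{t : t\in R\}$ together cover each element of $X\cup Y\cup Z$ exactly once, since $i,j,k$ are permutations. Thus $R$ is a perfect 3D matching.
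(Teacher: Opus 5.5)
Your proposal is correct and matches the paper's proof essentially step for step. Once each bundle is pinned to $\{x_{i(t)},y_{j(t)},z_{s(t)}\}$ (via the at-most-one-$Z$ claim, the count $|Z|=|R|$, and the fact that dummy holders receive nothing else), the product over $R$ is the $X$/$Y$ part of $\theta^{|T|}$ times $\prod_{r}(1+c'_rB^{-20n+r})$, and any non-signature $Z$-assignment falls strictly short of $\prod_r(1+CB^{-20n+r})$; the paper phrases this as multiplying by a single factor $\frac{1+B^{-20n+s(t')}}{1+CB^{-20n+s(t')}}<1$ rather than comparing termwise, but it is the same argument.
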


    \begin{proof}
        Let $z_{s(t)}$ denote the unique good from $Z$ that agent $t \in R$ receives. We show that $s(t) = k(t)$ for all $t \in R$. Since each agent $t \in R$ now holds $\{x_{i(t)}, y_{j(t)}, z_{s(t)}\}$ (plus no other goods), we have
        \[
            v_t(A_t) = v_t(\{x_{i(t)}, y_{j(t)}\}) \cdot \left(1 + \frac{v_t(z_{s(t)})}{v_t(\{x_{i(t)}, y_{j(t)}\})}\right).
        \]
        Now, $\frac{v_t(z_{s(t)})}{v_t(\{x_{i(t)}, y_{j(t)}\})} = C \cdot B^{-20n + s(t)}$ if $s(t) = k(t)$, and $B^{-20n + s(t)}$ otherwise. Since $s$ is a permutation of $[n]$, if some agent $t' \in R$ has $s(t') \neq k(t')$, then
        \[
            \prod_{t \in R} \left(1 + \frac{v_t(z_{s(t)})}{v_t(\{x_{i(t)}, y_{j(t)}\})}\right)
            \le
            \left(\prod_{r=1}^{n} (1 + C \cdot B^{-20n+r})\right) \cdot \frac{1 + B^{-20n+s(t')}}{1 + C \cdot B^{-20n+s(t')}}.
        \]
        Clearly, $\frac{1 + B^{-20n+r}}{1 + CB^{-20n+r}} < 1$ and thus,
        \[
            \prod_{t \in R} v_t(A_t) < \left(\prod_{t \in R} v_t(\{x_{i(t)}, y_{j(t)}\})\right) \cdot \left(\prod_{r=1}^{n} (1 + C \cdot B^{-20n+r})\right).
        \]
        Since each $t \in R$ receives its signature goods from $X$ and $Y$, we have $\prod_{t \in R} v_t(\{x_{i(t)}, y_{j(t)}\}) = \prod_{r=1}^{n} (C \cdot B^r) \cdot \prod_{r=1}^{n} (1 + C \cdot B^{-10n+r})$. Multiplying by the dummy-good contribution shows the Nash product of $A$ is strictly less than $\theta^{|T|}$, a contradiction. Hence $s(t) = k(t)$ for all $t \in R$.
    \end{proof}
    Since each agent in $R$ receives its signature goods from $X$, $Y$, and $Z$, and the signature goods of the $n$ agents in $R$ cover $X \cup Y \cup Z$ exactly once, $R$ corresponds to a perfect 3D matching.
    \end{proof}
    
\subsection{Proof of Theorem~\ref{thm:APX-hard_Nash_General_Lexicographic}} \label{sec:apxhard_appendix}

\NashLexAPXHard*
    We give a reduction from 4D-matching to Nash Welfare under lexicographic preferences.
    \paragraph{4D-Matching.} The 4D-matching problem is defined as follows: Given four disjoint sets \(W, X, Y, Z\), each of size \(n\), and a set of quadruples \(Q \subseteq W \times X \times Y \times Z\). Any subset $Q' \subseteq Q$ is called a 4D-matching if no two quadruples in \(Q'\) share an element in any of the four dimensions. The goal is to find a 4D-matching of maximum size.

    We will utilize the following result from \citet*{10.1007/978-3-540-45198-3_8}, who prove hardness of distinguishing between the instances where at least $1-\epsilon$ fraction of the vertices can be matched and the instances where at most \(\frac{53}{54} + \epsilon\) fraction of the vertices can be matched, for any constant \(\epsilon > 0\). They give a reduction from gap-MAX-3-LIN-2[$\nicefrac{1}{2} + \epsilon, 1 - \epsilon$] to gap-4D-matching[$n(\frac{53}{54} + \epsilon), n(1 - \epsilon)$]. In the instances of 4D-matching that they use in their reduction, the number of quadruples is exactly $3$ times the number of elements in each set.

    \begin{theorem}[\citealp*{10.1007/978-3-540-45198-3_8}]
        For any constant $\epsilon > 0$, it is NP-hard to distinguish between the case when the size of the maximum 4D-matching is at least \(n(1 - \epsilon)\) and the case when it is at most \(n(\frac{53}{54} + \epsilon)\), even when the number of quadruples is exactly $3n$.
        \label{4d-matching-hardness}
    \end{theorem}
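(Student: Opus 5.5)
Since this statement is imported from \citet*{10.1007/978-3-540-45198-3_8}, the plan is to reconstruct their gap-preserving gadget reduction, not to argue from anything in the present paper. The starting point is Håstad's theorem: for every $\epsilon>0$ it is NP-hard to distinguish MAX-3-LIN-2 instances in which a $(1-\epsilon)$ fraction of the equations are satisfiable from those in which at most a $(\nicefrac{1}{2}+\epsilon)$ fraction are. First I would pass to a bounded-occurrence version. In the Chlebík--Chlebíková style, each variable is replaced by copies linked through an amplifier or consistency structure, so that every variable occurs exactly a fixed constant number of times (say three) and the completeness and soundness fractions change only by $O(\epsilon)$. Fixing the occurrence pattern exactly is what later gives exact element and quadruple counts.

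Next I would build the 4D-matching instance from gadgets. Each variable occurrence gets a small ``consistency'' gadget: a cycle of quadruples that admits exactly two perfect matchings, corresponding to the two truth values. For each equation $x\oplus y\oplus z=b$, I would add a constant number of ``equation'' quadruples. Each such quadruple uses one free element from each of the three occurrence gadgets together with a private equation element, and there is one quadruple for every satisfying assignment of the three literals (four in total). Elements are coloured into the four classes $W,X,Y,Z$ so that every quadruple is in $W\times X\times Y\times Z$. I would balance the class sizes by adding dummy elements, and choose the gadget sizes so that every element lies in a prescribed number of quadruples. The bookkeeping should then give exactly $|Q|=3n$.

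For completeness, given an assignment satisfying a $(1-\epsilon)$ fraction of the equations, I would take the corresponding matching in every variable gadget and one equation quadruple for each satisfied equation. This leaves only $O(\epsilon n)$ elements uncovered. For soundness, I would start from any matching, normalise every variable gadget to one of its two canonical matchings (losing nothing in size), and read off an assignment. An unsatisfied equation then forces at least one uncovered element in its gadget. Hence a matching of size above $(\nicefrac{53}{54}+\epsilon)n$ would yield an assignment satisfying more than a $(\nicefrac{1}{2}+O(\epsilon))$ fraction of the equations.

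I expect this soundness normalisation to be the main obstacle: showing that every locally inconsistent matching inside a gadget can be converted into a consistent one without shrinking it. The same accounting must also yield the exact constant $\nicefrac{53}{54}$, which comes from the ratio of uncovered elements per unsatisfied equation to the total element count per equation. I would first try to establish the local exchange argument gadget by gadget. If that fails, I would fall back on simply citing the theorem as stated, which is how it is used later in the paper.
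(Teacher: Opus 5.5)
The paper gives no proof of this statement. It imports it, together with the fact that the hard instances have exactly $3n$ quadruples, from \citet*{10.1007/978-3-540-45198-3_8}, so your fallback of citing it is exactly what the paper does.

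Your reconstruction would not go through, and the first failure is the bounded-occurrence step. You cannot pass to Max-E3-LIN-2 instances in which every variable occurs a fixed constant number $B$ of times (three or any other constant) and keep the gap at $(1-O(\epsilon),\tfrac12+O(\epsilon))$. By H{\aa}stad's theorem on bounded-occurrence constraint satisfaction, such instances admit a polynomial-time algorithm with ratio $\tfrac12+\delta(B)$ for some $\delta(B)>0$. That algorithm separates the two cases once $\epsilon\ll\delta(B)$. What Berman--Karpinski-type amplifiers actually produce is a hybrid system in which each three-variable equation comes with $\Theta(1)$ two-variable consistency equations between copies. The gap then survives only as an absolute number of violated equations. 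Those consistency equations must themselves be realized by quadruples and paid for in the element count, but your gadget list realizes only the three-variable equations. Worse, with one cycle per \emph{occurrence}, as you wrote it, nothing ties different occurrences of a variable together. Each equation's three occurrence gadgets can then be set independently to a satisfying pattern, so every instance has a perfect matching and soundness fails outright.

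The normalization step fails for a related reason. Suppose you instead use one cycle per variable and let the occurrence bound grow with $1/\epsilon$. A matching can then follow the ``true'' pattern on one arc of a cycle and the ``false'' pattern on another. This leaves only $O(1)$ internal elements uncovered per switch while freeing tips for many equation quadruples. So ``normalize each gadget without losing size'' is false, and ruling out profitable switches needs an expander-type consistency structure whose cost enters the constant. Your charging also needs care: ``at least one uncovered element per unsatisfied equation'' only gives $4(n-s)\ge u$ for a matching of size $s$ with $u$ unsatisfied equations, unless the charged element always lies in one fixed class.

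Finally, the constants are the whole content of the statement, and the outline does not produce them. Soundness leaves at least $(\tfrac12-\epsilon)m$ of the $m$ three-variable equations unsatisfied. A construction in which each such equation costs exactly one quadruple therefore yields $\frac{53}{54}$ only with exactly $27$ elements per class per three-variable equation. It yields $|Q|=3n$ only with exactly $81$ quadruples per three-variable equation. Both counts must include the consistency gadgets and any padding, and the padding must be coverable in the completeness case. None of this bookkeeping is done, so the proposal does not establish the theorem; the citation is doing all the work.
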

    \paragraph{The reduction.} Given an instance of 4D-matching with sets \(W, X, Y, Z\) each of size \(n\) and a set of quadruples \(Q\) of size $|Q| = m = 3n$, we construct an instance as follows. Let $W = \{w_1, w_2, \ldots, w_n\}$, $X = \{x_1, x_2, \ldots, x_n\}$, $Y = \{y_1, y_2, \ldots, y_n\}$, $Z = \{z_1, z_2, \ldots, z_n\}$ and $Q = \{q_1, \ldots q_m\}$. We create an agent $a_t$ for each quadruple $q_t \in Q$. Each element of $W \cup X \cup Y \cup Z$ corresponds to a good. Additionally, we create a set of $m - n$ dummy goods $D = \{d_1, d_2, \ldots, d_{m-n}\}$, and a set of $m$ supporting goods $S = \{s_1, s_2, \ldots, s_m\}$. Thus, the total number of goods is $4n + (m - n) + m = 4n + (3n - n) + 3n = 9n$, and the total number of agents is $m = 3n$. The dummy goods are meant to be the most valuable goods for each agent by a huge margin. For an agent $a_t$ corresponding to the quadruple $q_t = (w_i, x_j, y_k, z_\ell)$, we call the goods $w_i, x_j, y_k, z_\ell$ its \emph{signature goods}, and the good $s_t$ its \emph{supporting good}. We will call the remaining goods as \emph{insignificant goods} for agent $a_t$. Now, fix a sufficiently small constant $0 < q \ll 1$. The preference order and the values of the goods for an agent $a_t$ corresponding to the quadruple $q_t = (w_i, x_j, y_k, z_\ell)$ are as follows:

    $$ \underbrace{d_{m-n}}_{\frac{2^{19n + m}}{q}} \succ \ldots \underbrace{d_1}_{\frac{2^{20n + 1}}{q}} \succ \underbrace{w_i}_{16} \succ \underbrace{x_j}_{8} \succ \underbrace{y_k}_{4} \succ \underbrace{z_\ell}_{2} \succ \underbrace{s_t}_{1} \succ \underbrace{\text{insignificant goods}}_{< q}$$

    We choose an arbitrary order and assign values $q/2, q/4, q/8, \ldots$ to the insignificant goods in that order, so that the instance has lexicographic (in fact, also doubling) valuations, and the total value of insignificant goods is less than $q$. Note that this instance is not ordered, since different agents have agent-specific supporting and signature goods.

    If there exists a 4D-matching of size at least $n(1 - \epsilon)$ in the original instance, let us allocate each agent its supporting good. Then, for each quadruple in the matching, we additionally allocate the corresponding agent its signature goods. Some of the remaining agents are then allocated dummy goods (each dummy good to a different agent). Thus, the Nash welfare of this allocation is at least:

    \[\theta = \left( \left(\prod_{r=1}^{m-n} \frac{2^{20n + r}}{q}\right) \cdot (16+8+4+2+1)^{(1-\epsilon)n} \cdot 1\right)^{\nicefrac{1}{3n}}.\]

    Let $H$ denote $(\prod_{r=1}^{m-n} \frac{2^{20n + r}}{q})^{\nicefrac{1}{3n}}$. Thus, $\theta = H \cdot 31^{\frac{1 - \epsilon}{3}}$.

    Let $c = \frac{53}{54}$ be the constant from the hardness of 4D-matching~\citep*{10.1007/978-3-540-45198-3_8}. Now, suppose that in the original instance, any 4D-matching has size at most $(c + \epsilon) n$. The following lemma then proves~\Cref{thm:APX-hard_Nash_General_Lexicographic}. 

\begin{restatable}{lemma}{APXReduction}
    Assume $\epsilon, q$ to be sufficiently small positive constants. If the maximum size of a 4D-matching in the original instance is at most $(c + \epsilon)n$, then the maximum Nash welfare in the constructed instance is strictly smaller than $0.9996 \times \theta$.
    \label{lem:4d-matching-to-nash}
\end{restatable}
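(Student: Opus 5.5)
Take a Nash optimal allocation $A$ of the constructed instance; we show $\W^\texttt{\textup{Nash}}(A)<0.9996\,\theta$. The argument first forces the structure of $A$ using the domination and transfer lemmas, and then reduces its value to a counting problem over a $4$D matching.

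\textbf{Dummy goods.} As in the ordered reduction, the dummy goods are so valuable that they go to $m-n=2n$ distinct agents, and each of these agents receives nothing else. The proof is a repetition of the first claim in the proof of the ordered reduction: if a dummy holder also held another good, \Cref{lem:transfer_lemma} would apply against an agent without dummies. The values of non-dummy goods differ across agents by at most a factor $16/q$, while the gap between a dummy and a non-dummy bundle is at least $2^{20n}$. Consequently the dummy holders contribute exactly $H^{3n}$ to the Nash product. It remains to bound the product $P$ of the utilities of the set $R$ of $n$ non-dummy agents, who share $W\cup X\cup Y\cup Z\cup S$ together with some insignificant goods.

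\textbf{Structural bound on $P$.} Every good of $W\cup X\cup Y\cup Z\cup S$ is valued by every agent at either its signature/support value ($16,8,4,2$ or $1$) or below $q$. So up to a factor $(1+O(q))^n$, an agent's utility is the sum of the levels of the goods it gets \emph{as its own} signature or supporting goods, plus $O(q)$. Agents whose utility is only $O(q)$ make the product tiny, so every agent in $R$ gets at least one of its own goods. Let $k$ be the number of agents in $R$ holding all four of their signature goods. These agents form a $4$D matching, so $k\le(c+\epsilon)n$. Every other agent in $R$ misses at least one of its signature levels. Its utility is therefore at most $31-2=29$ if it misses only $z$, and smaller if it misses more, unless it compensates with goods of other agents, which are worth less than $q$.

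The obstacle is that a non-full agent might be a single missing good away from a full one, which would make the per-agent loss as small as $29/31$. We need a global counting bound instead, obtained by charging each element of $W\cup X\cup Y\cup Z$ to at most one agent. Each signature good is counted for at most one agent, so the total signature level summed over $R$ is at most $30n$. The full agents use $30k$ of this, leaving at most $30(n-k)$ for the other $n-k$ agents.

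I would then strengthen this by comparing with the structure of the $4$D matching instance. Let $N_j$ be the number of agents holding exactly $j$ of their own signature goods. The full agents form a matching. If many agents held three of their signatures, then one could argue (via the bounded degree, $|Q|=3n$) that a large $4$D matching exists, and this is the step to attack first. A simpler route avoids this entirely: each non-full agent misses at least one of its four signature goods, and summing over agents, the missing levels total at least $2(n-k)$.

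\textbf{The numeric bound.} Given these constraints, maximize $\prod_{t\in R}u_t$ by AM–GM. The full agents have utility $31$. The remaining $n-k\ge(1/54-\epsilon)n$ agents have total utility at most $31(n-k)-2(n-k)$, so each contributes at most $29$ on average. Hence
\[
P\le 31^{k}\,29^{\,n-k}(1+O(q))^n .
\]
Taking the $3n$-th root gives
\[
\frac{\W^\texttt{\textup{Nash}}(A)}{\theta}\le\Bigl(\tfrac{29}{31}\Bigr)^{(1/54-2\epsilon)/3}(1+O(q)).
\]
Numerically, $\ln(29/31)/162\approx-4.1\times10^{-4}$, so the ratio is about $0.99959<0.9996$ for small $\epsilon$ and $q$. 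The delicate points are justifying the AM–GM step with the supporting goods counted correctly, each $s_t$ being worth $1$ only to $a_t$, and absorbing all $q$-terms into the slack.
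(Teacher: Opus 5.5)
Your proof is correct and is essentially the paper's argument: isolate the dummy holders via the transfer lemma, note that the agents of $R$ holding all four of their signature goods form a $4$D matching (so there are at most $(c+\epsilon)n$ of them, each with utility below $31+q$) while every other agent of $R$ has utility below $29+q$, and let $\epsilon,q\to 0$ to get the ratio $(29/31)^{1/162}<0.99959$. In particular, the ``obstacle'' you raise is not one, since a per-agent loss of $29/31$ is exactly what is needed, so the global-counting and AM--GM detour (and the unused claim that every agent of $R$ gets one of its own goods) can be dropped.

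Two small slips do not affect the conclusion. First, an insignificant good can be worth as little as about $q/2^{7n}$, so cross-agent value ratios of non-dummy goods are bounded only by about $16\cdot 2^{7n}/q$ rather than $16/q$; this is still far below the dummy gap, exactly as the paper uses it. Second, your final bound should read $(29/31)^{(1-c-\epsilon)/3}\,31^{\epsilon/3}(1+O(q))$, because the $\epsilon n$ unmatched agents in $\theta$ have utility $1$ rather than $29$; this has the same limit.
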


\begin{proof}
    Let $A$ be a Nash optimal allocation. We first claim that in $A$, each agent must receive at most one dummy good. Moreover, the agents receiving dummy goods must not receive any other goods. For this, note that if an agent receives two dummy goods, transferring the less valuable dummy good to an agent who does not receive any dummy good would strictly increase the Nash welfare, contradicting the optimality of $A$. Now, suppose an agent $t$ who receives a dummy good $d$ also receives another good $g$. Let $t'$ be an agent who does not receive any dummy good. Note that even after removing $g$ from $A_t$, agent $t$ has a utility of at least $\alpha_t \geq \frac{2^{20n + 1}}{q}$. On the other hand, agent $t'$ has a utility of at most $\alpha_{t'} < 16 + 8 + 4 + 2 + 1 + q < 32$. Also, $v_t(g) \leq 16$ since $g$ is not a dummy good, and $v_{t'}(g) > \frac{q}{2^{7n}}$ since there are fewer than $7n$ insignificant goods. Thus, we can apply the transfer lemma (\Cref{lem:transfer_lemma}) since:

    $$ \alpha_t \geq \frac{2^{20n + 1}}{q} > \frac{2^{7n} \cdot 32 \cdot 16}{q} > \alpha_{t'} \cdot \frac{v_t(g)}{v_{t'}(g)}$$

    Hence, no agent receiving a dummy good can receive any other good. The contribution of the agents who receive a dummy good to the Nash welfare is therefore exactly $H$. Now, consider the set (say $R$) of agents who do not receive any dummy good. Let $R_1$ be the set of agents who receive all their signature goods and $R_2$ be $R \setminus R_1$. Note that $|R| = n$. Clearly, $|R_1| \leq (c + \epsilon)n$. For each agent in $R_1$, its utility is less than $16 + 8 + 4 + 2 + 1 + q = 31 + q$. For each agent in $R_2$, its utility is less than $16 + 8 + 4 + 1 + q = 29 + q$, since it does not receive at least one of its signature goods. Therefore, the Nash welfare of $A$ is at most:

    \[
    \theta' = H \cdot (31 + q)^{(c + \epsilon)/3} \cdot (29 + q)^{(1 - c - \epsilon)/3}
    \]

    Note that

    \[\lim_{q, \epsilon \to 0^+} \dfrac{\theta'}{\theta} = \left(\dfrac{29}{31}\right)^{(1 - c) / 3} = \left(\dfrac{29}{31}\right)^{1 / 162} < 0.99959. \]
\end{proof}

\end{document}